\documentclass[a4paper,british,cleveref,autoref,thm-restate,final]{lipics-v2021}
\pdfoutput=1

\usepackage{xparse}
\usepackage{etoolbox}
\usepackage{xspace}
\usepackage{environ}

\usepackage[utf8]{inputenc}

\usepackage{xcolor}

\usepackage{babel}

\usepackage{amsmath}
\usepackage{amsthm}
\usepackage{amsfonts}
\usepackage{amssymb}
\usepackage{stmaryrd}
\SetSymbolFont{stmry}{bold}{U}{stmry}{m}{n}
\usepackage{bm}
\usepackage{mathtools}
\usepackage{centernot}
\usepackage{scalerel}
\usepackage{lscape}

\allowdisplaybreaks

\makeatletter
\newcommand{\raisemath}[1]{\mathpalette{\raisem@th{#1}}}
\newcommand{\raisem@th}[3]{\raisebox{#1}{$#2#3$}}
\makeatother

\usepackage{array}
\usepackage{booktabs}

\usepackage{float}
\usepackage{placeins}

\usepackage[inline]{enumitem}

\usepackage{graphicx}
\usepackage{tikz}
\usetikzlibrary{arrows,positioning,arrows.meta,decorations.pathmorphing}
\tikzset{
  squiggly/.style = {
    line join=round,
    decorate, decoration={
      zigzag,
      segment length=4,
      amplitude=.9,post=lineto,
      post length=2pt}
  }
}

\makeatletter
\newcommand{\customlabel}[4][0]{%
	\protected@write\@auxout{}{\string\newlabel{#3}{{#4}{\thepage}{#4}{#3}{}}}%
	\protected@write\@auxout{}{\string\newlabel{#3@cref}{{[#2][#1][#1]#4}{\thepage}}}%
}

\newcommand{\crefv}[1]{%
	\begingroup\@cref@compressfalse\@cref@sortfalse\cref{#1}\endgroup%
}
\newcommand{\Crefv}[1]{%
	\begingroup\@cref@compressfalse\@cref@sortfalse\Cref{#1}\endgroup%
}
\newcommand{\crefabbrev}[1]{%
	\begingroup\@cref@abbrevtrue\cref{#1}\endgroup%
}
\makeatother

\AtEndPreamble{\hypersetup{
	final,
}}

\usepackage{breakurl}

\usepackage{proof-dashed}

\newcounter{rule}
\newcommand{\rname}[1]{\ensuremath{\textsc{#1}}}
\newcommand{\rlabel}[2]{{%
	\refstepcounter{rule}%
	\customlabel[\therule]{infrule}{#2}{#1}%
	\hypertarget{#2}{#1}}}
\crefname{infrule}{rule}{rules}
\Crefname{infrule}{Rule}{Rules}

\newcommand{\rtag}[1]{\text{\footnotesize[#1]}}

\def\rulesvsep{7pt}

\usepackage{listings}

\bibliographystyle{plainurl}%

\usepackage[
 	disable,     %
	obeyFinal,   %
	textsize=tiny,
	colorinlistoftodos,
	prependcaption
]{todonotes}

\newcommand{\chorlam}{\text{\upshape Chor$\lambda$}\xspace}

\let\eoe\qed

\DeclareMathOperator{\fv}{fv}
\DeclareMathOperator{\roles}{roles}
\DeclareMathOperator{\type}{type}
\DeclareMathOperator{\sroles}{sroles}
\DeclareMathOperator{\dist}{distinct}

\newcommand{\case}[5]{\ccase\;#1\;\cof\;\cleft{#2}{#3}\csep\cright{#4}{#5}}
\newcommand{\DN}{\mathbb{D}}
\newcommand{\DP}{\mathbb{D}}
\newcommand{\epp}[3][]{\ensuremath{\left\llbracket{#2}\right\rrbracket^{#1}_{#3}}}
\renewcommand{\merge}{\sqcup}

\newcommand{\funf}[1]{\ensuremath{\mathsf{#1}}}
\newcommand{\typef}[1]{\ensuremath{\mathsf{#1}}}
\newcommand{\unitt}{\typef{()}}
\newcommand{\pairt}[2]{\ensuremath{{#1}\times{#2}}}

\newcommand{\fun}[2]{\lambda {\var{#1}:#2}.}
\let\var\mathit

\newcommand{\keyword}[1]{\boldsymbol{\mathsf{#1}}}
\newcommand{\comsymbol}{\keyword{com}}
\newcommand{\com}[2]{\comsymbol_{#1, #2}}
\newcommand{\select}[2]{\keyword{select}_{#1, #2}}
\newcommand{\send}[1]{\keyword{send}_{#1}}
\newcommand{\recv}[1]{\keyword{recv}_{#1}}

\newcommand{\inl} {\keyword{inl}}
\newcommand{\inr} {\keyword{inr}}
\newcommand{\Inl} {\keyword{Inl}}
\newcommand{\Inr} {\keyword{Inr}}
\newcommand{\pair}{\keyword{pair}}
\newcommand{\Pair}{\keyword{Pair}}
\newcommand{\fst} {\keyword{fst}}
\newcommand{\snd} {\keyword{snd}}

\newcommand{\cif}  {\keyword{if}}
\newcommand{\cthen}{\keyword{then}}
\newcommand{\celse}{\keyword{else}}
\newcommand{\ccase}{\keyword{case}}
\newcommand{\cof}  {\keyword{of}}
\newcommand{\csep}{{\boldsymbol{;}\;}}
\newcommand{\cleft}[1]{\Inl \; #1 \Rightarrow}
\newcommand{\cright}[1]{\Inr \; #1 \Rightarrow}

\newcommand{\unit}{\keyword{()}}

\newcommand{\litAt}[2]{#1@{#2}}
\newcommand{\typeAt}[2]{#1@{#2}}
\newcommand{\typefAt}[2]{\typeAt{\typef{#1}}{#2}}
\newcommand{\lab}[1]{\typef{#1}}
\newcommand{\role}[1]{{#1}}
\newcommand{\rolef}[1]{\mathsf{#1}}

\newcommand{\botm}{\keyword{\bot}}
\newcommand{\botmt}{\typef{\bot}}

\newcommand{\Bool}{\typef{Bool}}
\newcommand{\Int}{\typef{Int}}
\newcommand{\String}{\typef{String}}
\newcommand{\List}[1]{{[#1]}}
\newcommand{\UnitAt}[1]{\typeAt{\unitt}{#1}}
\newcommand{\BoolAt}[1]{\typefAt{\Bool}{#1}}
\newcommand{\IntAt}[1]{\typefAt{\Int}{#1}}
\newcommand{\StringAt}[1]{\typefAt{\String}{#1}}
\newcommand{\ListAt}[2]{\typeAt{\List{#1}}{#2}}
\title{Functional Choreographic Programming}

\author{Luís Cruz-Filipe}{Department of Mathematics and Computer Science, University of Southern Denmark, Denmark}{lcfilipe@gmail.com}{https://orcid.org/0000-0002-7866-7484}{}

\author{Eva Graversen}{Department of Mathematics and Computer Science, University of Southern Denmark, Denmark}{efgraversen@imada.sdu.dk}{0000-0002-9430-4907}{}

\author{Lovro Lugović}{Department of Mathematics and Computer Science, University of Southern Denmark, Denmark}{lugovic@imada.sdu.dk}{https://orcid.org/0000-0001-9684-9567}{}

\author{Fabrizio Montesi}{Department of Mathematics and Computer Science, University of Southern Denmark, Denmark}{fmontesi@imada.sdu.dk}{https://orcid.org/0000-0003-4666-901X}{}

\author{Marco Peressotti}{Department of Mathematics and Computer Science, University of Southern Denmark, Denmark}{peressotti@imada.sdu.dk}{https://orcid.org/0000-0002-0243-0480}{}

\authorrunning{L.~Cruz-Filipe et al.}

\Copyright{Luís Cruz-Filipe, Eva Graversen, Lovro Lugović, Fabrizio Montesi, and Marco Peressotti}

\begin{CCSXML}
<ccs2012>
   <concept>
       <concept_id>10003752.10003753.10003754.10003733</concept_id>
       <concept_desc>Theory of computation~Lambda calculus</concept_desc>
       <concept_significance>500</concept_significance>
       </concept>
   <concept>
       <concept_id>10003752.10003753.10003761.10003763</concept_id>
       <concept_desc>Theory of computation~Distributed computing models</concept_desc>
       <concept_significance>500</concept_significance>
       </concept>
   <concept>
       <concept_id>10010147.10010919.10010177</concept_id>
       <concept_desc>Computing methodologies~Distributed programming languages</concept_desc>
       <concept_significance>500</concept_significance>
       </concept>
 </ccs2012>
\end{CCSXML}

\ccsdesc[500]{Theory of computation~Lambda calculus}
\ccsdesc[500]{Theory of computation~Distributed computing models}
\ccsdesc[500]{Computing methodologies~Distributed programming languages}

\keywords{Choreographies, Concurrency, $\lambda$-calculus, Type Systems}

\hideLIPIcs
\nolinenumbers

\EventEditors{}
\EventNoEds{0}
\EventLongTitle{}
\EventShortTitle{}
\EventAcronym{}
\EventYear{}
\EventDate{}
\EventLocation{Online}
\EventLogo{}
\SeriesVolume{}
\ArticleNo{}

\begin{document}

\maketitle

\begin{abstract}
Choreographic programming is an emerging programming paradigm for concurrent and distributed systems, whereby developers write the communications that should be enacted and then a distributed implementation is automatically obtained by means of a compiler.
Theories of choreographic programming typically come with strong theoretical guarantees about the compilation process, most notably: the generated implementations operationally correspond to their source choreographies and are deadlock-free.

Currently, the most advanced incarnation of the paradigm is Choral, an object-oriented choreographic programming language that targets Java.
Choral deviated significantly from known theories of choreographies, and in particular introduced the possibility of expressing higher-order choreographies (choreographies parameterised over choreographies) that are fully distributed.
As a consequence, it is unclear whether the usual guarantees of choreographic programming can still hold in the more general setting of higher-order choreographies.

In this article, we introduce \chorlam, the first functional choreographic programming language: it introduces a new formulation of the standard communication primitive found in choreographies as a function, and it is based upon the $\lambda$-calculus.
\chorlam is the first theory that explains the core ideas of higher-order choreographic programming (as in Choral). Interestingly, bridging the gap between practice and theory requires developing a new evaluation strategy and typing discipline for $\lambda$ terms that accounts for the distributed nature of computation in choreographies.
We illustrate the expressivity of \chorlam with a series of examples, which include also reconstructions of the key examples found in the original presentation of Choral.
Our theory supports all the expected properties of choreographic programming.
By offering the first interpretation of choreographies as terms in $\lambda$-calculus, our work also serves to bridge the gap between the communities of functional and choreographic programming.
\end{abstract}

\section{Introduction}
\label{sec:introduction}

\subparagraph*{Choreographic Programming}
Choreographies are coordination plans for concurrent and distributed systems, which prescribe the communications that system participants should enact in order to interact correctly with each other. They are widely used in industry, especially for documentation \cite{msc,bpmn,wscdl}.
Essentially, choreographies are structured compositions of communications. A communication is expressed in one or another variation of the communication term from security protocol notation,
\[
\rolef{Alice} \mathbin{\texttt{->}} \rolef{Bob}\colon M
,
\]
which reads ``$\rolef{Alice}$ communicates the message $M$ to $\rolef{Bob}$'' \cite{NS78}.

Implementing choreographies is notoriously hard, because of the usual issues of concurrent programming. Notably, it requires predicting how processes will interact at runtime, for which programmers do not receive adequate help from mainstream programming technology~\cite{LLLG16,LPSZ08,O18}.
This challenge spawned a prolific area of theoretical research on how choreographies can be related to abstract models of communicating processes, including formal translations of choreographies into process calculi \cite{QZCY07,BZ07,CHY12,Aetal16,Hetal16}.

\emph{Choreographic programming} is an emerging programming paradigm that applies these ideas directly \cite{M13p,Hetal16,GMPRSW21}. In this paradigm programs are choreographies, where communications can be structured by using standard control-flow constructs, e.g., conditionals.
Given a choreography, a distributed implementation can then be automatically produced by means of a compiler, which projects the choreography to a program for each participant that enacts correct message passing behaviour. That is, programs that send and receive messages such that the communications prescribed in the initial choreography are implemented.
We illustrate the choreographic programming method in \cref{fig:chorintro}.
A theory of compilation for choreographies is typically called \emph{Endpoint Projection} (EPP for short)~\cite{CHY12,CM13}.

Choreographic programming has been investigated and showed promise in numerous contexts, including parallel algorithms~\cite{CM16}, cyber-physical systems~\cite{LNN16,LH17}, self-adaptive systems~\cite{DGGLM17}, system integration~\cite{GLR18}, security protocols~\cite{LN15,BCGMS21}, and full functional correctness of distributed programs~\cite{JV22}.
The key reason for its success is that the theoretical foundations of compilation are well-understood and provide strong guarantees.
Specifically, theories of choreographic programming languages usually come with an operational correspondence result between choreographies and the distributed code compiled from them.
A hallmark consequence of this result is that the produced distributed code is deadlock-free.
This property, which stems from the syntactic design of choreographies, was popularised with the slogan of \emph{deadlock-freedom by design}~\cite{CM13}.

\begin{figure}
\centering
\begin{tikzpicture}
\node [
	rectangle, rounded corners, draw,
	minimum width=2cm,
	minimum height=3cm,
	label=above:Choreography with $n$ participants,
	align=left,
] (chor) {$A \mathbin{\texttt{->}} B:x;$ \\ $A \mathbin{\texttt{->}} C:y;$ \\ $C$ computes $z$; \\ $C \mathbin{\texttt{->}} B:z;$ \\ \dots};

\node [
	rectangle, rounded corners, draw,
	below=0.5cm of chor,
	minimum width=8cm,
	minimum height=1.5cm,
] (proj) {Projection};

\node [
	rectangle, rounded corners, draw,
	below=0.5cm of proj.south west,
	minimum width=3.5cm,
	minimum height=2cm,
	label=below:Code for participant A,
	align=left,
] (A) {send $x$ to $B$;\\ send $y$ to $C$; \\ \dots};

\node [
	below=1.25cm of proj,
] (dot) {\textbf{\dots }};

\node [
	rectangle, rounded corners, draw,
	below=0.5cm of proj.south east,
	minimum width=3.5cm,
	minimum height=2cm,
	label=below:Code for participant $n$,
] (N) {projected behaviour};

\draw[-{Latex[length=1.5mm,width=3mm]}] (chor.south) to (A);
\draw[-{Latex[length=1.5mm,width=3mm]}] (chor.south) to (N);

\end{tikzpicture}
\caption{A choreography projected onto its participants (illustration adapted from \cite{GMPRSW21}).}\label{fig:chorintro}
\end{figure}

\subparagraph*{Choreographies and Modularity}
In practice, choreographies are large---some even over a hundred pages of text~\cite{openid}.
Thus, it is important to understand the principles of how choreographies can be made modular, enabling the writing (preferably disciplined by types) of large choreographies as compositions of smaller, reusable ones.

The state of the art for modularity in choreographic programming (and choreographic notations in general) is currently represented by Choral, an object-oriented programming language in which choreographies are compiled to Java libraries that applications can use as protocol implementations~\cite{GMP20}.
Choral is the first programming language that is powerful enough to support realistic, mainstream software development with the choreographic method.
In particular, Choral introduced higher-order composition, i.e., the capabilities of defining and invoking choreographies parameterised over other choreographies.
Higher-order composition is essential to many practical scenarios, for instance in the writing of ``extensible'' protocols. An example that we will cover (in \cref{sec:EAP}) is the Extensible Authentication Protocol (EAP), a widely-employed link-layer protocol for the authentication of peers connecting to a network~\cite{eap}. EAP is parametric over a list of authentication protocols, so it is natural to model it as a choreography parametric on a list of other choreographies.

In Choral, data types are equipped with (possibly many) \emph{roles}, which are abstractions of participants.
This allows for writing object methods that involve multiple roles (choreographic methods).
We report an example about this from~\cite{GMP20}.
\begin{example}[Authentication protocol in Choral \cite{GMP20}]\label{ex:authchor}
Consider the case of a distributed authentication protocol, in which a client ($C$) wishes to use its account at an identity provider ($I$) to access a service ($S$). Such a protocol can be implemented in Choral as follows.
\begin{lstlisting}[mathescape=true]
$\funf{class}~\funf{Authenticator}@(S, C, I)\{~\typef{AuthResult}@(C,S)~\funf{authenticate}(\typef{Credentials}@C ~credentials)\{ ... \}~\}$
\end{lstlisting}
In the Choral code above, class \funf{Authenticator} is located at the three roles $S$, $C$, and $I$. Authentication is performed by method $\funf{authenticate}$, which takes the credentials of $C$ (to access its account) and returns the result of the authentication computed at $I$ to $C$ and $S$. Communication in the method body is achieved by invoking special communication methods that can move data from a role to another (the object-oriented interpretation of the communication term of choreographies). The authentication result $\funf{AuthResult}@(C,S)$, is a pair of possibly empty (if the authentication fails) session tokens, with one located at $C$ and the other at $S$.
We will return to this example in \cref{sec:auth}.
\end{example}

\subparagraph*{The Problem}
While the design of Choral is driven by practice, the language also includes features that are not covered by existing theory of choreographic programming.
In particular, we lack an understanding of the key theoretical aspects regarding typing, semantics, and projection for higher-order choreographies. Whether the strong properties expected by choreographic programming still hold in this setting remains unclear.
The aim of this paper is to close this gap.

\subparagraph*{This Article}\label{sec:This}
We present the choreographic $\lambda$-calculus, \chorlam for short, a new theory of choreographic programming that supports higher-order, modular composition.

\chorlam is the first choreographic programming model based on $\lambda$-calculus. This brings two advantages. First, we can tap on a well-known foundation for higher-order programming. Second, it reveals that the key design features of Choral work in the context of functional programming as well. In this way, \chorlam is also the first instance of \emph{functional} choreographic programming.

\chorlam is expressive enough to serve as a model of the core features of Choral, which we illustrate with a series of examples of increasing complexity. In particular, in \cref{sec:chorlam,sec:example}, we recreate in our functional setting some of the key examples given as motivation in the original presentation of Choral (including remote computation, secure key exchange, and single sign-on)~\cite{GMP20}. We also model a more sophisticated scenario based on the Extensible Authentication Protocol (EAP).
Our examples demonstrate that \chorlam allows for parameterising choreographies over different communication semantics, enabling protocol layering, a first for theory of choreographic programming.

In order to capture the essence of higher-order choreographies in the $\lambda$-calculus, we extend its syntax with features from choreographies and ambient calculi (\cref{sec:syntax})~\cite{M22,CG00}. Namely, in \chorlam, data has explicit locations and can be moved between roles by using dedicated communication primitives. For the first time in theory of choreographies, the term for performing a communication is a function and can therefore be composed with other terms as usual in functional programming.

We develop a typing discipline for \chorlam, where types are located at roles (\cref{sec:typing}).
The key novelty of our type system is that it tracks which roles are involved in terms, which requires extending the standard connective for typing functions and a dedicated environment in typing judgements. This is crucial to formulate the expected theory of projection of choreographic programming, which requires to know statically which roles are involved in the choreography being projected.

\chorlam is equipped with an operational semantics for the execution of choreographies (\cref{sec:semantics}). Formulating an appropriate semantics has been particularly challenging, because there is no prior evaluation strategy for the $\lambda$-calculus that is suitable for functional choreographies.
Since choreographies express distributed computation, theories of choreographic languages typically support out-of-order execution for subterms that can be evaluated at independent locations~\cite{CM13}.
How to formulate the necessary inference rules is well-known in the imperative setting, but it has never been studied in others.
We exemplify why an evaluation strategy akin to full $\beta$-reduction is needed (\cref{ex:fullbeta}), but also that standard full $\beta$-reduction cannot be adopted as is (\cref{ex:comorder}).
Our solution is a role-aware evaluation strategy that allows for independent computations to be evaluated in any order, but at the same time enforces an ordering of terms that might interfere with each other (communications).
Well-typed choreographies never get stuck (\cref{thm:TypeReduc}).

We reap the benefits of our design in \cref{sec:epp}, where we define our notion of Endpoint Projection (EPP): a translation from \chorlam to implementations of roles given in terms of a distributed $\lambda$-calculus.
We prove that our EPP is adequate, in the sense that it generates actually distributed implementations: the code generated for a role remains unaffected if the choreography is extended with terms that involve other roles (\cref{thm:RedToUnit}).
Then, we demonstrate that EPP is correct: it generates compliant implementations, in the sense of an operational correspondence (\cref{thm:ChorToNet,thm:NetToChor}).
Thanks to this result and the properties of our type system for choreographies, we obtain that projections of well-typed choreographies are deadlock-free (\cref{thm:NetReduc}).

Our results show that higher-order choreographic programming, as in Choral, has solid theoretical foundations and can as such be adopted by other choreographic languages.

\subparagraph*{Structure of the paper.}
\chorlam, along with its typing and semantics, is presented in \cref{sec:chorlam}. Examples of choreographies inspired by practice are given in \Cref{sec:example}. EPP is discussed in \cref{sec:epp}. Related work is given in \cref{sec:related}. Conclusions are presented in \cref{sec:conc}.

\section{The Choreographic $\lambda$-calculus}
\label{sec:chorlam}
In this section we introduce the Choreographic $\lambda$-calculus, \chorlam. This calculus extends the simply typed $\lambda$-calculus~\cite{Church32} with recursion, choreographic terms for communication, and roles.

The key new features of \chorlam not present in either the $\lambda$-calculus or previous choreographic formalisations are: 1) function types which include every role involved in the function not just the types of the input and output, 2) rewriting rules which move more of the computation into an application for out-of-order execution, and 3) communicating roles added to the labels of transitions to ensure communications happen in the intended order. We will show why these features are crucial for resolving the challenges of higher-order communication.

Roles represent independent participants in a system based on message passing.
Terms in \chorlam are located at roles, to reflect distribution.
For example, the value $\litAt{5}{\rolef{Alice}}$ reads ``the integer $5$ at $\rolef{Alice}$''.
Terms are typed with novel data types that are annotated with roles.
In this case, $\litAt{5}{\rolef{Alice}}$ has the type $\IntAt{\rolef{Alice}}$, read ``an integer at $\rolef{Alice}$''. We write type assignments in the usual way: $\litAt{5}{\rolef{Alice}} : \IntAt{\rolef{Alice}}$.

Values can be moved from a role to another by using a communication primitive, for example, the term $\com{\rolef{Alice}}{\rolef{Bob}} ~\litAt{5}{\rolef{Alice}}$ represents the communication of the value $5$ from $\rolef{Alice}$ to $\rolef{Bob}$.
Therefore, the term evaluates to $\litAt{5}{\rolef{Bob}}$ and has type $\IntAt{\rolef{Bob}}$.

\subsection{Syntax}
\label{sec:syntax}
\begin{definition}
The syntax of \chorlam is given by the following grammar
\begin{align*}
M &\Coloneqq V \mid f(\vec{\role R}) \mid M~M\mid \case M x M x M \mid \select{\role R}{\role R}~l~M\\
V &\Coloneqq x \mid \lambda{x:T}.M \mid \Inl~V \mid \Inr~V \mid \fst \mid \snd \mid \Pair~V~V \mid \litAt{\unit}{\role R} \mid \com{\role R}{\role R} \\
T &\Coloneqq T\rightarrow_{\rho} T \mid T+T \mid T\times T \mid \UnitAt{\role R} \mid \typeAt{t}{\vec{\role R}} %
\end{align*}
where $M$ is a choreography, $V$ is a value, $T$ is a type, $x$ is a variable, $l$ is a label, $f$ is a choreography name (or function name), $\role R$ is a role, $\rho$ is a set of roles, and $t$ is a type variable.
\end{definition}

Abstraction $\lambda x:T.M$, variable $x$ and application $M M$ are as in the standard (simply typed) $\lambda$-calculus.
Likewise for pairs and sums. 
For simplicity, constructors for sums ($\Inl$ and $\Inr$) and products ($\Pair$) are only allowed to take values as inputs, but this is only an apparent restriction: we can define, e.g., a function $\inl$ as $\lambda{x:T}.\Inl~x$ and then apply it to any choreography. Similarly, we define the functions $\inr$ and $\pair$ (the latter is for constructing pairs).
We use these utility functions in our examples.
Sums and products are deconstructed in the usual way, respectively by the $\ccase$ construct and by the $\fst$ and $\snd$ primitives.

The primitives $\com{S}{R}$ and $\select{S}{R}~l~M$ (where $S$ and $R$ are roles) come from choreographies and are the only primitives of \chorlam that introduce interaction between different roles.
The term $\com{S}{R}$ is a \emph{communication}: it acts as a $\lambda$-expression that takes a value at role $\role{S}$ and returns the same value at role $\role{R}$. The standard choreographic primitive for synchronous communication $\rolef{Alice} \mathbin{\texttt{->}} \rolef{Bob}\colon M$ is recovered as the function application $\com{\rolef{Alice}}{\rolef{Bob}}~M$. 
The term $\select{S}{R}~l~M$, instead, is a \emph{selection}, where $\role{S}$ informs $\role{R}$ that it has selected the label~$l$ before continuing as $M$. Selections choreographically represent the communication of an internal choice made by $\role{S}$ to $\role{R}$, the effects of which are not visible on a choreographic level, as the choreography knows the result of all internal choices. (In the implementation of choreographies, a selection corresponds to an internal choice at the sender and an external choice at the receiver.)

Finally, $f(\vec{R})$ stands for a (choreographic) function $f$ instantiated with the roles $\vec{R}$,
which evaluates to the body of the function as given by an environment of definitions (a mapping from function names to choreographies).
Function names are used to model recursion.
In the typing and semantics of \chorlam, we will use $D$ to range over mappings of function names to choreographies.
Within a choreography, there is no need to distinguish between roles that are statically fixed and role parameters: inside of a function definition, all roles are parameters of the function; otherwise, all roles are statically determined. All roles are treated in the same way by our theory. By convention, we write function role parameters of functions with uppercase letters as $R$, $R'$, $S$, etc., and fixed roles as $\rolef{Alice}$, $\rolef{Bob}$, etc.

To illustrate base values, we also have a term $\unit@R$ which denotes a unit value at the role $R$---other base values, like $5@R$ used in the examples above, can be easily included following the same approach.
Values are not limited to one role in general; for example, $\Pair~\unit@S~\unit@R$ denotes a distributed pair where the first element resides at $S$ and the second at $R$.
We say a choreography (or value or type) is local to $R$ if $R$ is the only role mentioned in any subterm of the choreography, e.g., $\lambda x:\unitt@R.(\Pair~x~\unit@R)$ is a local function located at $R$.
In examples we will sometimes use the shorthand $M@R$, $V@R$, and $T@R$ to denote a choreography, value, or type which is located entirely at the role $R$, though this is not part of the syntax of \chorlam.

In \chorlam types record the distribution of values across roles: if role $\role{R}$ occurs in the type given to $V$ then part of $V$ will be located at $\role{R}$.
Because a function may involve more roles besides those listed in the types of their input and output, the type of abstractions $T\rightarrow_\rho T'$ is annotated with a set of roles $\rho$ denoting the roles that may participate in the computation of a function with that type besides those occurring in the input $T$ or the output $T'$---in examples we will often omit this annotation if the set of additional roles is empty thus writing $T\rightarrow T'$ instead of $T\rightarrow_\emptyset T'$.
For example, if $\rolef{Alice}$ wants to communicate an integer to $\rolef{Bob}$ directly (without intermediaries), she can use a choreography of type $\IntAt{\rolef{Alice}} \to \IntAt{\rolef{Bob}}$; however, if the communication might go through a proxy, then she can use a choreography of type $\IntAt{\rolef{Alice}} \to_{\{\rolef{Proxy}\}} \IntAt{\rolef{Bob}}$.
As we will see in \cref{sec:epp}, this information is essential for compiling functions that involve multiple roles to distributed implementations that do not require additional coordination wrt what is written in the choreography.

Aside from the annotations on arrows our types resemble those of simply typed $\lambda$-calculus and serve the same primary purpose of keeping track of input and output of functions in order to prevent nonsensical choreographies.
 Consider the function $h$ defined as $\fun{x}{\IntAt{\rolef{Alice}}}\com{\rolef{Proxy}}{\rolef{Bob}}~(\com{\rolef{Alice}}{\rolef{Proxy}}~x)$, which communicates an integer from $\rolef{Alice}$ to $\rolef{Bob}$ by passing through an intermediary $\rolef{Proxy}$ and has the type $\IntAt{\rolef{Alice}} \to_{\{\rolef{Proxy}\}} \IntAt{\rolef{Bob}}$.
 For any term $M$, the composition $h~M$ makes sense if the evaluation of $M$ returns something of the type expected by $h$, that is $\IntAt{\rolef{Alice}}$. The composition $h~\litAt{5}{\rolef{Alice}}$ makes sense, but $h~\litAt{5}{\rolef{Bob}}$ does not, because the argument is not at the role expected by $h$.

The types for sums and products are the usual ones (forming a sum or product of $T$ and $T'$ does not introduce new roles besides those already listed in $T$ and $T'$).
The type of units is annotated with the role where each unit is located; $\unitt@R$ is the type of the unit value available (only) at role $\role{R}$.
Recursive type variable $t@\vec{\role R}$ are annotated with the roles $\vec{R}$, instantiating the roles occurring in their definition (we will discuss type definitions later in this section).
The set of roles in a type is formally defined as follows.

\begin{definition}[Roles of a type]
The roles of a type $T$, $\roles(T)$, are defined as follows.
\begin{align*}
	\roles(t@\vec{\role R}) &= \vec{\role R}
  & \roles(T \rightarrow_\rho T') &=\roles(T)\cup \roles(T')\cup \rho \\
  \roles(\UnitAt{R}) &=\{\role R\}
  & \roles(T + T') = \roles(T\times T') &=\roles(T)\cup \roles(T')
\end{align*}
\end{definition}

\begin{example}[Remote Function]
  \label{ex:remfun}
  We can use \chorlam to define a small choreography, $\funf{remoteFunction}(C, S)$ for a distributed computation in which a client, $C$ sends an integer $\var{val}$ to a server $S$ where a function $\var{fun}$ located at $S$ is applied to $\var{val}$ before the result gets returned to $C$. For simplicity, we assume a primitive type for integers ($\Int$).
\begin{lstlisting}[mathescape=true]
$\fundef{remoteFunction}(C, S) = \fun{fun}{\IntAt{S} \to
\IntAt{S}}\;\fun{val}{\IntAt{C}}\;\com{S}{C}\;(\var{fun}\; (\com{C}{S}\;\var{val}))$
\end{lstlisting}
This choreography is parametrised on the roles $S$ and $C$ as well as the local function $\var{fun}$ and value $\var{val}$.
\eoe
\end{example}

Crucially, a choreographic term $M$ may involve more roles besides those listed in its type. For instance, the choreographies $\litAt{\unit}{R}$, $\com{S}{R}~\litAt{\unit}{S}$, and $\com{P}{R}~\com{S}{P}~\litAt{\unit}{S}$ all have type $\UnitAt{R}$, but they implement different communication behaviours. This makes choreographies compositional, and will be important in establishing typing preservation later.

A key concern of choreographic languages is knowledge of choice: the property that when a choreography chooses between alternative branches (as with our $\ccase$ primitive), all roles that need to behave differently in the branches are properly informed via appropriate selections~\cite{CDP11}. We give an example of how selections should be used, and postpone a formal discussion of how knowledge of choice is checked for to our presentation of Endpoint Projection in \cref{sec:epp}.

\begin{example}[Remote Map]
  \label{ex:choice}
  We now build on the remote function from \cref{ex:remfun} by using it to create a choreography $\funf{remoteMap}(C, S)$, where the server $S$ applies a local function to not just one value received from the client $C$, but instead to each element of a list sent individually from $C$ to $S$ and then returned after the computation at $S$ is complete.
\begin{lstlisting}[mathescape=true]
$\fundef{remoteMap}(C, S) = \fun{fun}{\IntAt{S} \to \IntAt{S}}\;\fun{list}{\ListAt{\Int}{C}}$
  $\ccase\;list\;\cof$
    $\cleft{x} \select{C}{S}\;\lab{stop} \; \litAt{\unit}{C} \csep$
    $\cright{x} \select{C}{S}\;\lab{again}\;$
      $\funf{cons}(C)\; (\funf{remoteFunction}(C, S)\; fun
\; (\fst\;x)) \;(\funf{remoteMap}(C, S)\; fun \; (\snd\;x))$
\end{lstlisting}
Here, $\ListAt{\Int}{C}$ is the recursive type satisfying
$\ListAt{\Int}{C}=\UnitAt{C}+(\IntAt{C}\times\ListAt{\Int}{C})$, representing a list of integers. In general, we write $\ListAt{t}{(R_1,\dots,R_n)}$ to mean the recursive type satisfying 
\[\ListAt{t}{(R_1,\dots,R_n)} = (\UnitAt{R_1} \times \dots \times \UnitAt{R_n}) + (\typeAt{t}{(R_1,\dots,R_n)} \times \ListAt{t}{(R_1,\dots,R_n))}\text.\]
When we introduce typing judgements, we will show how to work with this kind of type equations.

The choreography uses selections so that $S$ is informed about how it should behave (terminate or recur). This is essential if the choreography is to be implemented in a fully distributed way, since the information is initially available only at $C$.
As we will see in \cref{sec:epp}, we can automatically generate the following local behaviours for $C$ and $S$, respectively.
\begin{lstlisting}[mathescape=true]
$\fun{fun}{\botmt \to \botmt}\;\fun{list}{\List{\Int}}$
  $\ccase\;list\;\cof$
    $\cleft{x} \oplus_{S}\;\lab{stop} \; \unit \csep$
    $\cright{x} \oplus_{S}\;\lab{again}\;$
      $\funf{cons}\; (\funf{remoteFunction}_1(S)\; \botm
\; (\fst\;x)) \;(\funf{remoteMap}_1(S)\; \botm \; (\snd\;x))$
\end{lstlisting}
\begin{lstlisting}[mathescape=true]
$\fun{fun}{\typef{Int} \to \typef{Int}}\;\fun{list}{\botmt}$
  $\&_C\{\lab{stop}: \; \botm; $
    $\;\lab{again}: \; (\funf{remoteFunction}_2(C)\; fun
\; \botm) \;(\funf{remoteMap}_2(C)\; fun \; \botm)\}$
\end{lstlisting}

In the local behaviours, term $\botm$ denotes a part of the computation that takes place elsewhere.
Notice how the $\ccase$ is evaluated on data at role $\role{C}$, so that role is the only one initially knowing which branch has been chosen. Each branch, however, starts with role $\role{C}$ sending a label to role $\role{S}$, using term $\oplus_{S}$.
On the other hand, $S$ does not have a $\ccase$ but must wait to receive a label from $C$, using term $\&_C$, to figure out whether it should terminate (label $\lab{stop}$) or continue (label $\lab{again}$): from its point of view, $S$ is reactively handling a stream.

The check for knowledge of choice that we will formalise in \cref{sec:epp} (the merge operator) supports two principles: roles do not need to receive a selection until their behaviour is affected by a choice made by another role; and knowledge of choice can be propagated, in the sense that any role that has been informed of a choice through a selection can inform other roles as well (selections do not necessarily come from the role that has made the choice).
\eoe
\end{example}

We can encode conditional statements in the standard way: we define a type $\BoolAt{R}$ as $\UnitAt{R}+\UnitAt{R}$, and $\cif \; M \; \cthen \; M' \; \celse \; M''$ as an abbreviation for $\case Mx{M'}x{M''}$.

Free and bound variables are defined as expected, noting that $x$ and $y$ are bound in $\case Mx{M'}y{M''}$. 
We write $\fv(M)$ for the set of free variables in term $M$. 
The formal definition is given in \cref{sec:appendix}. We call a choreography closed if it has no free variables and restrict our results to closed choreographies. 

\subsection{Typing}
\label{sec:typing}

We now show how to type choreographies following the intuitions already given earlier.
Typing judgements have the form $\Theta;\Sigma;\Gamma\vdash M:T$, where: 
$\Theta$ is the set of roles that can be used for typing $M$; 
$\Sigma$ is collection of type definitions parameterised on roles, i.e., expressions of the form $t@\vec{R}=T$ where the elements of $\vec{R}$ are distinct; and
$\Gamma$ is a typing environment assigning variable names to their type ($x:T$) and choreography names to the their set of bound roles and type (written $f(\vec{R}):T$).
We further require that a type variable $t$ is defined at most once in $\Sigma$,
that definitions are contractive, and that $\roles(T)=\vec{R}$ for any $t@\vec{R}=T\in\Sigma$. We can use $\Sigma$ to define common types such as $\BoolAt{ R }=(\UnitAt{R}+\UnitAt{R})$. It also allows us to define recursive types such as $\ListAt{\Bool}{R}=\UnitAt{R}+(\BoolAt{R} \times \ListAt{\Bool}{R})$ as described in \cref{ex:choice}.
We call $\Theta;\Sigma;\Gamma$ jointly a \emph{typing context}.
We also require that variable and choreography names occur in $\Gamma$ at most once. %
Many of the rules resemble those for simply typed $\lambda$-calculus, but with roles added, and the additional requirements that only the roles in the type are used in the term being typed.
We include some representative ones in \cref{fig:typing} (the complete set of typing rules is given in \cref{sec:appendix}). Our typing rules use the predicate $\dist(\vec{R})$ to indicate that the elements of $\vec{R}$ are distinct and $||\vec{R}||$ to denote the number of elements of $\vec{R}$.

The most interesting part of our type system is the annotation $\rho$ on the function type $T\rightarrow_\rho T'$, which is essential for projection. Without it, we would not know which roles might be involved in a function, because the function's type would show only the roles in its input and output types. This knowledge is essential for projecting an application exactly to the roles that need to be involved in the computation. 
Consider the choreography $(\fun{f}{\IntAt{S} \to_{\{S'\}} \IntAt{R}}{f~\litAt{5}{S}})~(\fun{x}{\IntAt{S}}{(\com{S}{S’}~(\com{S’}{R}~x))})$. 
Without the annotation $\{S'\}$ on the type of $f$, the projection has no way to know that the left part of this application will involve, and therefore need to be projected at, $S’$. 
This is essential for modularity of projection and reusability of the generated code: we can project functions that can be used with arguments supplied by any (well-typed) context.
\Cref{rule:t-abs} uses $\Theta$ to ensure that $\rho$ contains every additional role used in the function by requiring every role to be in $\Theta$ and restricting $\Theta$ to the roles of $T$, $\rho$, and $T'$.%

\begin{figure}
\begin{spreadlines}{\rulesvsep}
\begin{eqnarray*}&
  \infer[\rtag{\rlabel{\rname{TVar}}{rule:t-var}}]
		{\Theta;\Sigma;\Gamma\vdash x:T}
		{x:T\in \Gamma & \roles(T)\subseteq \Theta}
	 \qquad \infer[\rtag{\rlabel{\rname{TApp}}{rule:t-app}}]
	{\Theta;\Sigma;\Gamma\vdash N~M:T'}
	{\Theta;\Sigma;\Gamma\vdash N:T\rightarrow_\rho T' 
	&\Theta;\Sigma;\Gamma\vdash M:T}
	 &\\&
	\infer[\rtag{\rlabel{\rname{TDef}}{rule:t-fun}}]
		{\Theta;\Sigma;\Gamma\vdash f(\vec{\role R}):T[\vec{R'}:=\vec{R}]}
		{f(\vec{\role{R'}}):T\in \Gamma & \roles(T)\subseteq \vec{\role R}\subseteq\Theta & ||\vec{\role R}||=||\vec{\role R'}|| & \dist(\vec{R})} 
	&\\&
  \infer[\rtag{\rlabel{\rname{TAbs}}{rule:t-abs}}]
  	{\Theta;\Sigma;\Gamma\vdash\lambda x:T.M:T \rightarrow_\rho T'}
  	{\Theta';\Sigma;\Gamma,x:T\vdash M:T' & \rho\cup \roles(T)\cup \roles(T') = \Theta' \subseteq \Theta}
	&\\&
  \infer[\rtag{\rlabel{\rname{TCom}}{rule:t-com}}]
		{\Theta;\Sigma;\Gamma\vdash \com{S}{R}:T \rightarrow_\emptyset T[\role{S}:=\role{R}]}
		{\roles(T)=\{\role{S}\} & \{S,R\}\subseteq \Theta}
	\qquad
	\infer[\rtag{\rlabel{\rname{TSel}}{rule:t-sel}}]
		{\Theta;\Sigma;\Gamma\vdash \select{S}{R}~l~M:T}
		{\Theta;\Sigma;\Gamma\vdash M:T & \{S,R\}\subseteq \Theta}
	&\\&
	\infer[\rtag{\rlabel{\rname{TEq}}{rule:t-eq}}]
			{\Theta;\Sigma;\Gamma\vdash M:T[{\vec{R}}:={\vec{R'}}]}
			{\Theta;\Sigma;\Gamma\vdash M:t@\vec{R'} 
			&t@\vec{R}=_{\Sigma} T & \vec{R'} \subseteq \Theta &
      ||\vec{\role R}||=||\vec{\role R'}|| &
      \dist(\vec{R'})}
&\end{eqnarray*}
\end{spreadlines}
\caption{Typing rules for \chorlam (representative selection).}
\label{fig:typing}
\end{figure}

Otherwise, \cref{rule:t-var,,rule:t-fun,rule:t-abs} exemplify how role checks are added to the standard typing rules for simply typed $\lambda$-calculus.
\Cref{rule:t-com} types communication actions, moving subterms that were placed at role $\role{S}$ to role~$\role{R}$ (here $T[S := R]$ is the type expression obtained by replacing $S$ with $R$). Note that the type of the value being communicated must be located entirely at $S$.
\Cref{rule:t-sel} types selections as no-ops, only checking that the sender and receiver of the selection are legal roles.
\Cref{rule:t-eq} allows rewriting a type according to $\Sigma$ in order to mimic recursive types (see \cref{ex:choice}).

We also write $\Theta;\Sigma;\Gamma\vdash D$ to denote that a set of definitions $D$, mapping names to choreographies, is well-typed.
Sets of definitions play a key role in the semantics of choreographies, and can be typed by the rule below.
\[
	\infer[\rtag{\rlabel{\rname{TDefs}}{rule:t-defs}}]
		{\Theta;\Sigma;\Gamma\vdash D}
		{\forall f(\vec{R})\in \mathsf{domain}(D): 
		& f(\vec{R}):T\in \Gamma 
		& \vec{R};\Sigma;\Gamma\vdash D(f(\vec{R})):T & \dist(\vec{R}) & \vec{R}\subseteq\Theta}
\]

\begin{example}
  The set of definitions in \cref{ex:remfun,ex:choice} can be typed in the typing context:
  \begin{align*}
   \Theta &= \{C,S\}
   \\
   \Sigma &=\{\ListAt{\Int}{R}=\UnitAt{R}+(\IntAt{R}\times\ListAt{\Int}{R})\}
   \\
   \Gamma &=
   \left\{\begin{aligned}
   \funf{remoteFunction}& : (\IntAt{S}\to\IntAt{S})\to\IntAt{C}\to\IntAt{C},\\
   \funf{remoteMap}& : (\IntAt{S}\to\IntAt{S})\to\ListAt{\Int}{C}\to\ListAt{\Int}{C}
   \end{aligned}\right\}
  \end{align*}
  \eoe
\end{example}

\subsection{Semantics}
\label{sec:semantics}
\chorlam comes with a reduction semantics that captures the essential ingredients of the calculi that inspired it: $\beta$- and $\iota$-reduction, from $\lambda$-calculus, and the usual reduction rules for communications and selections.
Some representative rules are given in \cref{fig:semantics}.

\begin{figure}
\begin{spreadlines}{\rulesvsep}
  \begin{eqnarray*}&
    \infer[{\rtag{\rlabel{\rname{AppAbs}}{rule:c-appabs}}}]
	    {\lambda x:T.M~V\xrightarrow{\tau,\emptyset}_D M[x:=V]}
      {}
    &\\& 
     \infer[\rtag{\rlabel{\rname{InAbs}}{rule:c-inabs}}]
     	{\lambda x:T.M\xrightarrow{\lambda,\mathbf{R}}_D \lambda x:T.M'}
     	{M\xrightarrow{\ell,\mathbf{R}}_D M' }
    &\\&
     \infer[\rtag{\rlabel{\rname{App1}}{rule:c-app1}}]
     	{M~N\xrightarrow{\tau,\mathbf{R}}_D M'~N}
     	{M\xrightarrow{\ell,\mathbf{R}}_D M' & \ell=\lambda\Rightarrow\mathbf{R}\cap \roles(N)=\emptyset}
    &\\&
     \infer[\rtag{\rlabel{\rname{App2}}{rule:c-app2}}]
     {V~N\xrightarrow{\tau,\mathbf{R}}_D V~N'}
     {N\xrightarrow{\tau,\mathbf{R}}_D N'} 
     \qquad  
     \infer[\rtag{\rlabel{\rname{App3}}{rule:c-app3}}]
     {M~N\xrightarrow{\tau,\mathbf{R}}_D M~N'}
     {N\xrightarrow{\tau,\mathbf{R}}_D N' & \mathbf{R}\cap\roles(M)=\emptyset} 
    &\\&
     \infer[\rtag{\rlabel{\rname{Case}}{rule:c-case}}]
     	{\case{N}{x}{M}{x'}{M'}\xrightarrow{\tau,\mathbf{R}}_D \case{N'}{x}{M}{x'}{M'}}
     	{N\xrightarrow{\tau,\mathbf{R}}_D N'} 
    &\\&
     \infer[\rtag{\rlabel{\rname{InCase}}{rule:c-incase}}]
     	{\begin{array}{l}\case{N}{x}{M_1}{x'}{M_2}\xrightarrow{\ell, \mathbf{R}}_D \\ \hphantom{\case{N}{x}{M_1}{x'}{M_2}} \case{N}{x}{M_1'}{x'}{M_2'}\end{array}}
     	{M_1\xrightarrow{\ell,\mathbf{R}}_D M_1' & M_2\xrightarrow{\ell,\mathbf{R}}_D M_2' & \mathbf{R}\cap \roles(N)=\emptyset} 
    &\\&
      \infer[\rtag{\rlabel{\rname{CaseL}}{rule:c-casel}}]
       {\case{\Inl~V}{x}{M}{x'}{M'}\xrightarrow{\tau,\emptyset}_D M[x:= V]}
       {}
    &\\&
      \infer[\rtag{\rlabel{\rname{Proj1}}{rule:c-proj1}}]
     {\fst~\Pair~V~V'\xrightarrow{\tau,\emptyset}_D V}
     {}
   \qquad
	    \infer[\rtag{\rlabel{\rname{Def}}{rule:c-fun}}]{f(\vec{R}) \xrightarrow{\tau,\emptyset}_D M [{\vec{R'}}:={\vec{R}}]}{D(f(\vec{R'}))=M}
    &\\&
     \infer[{\rtag{\rlabel{\rname{Com}}{rule:c-com}}}]{\com{S}{R}~V \xrightarrow{\tau,\{S,R\}}_D V[S:=R]}{\fv(V)=\emptyset}
    \qquad
      \infer[\rtag{\rlabel{\rname{Sel}}{rule:c-sel}} ]
     {\select{S}{R}~l~M \xrightarrow{\tau,\{S,R\}}_D M}
     {}
     &\\&
     \infer[\rtag{\rlabel{\rname{InSel}}{rule:c-insel}}]
     {\select{S}{R}~\ell~M \xrightarrow{\ell,\mathbf{R}}_D \select{S}{R}~\ell~M'}
     {M\xrightarrow{\ell,\mathbf{R}}_D M' & \mathbf{R}\cap \{S,R\}=\emptyset} 
     \qquad
     \infer[\rtag{\rlabel{\rname{Str}}{rule:c-str}}]{M\xrightarrow{\tau,\mathbf{R}}_D M'}{M\rightsquigarrow^* N & N\xrightarrow{\tau,\mathbf{R}} M'}
  \end{eqnarray*}
  \end{spreadlines}
  \caption{Semantics of \chorlam.}
  \label{fig:semantics}
\end{figure}
\begin{figure}
\begin{spreadlines}{\rulesvsep}
  \begin{eqnarray*}&
\infer[{\rtag{\rlabel{\rname{R-AbsR}}{rule:r-absr}}}]{((\lambda x:T.M)~N)~M'\rightsquigarrow (\lambda x:T.(M~M'))~N)}{x\notin \fv(M')} &\\& 
\infer[{\rtag{\rlabel{\rname{R-AbsL}}{rule:r-absl}}}]{M'~((\lambda x:T.M)~N)\rightsquigarrow (\lambda x:T.(M'~M))~N)}{x\notin \fv(M') & \sroles(M')\cap\roles(N)=\emptyset}   
&\\&
\infer[{\rtag{\rlabel{\rname{R-CaseR}}{rule:r-caser}}}]{\begin{array}{l}(\case{N}{x}{M_1}{x'}{M_2})~M\rightsquigarrow \\
\qquad\qquad\qquad\qquad \case{N}{x}{(M_1~M)}{x'}{(M_2~M)}\end{array}}{x,x'\notin \fv(M)}
&\\&
\infer[{\rtag{\rlabel{\rname{R-CaseL}}{rule:r-casel}}}]{\begin{array}{l}M~(\case{N}{x}{M_1}{x'}{M_2})\rightsquigarrow\\ \qquad\qquad\qquad\qquad \case{N}{x}{(M~M_1)}{x'}{(M~M_2)}\end{array}}{x,x'\notin \fv(M) & \sroles(M)\cap \roles(N)=\emptyset}
&\\&
\infer[{\rtag{\rlabel{\rname{R-SelR}}{rule:r-selr}}}]
{(\select{S}{R}~l~N)~M\rightsquigarrow \select{S}{R}~l~(N~M)}
{}
 &\\&
\infer[{\rtag{\rlabel{\rname{R-SelL}}{rule:r-sell}}}]{M~(\select{S}{R}~l~N)\rightsquigarrow \select{S}{R}~l~(M~N)}{\sroles(M)\cap\roles(N)=\emptyset}
\end{eqnarray*}
  \end{spreadlines}
  \caption{Rewriting of \chorlam.}
  \label{fig:rewrite}
\end{figure}

The key idea of our semantics is that terms at different roles can be evaluated independently, unless interaction is specified within the choreography. This kind of out-of-order execution is typical for choreographic calculi, and we port it to $\lambda$-calculus here for the first time~\cite{CM13}. In addition to functional choreographies having a different structure to imperative, out-of-order execution in higher-order choreographies is complicated by having actions where multiple roles are involved but no synchronisation happens, namely applications of values located at multiple roles such as choreographies and pairs with elements located at different roles.

The semantics are annotated with a label, $\ell$, and a set of synchronising roles, $\mathbf{R}$. The label is either $\lambda$, when an action is propagated out through a $\lambda$-term as in \cref{rule:c-inabs}, or $\tau$ otherwise. The set of synchronising roles is empty if no synchronisations are taking place. The purpose of the label and synchronising roles is to ensure that synchronisations between the same roles occur in the expected order, the importance of which becomes clear later.

\Cref{rule:c-appabs,,rule:c-app1,,rule:c-app2} implement a call-by-value $\lambda$-calculus.
\Cref{rule:c-case,,rule:c-casel} and its counterpart~\cref{rule:c-caser} implement $\iota$-reductions for sums, and likewise for \cref{rule:c-proj1,,rule:c-proj2} wrt pairs.
The communication \cref{rule:c-com} changes the associated role of a value, moving it from $\role{S}$ to $\role{R}$, while the selection \cref{rule:c-sel} implements selection as a no-op.
\Cref{rule:c-fun} allows reductions to use choreographies defined in $D$.

In addition to the fairly standard $\lambda$-calculus semantics, we have some rules for out-of-order execution. These include rewriting terms as described in \cref{fig:rewrite} and being able to propagate some transitions past an abstraction, case, and selection as in \cref{rule:c-incase,rule:c-inabs,rule:c-insel}. We also have a ``role-aware'' variation of full $\beta$-reduction by using \cref{rule:c-app3}, the need for which is illustrated by \cref{ex:fullbeta}.
These rules serve the purpose of making our semantics decentralised, in the sense that actions at distinct roles can proceed independently.

\begin{example}\label{ex:fullbeta}
Consider the choreography $M=f(S)~((\lambda x:T@R.V@S)~V'@R)$.
The choreography includes two independent roles, $R$ and $S$, but the two never actually interact: the inner application involves an abstraction and an argument located only at $R$, so it should be evaluated independently of $R'$. Likewise, $f(S)$ is located entirely at role $S$, so it should be evaluated independently of $R$.

Without \cref{rule:c-app3}, $M$ would be unable to evaluate the inner application before $f(S)$ finished running, which may be never if $f$ diverges, breaking the assumption that roles execute in a decentralised way.
\eoe
\end{example}

The rewriting rules are not standard for the $\lambda$-calculus, but they are not as strange as they first appear. Take, for example, \cref{rule:r-absr}; it simply states that if you have a function with two variables, $\lambda x:T.\lambda y:T'.M$, then $x$ and $y$ can be instantiated in any order as long as they each get the correct value. On the other hand, \cref{rule:r-absl} says that more of the computation can be pushed into an abstraction so long as it does not affect the order of synchronisations. The other rewriting rules work on similar principles, but dealing with conditionals and selections. These rules all work to ensure that while actions can be performed in different orders the result of the computation must remain the same before and after rewriting.
In \cref{ex:rewrite} we see why we need the rewriting rules in order to support the out-of-order executions necessary for the choreography to allow concurrent execution of computations located at different roles.

\begin{figure}
\centering
\begin{tikzpicture}[node distance=1cm,auto]  
\tikzset{ 
}  
\node (start) at (0,0) {$((\lambda x:\unitt@R.\lambda x':T@S.\unit@S)~f(R))~V@S$};  
\node (r1) at (0,-1.5) {$((\lambda x:\unitt@R.(\lambda x:T@S.\unit@S)~V@S)~f(R))$}; 
\node (r2) at (0,-3) {$(\lambda x:\unitt@R.\unit@S)~f(R)$};

\draw[->,squiggly] (start) to node {Rewriting with \cref{rule:r-absr}}  (r1);	
\draw[->] (r1) to node {Application with \cref{rule:c-app3,rule:c-inabs,rule:c-appabs}}  (r2);

\end{tikzpicture} 
\caption{An example of rewriting (\cref{ex:rewrite}).}\label{fig:rewrite-example}
\end{figure}

\begin{example}[Rewriting]\label{ex:rewrite}
Consider the choreography with an abstraction at $S$ inside an abstraction at $R$, $M=((\lambda x:\unitt@R.\lambda x':T@S.\unit@S)~f(R))~V@S$. As in \cref{ex:fullbeta}, $R$ and $S$ each independently execute their part of the choreography. $R$ evaluates $f(R)$ and then applies the result. Independently, $S$, executes the other application $\lambda x':T@S.\unit@S~V@S$. For $M$ to be able to execute the application at $S$ independently of $R$'s actions, we need \cref{rule:r-absr} to get $\lambda x:T@S.\unit@S$ and $V@S$ next to each other by rewriting to $((\lambda x:\unitt@R.(\lambda x:T@S.\unit@S)~V@S)~f(R))$ and \cref{rule:c-inabs} to propagate the application of $(\lambda x:T@S.\unit@S)$ and $V@S$ past $\lambda x:\unitt@R$ as shown in \cref{fig:rewrite-example}.

This problem can also be seen in e.g., $(\lambda y:T@S.M@S)~\case{f(R)}{x}{V@S}{x'}{V@S}$ where, since $f(R)$ is located at $R$, $S$ can execute the application $(\lambda y:T@S.M@S)~V@S$ independently of the result of the result of the conditional. 
\eoe
\end{example}

Some of the out-of-order-execution rules, specifically the ones pushing the left part of an application further in, have restrictions on them because we want to avoid there being more than one communication or synchronisation available at the same time on the same roles. This is because we need to ensure that communications and selections on a specific set of roles must always happen in the same order, as we otherwise get the problems illustrated by \cref{ex:comorder}.
\begin{figure}
\centering
\begin{tikzpicture}[node distance=1cm, auto]  
\node (c0) {$(\lambda x:T@R.(\com{S}{R}~V@S))~(\com{S}{R}~V'@S)$};
\node[below=of c0] (n0) {$R[(\lambda x:T.(\boxed{\recv{S}~\botm}))~(\recv{S}~\botm)]\mid S[(\lambda x:\botmt.(\send{R}~V))~(\boxed{\send{R}~V'})]$};
\node[below=of n0] (n1) {$R[(\lambda x:T.(\boxed{V'}))~(\recv{S}~\botm)]\mid S[(\lambda x:\botmt.(\send{R}~V))~(\boxed{\botm})]$};
\draw[->] (c0) to node {Projection} (n0);
\draw[->] (n0) to node {Synchronisation} (n1);
\end{tikzpicture} 
\caption{An example of ``mismatching'' synchronisation at the network level without a counterpart at the choreographic level (we highlight terms in the behaviours of $S$ and $P$ active in the synchronisation).}
\label{fig:exsynch}
\end{figure}
\begin{example}[Communication order]\label{ex:comorder}
Consider a choreography with two communications between the same roles, $M=(\lambda x:T@R.(\com{S}{R}~V@S))~(\com{S}{R}~V'@S)$. This has a similar structure to $((\lambda x:\unitt@R.(\lambda x':T@S.M)~V@S)~f(R))$ from \cref{ex:rewrite}, with part of the computation hidden behind an abstraction. However, while \cref{ex:rewrite} needed to use \cref{rule:c-inabs} to allow the computation inside of the abstraction to execute before the computation outside, doing so would cause problems in $(\fun{ x}{T@R}(\com{S}{R}~V@S))~(\com{S}{R}~V'@S)$.

Here, the behaviour of $M$ at $R$ is $(\fun{x}{T}(\recv{S}~\botm))~(\recv{S}~\botm)$ and the behaviour at $S$ is $(\fun{x}{\botmt}(\send{R}~V))~(\send{R}~V')$.

Like many previous choreographic languages, our implementation model for choreographies in \cref{sec:epp} assumes that each pair of roles has one channel between them, which they use for all communications. 
Therefore, if the two communications can be performed in any order then $S$ is currently able to send either $V$ or $V'$ and $R$ is correspondingly able to receive either inside or outside the abstraction. Since $S$ and $R$ act independently, we have no guarantee that if $S$ chooses to send $V$ first $R$ will also choose to use its left receive action or vice versa. This can create situations such as the one seen in \cref{fig:exsynch}, which our choreographies cannot and should not model.

We therefore restrict the out-of-order communications by restricting the synchronising names in \cref{rule:c-inabs,rule:c-app1,rule:c-app3,rule:c-incase,rule:c-insel}. Another way to solve the issue would be to define an implementation model where we use channels that differentiate between the two occurrences of $\com{S}{R}$.
\eoe
\end{example}

We therefore have \cref{thm:onesynch} stating that any reductions available at the same time must have different (or no) synchronisation roles.
\begin{proposition}\label{thm:onesynch}
Given a choreography $M$, if $M\xrightarrow{\ell,\mathbf{R}} M'$ and $M \xrightarrow{\ell',\mathbf{R}'} M''$ and there does not exist $N$ such that $M'\rightsquigarrow^* N$, and $M''\rightsquigarrow^* N$, then $\mathbf{R}\cap\mathbf{R'}=\emptyset$. 
\end{proposition}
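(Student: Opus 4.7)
\medskip

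\noindent\textbf{Proof plan.}
I would prove this by strong induction on the size of the choreography $M$, splitting on the last rules used in each of the two derivations $M \xrightarrow{\ell,\mathbf{R}} M'$ and $M \xrightarrow{\ell',\mathbf{R}'} M''$. The guiding observation is that nonempty synchronising sets are introduced only by \cref{rule:c-com,rule:c-sel}, each as $\{S,R\}$, and thereafter only propagated by the congruence-like rules \cref{rule:c-app1,rule:c-app2,rule:c-app3,rule:c-case,rule:c-incase,rule:c-inabs,rule:c-insel} (and by \cref{rule:c-str}). Most of these propagation rules carry a side condition stating precisely that the propagated $\mathbf{R}$ is disjoint from the roles of the sibling subterm, which is what makes the statement true.

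Concretely, in the base cases ($M$ is a value with no internal reductions, or $M = f(\vec R)$ which reduces only by \cref{rule:c-fun} with $\mathbf{R} = \emptyset$), the proposition holds trivially. For the inductive step, a representative case is $M = M_1\,M_2$. If both transitions come from the same congruence (e.g.\ both via \cref{rule:c-app1} on $M_1$), I apply the induction hypothesis to the subderivations: either the two sub-reductions already have disjoint synchronising sets, or they converge via $\rightsquigarrow^{*}$ inside $M_1$, and that rewriting lifts to $M$ by applying it in the applicative context. If one transition acts on $M_1$ (via \cref{rule:c-app1}) and the other on $M_2$ (via \cref{rule:c-app2} or \cref{rule:c-app3}), then the side condition of \cref{rule:c-app3} forces $\mathbf{R}' \cap \roles(M_1) = \emptyset$, and $\mathbf{R} \subseteq \roles(M_1)$ because each synchronising action consumes roles present in the subterm that produced it; the side condition on \cref{rule:c-app1} for $\ell = \lambda$ plays the analogous role in the case of \cref{rule:c-inabs}. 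The cases for $M$ of the form $\ccase\,N\,\cof\,\ldots$, $\lambda x{:}T.N$, and $\select{S}{R}\,l\,N$ are analogous, relying on the disjointness side conditions in \cref{rule:c-incase,rule:c-inabs,rule:c-insel}. When one transition is the top-level $\beta$/$\iota$/\rname{Com}/\rname{Sel} rule and the other is a congruence reducing a subterm, the direct $\mathbf{R}$ is $\emptyset$ or $\{S,R\}$ and the argument reduces to the same role-disjointness analysis.

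The main obstacle is handling \cref{rule:c-str} cleanly: a transition may be derived by rewriting first, so two distinct transitions of $M$ can be witnessed along different rewriting paths that superficially appear to compete. My plan is to prove a short auxiliary lemma stating that the rewriting rules of \cref{fig:rewrite} commute with each enabled reduction up to $\rightsquigarrow^{*}$, i.e.\ if $M \rightsquigarrow N$ and $M \xrightarrow{\ell,\mathbf{R}} M'$, then either $N \xrightarrow{\ell,\mathbf{R}} N'$ for some $N'$ with $M' \rightsquigarrow^{*} N'$, or the reduction already lives on $N$'s side of the rewriting. Each rewriting rule moves subterms across an application/case/selection boundary under the same role-disjointness side conditions that appear in the reduction rules, so the lemma follows by a direct case analysis. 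Granted this, any use of \cref{rule:c-str} in one of the two given transitions can be pushed through, reducing the problem to the case where neither transition uses \cref{rule:c-str}; the induction then completes the argument. Any residual non-convergence between transitions that survive this reduction must, by the case analysis above, have $\mathbf{R} \cap \mathbf{R}' = \emptyset$.
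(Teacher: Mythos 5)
Your proposal takes essentially the same route as the paper: the paper's proof is a two-sentence sketch observing that only communications and selections produce nonempty synchronising sets and that the rest follows by induction on $M$, which is exactly the skeleton you flesh out via the side conditions on \cref{rule:c-app1,rule:c-app3,rule:c-incase,rule:c-insel} and the auxiliary commutation lemma for \cref{rule:c-str}. Your version is considerably more detailed than the paper's, but the underlying argument is the same.
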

\begin{proof}
The key here is that unless these transitions are either communications or selections, $\mathbf{R}$ and $\mathbf{R'}$ are empty. Once this is clear, the rest follows by induction on $M$.
\end{proof}

We also use the label $\lambda$ in \cref{rule:c-inabs} to restrict these out-of-order communications, since we do not know which roles we need to restrict communication on in \cref{ex:comorder} until we reach the application, at which point the $\lambda$ label becomes a $\tau$ again if it is allowed to propagate. 

Since we restrict out-of-order communication in the other out-of-order execution rules, we need to be the same in the rewriting rules as shown in \cref{ex:rewritecom}. For this purpose we use the concept of synchronisation roles.

\begin{definition}[$\sroles$]
We define the set of synchronising roles of a choreography $M$, $\sroles(M)$ by recursion on the structure of $M$: 
$\sroles(\com{S}{R})=\{S,R\}$, $\sroles(\select{S}{R}~l~M)=\{S,R\}\cup\sroles(M)$, $\sroles(f(\vec{R}))=\vec{R}$,
and homomorphically on all other cases.
\end{definition}  

\begin{example}\label{ex:rewritecom}
Consider the choreography $(\com SR~V@S)~((\fun{x}{T@R}M)~(\com SR~V'@S))$. Here, thanks to the restriction on synchronisation in \cref{rule:c-app3}, only the left $\com SR$ on $V$ is available. If we were to use a version of \cref{rule:r-absl} with no restriction on synchronisation roles to rewrite the choreography to $(\fun{x}{T@R}((\com SR~V@S)~M))~(\com SR~V'@S)$, we would instead have the rightmost $\com SR$ on $V'$ available. This means we have both communication available depending on whether we decide to rewrite and we have the same problem as in \cref{ex:comorder} of $S$ potentially choosing to send $V$ while $R$ has rewritten and is expecting to receive the left $\com{S}{R}$. We therefore do not allow such a rewrite and use synchronisation roles to prevent it.
\eoe
\end{example}

Our first result shows that closed choreographies remain closed under reductions.

\begin{proposition}
  \label{thm:Clo}
  Let $M$ be a closed choreography. If $M\rightarrow_D M'$ then $M'$ is closed.
\end{proposition}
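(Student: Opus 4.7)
The plan is to strengthen the statement and prove by induction on the derivation that, whenever $M\xrightarrow{\ell,\mathbf{R}}_D M'$ or $M\rightsquigarrow M'$, the inclusion $\fv(M')\subseteq\fv(M)$ holds (tacitly assuming, as is implicit in \cref{rule:t-defs}, that the bodies in $D$ are themselves closed with respect to their role parameters). The proposition then follows immediately by specialising to $\fv(M)=\emptyset$. As preparation I would record two routine lemmas: the substitution lemma $\fv(N[x:=V])\subseteq(\fv(N)\setminus\{x\})\cup\fv(V)$ for the $\lambda$-calculus fragment, and the observation that a role substitution $[\vec{R'}:=\vec{R}]$ never affects free term variables, since roles and term variables belong to disjoint syntactic classes.

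Next I would dispatch the reduction base cases. For \cref{rule:c-appabs} and \cref{rule:c-casel} (together with its symmetric counterpart), closedness of the reduct follows from the substitution lemma, because the substituted argument is a value which is closed by assumption on $M$. \Cref{rule:c-proj1} and \cref{rule:c-proj2} simply project out a subterm, \cref{rule:c-sel} drops a prefix, and \cref{rule:c-com} relocates a value via role substitution without touching term variables. For \cref{rule:c-fun}, unfolding produces the (closed) body of a definition and then a role substitution, which preserves closedness. For the congruence rules (\cref{rule:c-inabs}, \cref{rule:c-app1}--\cref{rule:c-app3}, \cref{rule:c-case}, \cref{rule:c-incase}, \cref{rule:c-insel}) and the structural \cref{rule:c-str}, the induction hypothesis applied to the reducing subterm delivers the desired inclusion; the only minor care point is \cref{rule:c-inabs}, where the same variable $x$ is bound on both sides, so the set-theoretic inclusion survives under the $\lambda$-binder.

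The remaining cases are the rewriting rules of \cref{fig:rewrite}, each of which moves a subterm across a $\lambda$-binder, into the branches of a $\ccase$, or past a selection. In every case a direct inspection of the two sides shows that their sets of free variables coincide, and the variable-freshness side conditions ($x\notin\fv(M')$ in \cref{rule:r-absr,rule:r-absl} and $x,x'\notin\fv(M)$ in \cref{rule:r-caser,rule:r-casel}) are exactly what rules out accidental capture once a subterm is pushed under a binder. The step I expect to require the most care is the function-unfolding case \cref{rule:c-fun}: to close the induction one has to appeal to a standing invariant that every $D(f(\vec{R'}))$ is closed, which ought to be recorded explicitly before starting; the role-substitution lemma then does the rest. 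Once that hypothesis is in place, the remaining cases reduce to bookkeeping over syntax.
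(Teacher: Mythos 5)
Your argument is correct and is exactly the elaboration of what the paper leaves implicit: the paper's proof is the one-liner ``straightforward from the semantics,'' and your strengthened induction ($\fv(M')\subseteq\fv(M)$ over the reduction and rewriting derivations, with the substitution lemma, the role-substitution observation, and the standing closedness assumption on $D$) is the natural way to discharge it. The only nitpick is in the \hyperref[rule:c-appabs]{application} case: under the strengthened inductive statement you cannot assume the argument value is closed, but your stated substitution lemma already yields the required inclusion without that assumption, so nothing is lost.
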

\begin{proof}
  Straightforward from the semantics.
\end{proof}

One of the hallmark properties of choreographies is that well-typed choreographies should continue to reduce until they reach a value.
We split this result into two independent statements.

\begin{theorem}[Progress]\label{thm:TypeReduc}
  Let $M$ be a closed choreography and $D$ a collection of named choreographies with all the necessary definitions for $M$.
  If there exists a typing context $\Theta;\Sigma;\Gamma$ such that $\Theta;\Sigma;\Gamma\vdash M:T$ and $\Theta;\Sigma;\Gamma\vdash D$, then either $M$ is a value (and $M \centernot\rightarrow_{D}$) or there exists a choreography $M'$ such that $M\rightarrow_{D} M'$.
\end{theorem}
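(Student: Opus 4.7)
The plan is to induct on the structure of $M$ (equivalently, on the typing derivation), after establishing a standard canonical forms lemma for \chorlam: closed well-typed values of arrow type must be a $\lambda$-abstraction, $\fst$, $\snd$, or $\com{S}{R}$; of sum type must have the form $\Inl~V$ or $\Inr~V$; of product type must be $\Pair~V_1~V_2$; and of unit type $\UnitAt R$ must be $\litAt{\unit}{R}$. These follow by a routine inspection of the value productions in the grammar against the typing rules, using \cref{rule:t-eq} to unfold recursive type definitions where needed.

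The easy cases come first: $M$ cannot be a variable by closedness; if $M$ is a value, the first disjunct holds and non-reducibility is verified by observing that no left-hand side of a reduction rule matches a value at the head; if $M = \select{S}{R}~l~M'$ then \cref{rule:c-sel} fires directly; and if $M = f(\vec R)$ then \cref{rule:t-fun} combined with $\Theta;\Sigma;\Gamma\vdash D$ via \cref{rule:t-defs} guarantees $f \in \mathsf{domain}(D)$, so \cref{rule:c-fun} applies. For $M = N~N'$, inversion on \cref{rule:t-app} gives $\Theta;\Sigma;\Gamma\vdash N : T' \to_\rho T$ and $\Theta;\Sigma;\Gamma\vdash N' : T'$. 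If $N$ is not a value, the induction hypothesis yields a reduction of $N$; lift it to $M$ via \cref{rule:c-app1}. If $N$ is a value but $N'$ is not, apply the IH to $N'$ and lift via \cref{rule:c-app2}. If both are values, canonical forms forces $N$ into one of four shapes, each enabling a base reduction: a $\lambda$-abstraction yields \cref{rule:c-appabs}; a $\com{S}{R}$ yields \cref{rule:c-com}, with closedness of $N'$ inherited from $M$; and $\fst$ or $\snd$ applied to $N'$ yields \cref{rule:c-proj1} or its symmetric counterpart after canonical forms forces $N' = \Pair~V_1~V_2$. The case $M = \case{N}{x}{M_1}{x'}{M_2}$ is entirely analogous: either the IH on $N$ combined with \cref{rule:c-case} applies, or $N$ is already a value of sum type and canonical forms yields \cref{rule:c-casel} or its counterpart.

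The main obstacle will be navigating the role-disjointness side conditions on the congruence rules, in particular the premise $\ell = \lambda \Rightarrow \mathbf{R}\cap\roles(N)=\emptyset$ in \cref{rule:c-app1}. The resolution is to observe that every non-value admits a reduction whose top-level label is $\tau$: the only rule introducing label $\lambda$ at the head is \cref{rule:c-inabs}, which requires the head term to be a $\lambda$-abstraction, hence itself a value. Consequently, the $\tau$-labeled reduction handed over by the induction hypothesis vacuously satisfies the $\ell = \lambda$ clause of \cref{rule:c-app1}. The rewriting relation $\rightsquigarrow$, and hence \cref{rule:c-str}, is not needed to exhibit a witness for progress: rewriting serves only to enable additional execution orderings, and any redex surfaced after rewriting is already available at the source term via a direct combination of the base and congruence rules.
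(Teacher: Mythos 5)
Your proposal is correct and follows essentially the same route as the paper's proof: induction on the typing derivation, lifting reductions of non-value subterms through \cref{rule:c-app1,rule:c-app2,rule:c-case}, and, when the relevant subterms are values, a case split on the shape of the arrow-typed (resp.\ sum-typed) value to fire \cref{rule:c-appabs,rule:c-com,rule:c-proj1,rule:c-proj2,rule:c-casel,rule:c-caser}, plus the direct cases for selections and named functions. You are somewhat more explicit than the paper on two points it leaves implicit --- the canonical-forms inversion and the observation that the IH always supplies a $\tau$-labelled reduction so the $\ell=\lambda$ side condition of \cref{rule:c-app1} is vacuous --- but this is the same argument, not a different one.
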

\begin{proof}
  Follows by induction on the typing derivation of $\Theta;\Sigma;\Gamma\vdash M:T$. See \cref{app:TypeReduc} for full details.
\end{proof}

\begin{theorem}[Type Preservation]\label{thm:TypePres}
  Let $M$ be a closed choreography.
  If there exists a typing context $\Theta;\Sigma;\Gamma$ such that $\Theta;\Sigma;\Gamma\vdash M:T$ and $\Theta;\Sigma;\Gamma\vdash D$, then $\Theta;\Sigma;\Gamma\vdash M':T$ for any $M'$ such that $M\rightarrow_D M'$.
\end{theorem}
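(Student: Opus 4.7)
The plan is to proceed by induction on the derivation of $M \rightarrow_D M'$, doing case analysis on the last rule applied. Before the induction, I would establish three auxiliary lemmas adapted to the role-annotated setting: a \emph{weakening lemma}, stating that $\Theta;\Sigma;\Gamma \vdash M:T$ implies $\Theta';\Sigma;\Gamma' \vdash M:T$ whenever $\Theta \subseteq \Theta'$ and $\Gamma \subseteq \Gamma'$ with no clashes; a \emph{term-substitution lemma}, stating that if $\Theta;\Sigma;\Gamma, x:T' \vdash M:T$ and $\Theta;\Sigma;\Gamma \vdash V:T'$ then $\Theta;\Sigma;\Gamma \vdash M[x:=V]:T$; and a \emph{role-substitution lemma}, stating that if $\Theta;\Sigma;\Gamma \vdash V:T$ with $\roles(T) \subseteq \{S\}$ and $R \in \Theta$, then $\Theta;\Sigma;\Gamma \vdash V[S:=R]:T[S:=R]$. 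Each is proved by induction on the given typing derivation; the role-substitution lemma must in addition handle type-variable cases via \cref{rule:t-eq}, and a more general version with a simultaneous substitution $[\vec{R'}:=\vec{R}]$ is required for the \cref{rule:c-fun} case.

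For the congruence cases (\cref{rule:c-app1,rule:c-app2,rule:c-app3,rule:c-inabs,rule:c-case,rule:c-incase,rule:c-insel}) the argument is routine: invert the premise typing, apply the induction hypothesis to the subterm being reduced, and reassemble the derivation; the role annotation on the arrow type is preserved because the induction hypothesis preserves the whole type. The base computation rules are handled as follows: \cref{rule:c-appabs,rule:c-casel,rule:c-caser} are direct applications of the term-substitution lemma after inverting the typing of an abstraction or $\ccase$; \cref{rule:c-proj1,rule:c-proj2} follow by inversion of a $\Pair$ typing; \cref{rule:c-sel} is immediate since \cref{rule:t-sel} types selections transparently; \cref{rule:c-com} is exactly the role-substitution lemma applied to a value typed by \cref{rule:t-com}; and \cref{rule:c-fun} follows from \cref{rule:t-defs} together with role substitution, which instantiates the formal role parameters $\vec{R'}$ with the actual arguments $\vec{R}$, using weakening to adjust $\Theta$.

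The main obstacle is the \cref{rule:c-str} case, which reduces to a separate lemma stating that typing is preserved under the rewriting relation $\rightsquigarrow^*$; by induction it suffices to handle a single rewriting step. The most delicate sub-cases are \cref{rule:r-absr,rule:r-absl}: from a typing derivation for $((\lambda x:T.M)\,N)\,M'$ we invert twice to obtain arrow types $T \to_\rho T''$ and $T'' \to_{\rho'} T'''$ for the two applications, and then rebuild a derivation for $(\lambda x:T.(M~M'))\,N$ of the same type $T'''$. This requires weakening on $\Theta$ for the inner abstraction, whose body now mentions the roles of $M'$, and careful choice of the new annotation on the inner arrow so that \cref{rule:t-abs}'s side condition $\rho_{\text{new}} \cup \roles(T) \cup \roles(T''') = \Theta_{\text{new}}$ holds. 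The side conditions $x \notin \fv(M')$ and $\sroles(M') \cap \roles(N) = \emptyset$ ensure, respectively, that no capture occurs and that no spurious synchronisation roles are introduced into the rebuilt abstraction. The \cref{rule:r-caser,rule:r-casel} cases are analogous, using inversion of \cref{rule:t-sel}'s counterpart for $\ccase$ together with the fact that both branches share the same result type; the selection rewriting rules \cref{rule:r-selr,rule:r-sell} are the easiest, since \cref{rule:t-sel} is type-transparent and only bookkeeping of $\Theta$ is needed.
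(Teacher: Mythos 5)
Your proposal follows the same route as the paper's proof: induction on the reduction derivation with the congruence and $\beta$/$\iota$ cases handled as in simply typed $\lambda$-calculus, a separate lemma establishing preservation under a single $\rightsquigarrow$ step by case analysis on the rewriting rules (with \cref{rule:r-absr,rule:r-absl} re-derived via \cref{rule:t-abs,rule:t-app} exactly as you describe), and a $\Theta$-weakening lemma for the \cref{rule:c-fun} case; your explicit term- and role-substitution lemmas only spell out steps the paper leaves implicit (and your simultaneous role substitution for \cref{rule:c-fun} is in fact more careful than the paper's one-line appeal to weakening). The one adjustment: state the role-substitution lemma with the side condition $\fv(V)=\emptyset$, as in the premise of \cref{rule:c-com}, since a free variable whose declared type in $\Gamma$ mentions $S$ would otherwise break the conclusion---the paper's own remark that roles outside the term only arise ``when handling free variables'' is precisely this caveat.
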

\begin{proof}
  Follows from the typing and semantic rules. See \cref{app:TypeReduc} for full details.
\end{proof}

Combining these results, we conclude that if $M$ is a well-typed, closed, choreography, then either $M$ is a value or $M$ reduces to some well-typed, closed choreography $M'$.
Since $M'$ still satisfies the hypotheses of the above results, either it is a value or it can reduce.

\section{Illustrative Examples}
\label{sec:example}

In this section, we illustrate the expressivity of \chorlam with some representative examples. 
Specifically, we use \chorlam to implement the Diffie-Hellman protocol for key exchange~\cite{DH76}, a single sign-on authentication protocol, and the Extensible Authentication Protocol~\cite{eap}. 
The first two are used in \cite{GMP20} to illustrate the expressiveness of Choral, and we show how they can be adapted to \chorlam's functional paradigm.
The latter two examples require using higher-order composition of choreographies, as they are parametrised on respectively a protocol for creating channels (functions such as $\comsymbol$ which take a value at one role and returns the same value at a different role) and a list of authentication protocols. 
\chorlam is the first theory capable of modelling these choreographies as they are parametric on roles and include functions which are parametric on other choreographies and no previous formalism includes both these features.

\subsection{Secure Communication}\label{sec:seccom}

We write a choreography for the Diffie--Hellman key
exchange protocol~\cite{DH76}, which allows two roles to agree on a shared
secret key without assuming secrecy of communications. As in \cref{ex:choice}, we use the
primitive type $\typef{Int}$.

To define this protocol, we use the local function $\funf{modPow}(R)$ of the following
type
\[\funf{modPow}(R) :\IntAt{R}\to\IntAt{R}\to\IntAt{R}\to\IntAt{R}\]
which computes powers with a given modulo. Like all local functions in \chorlam, $\funf{modPow}(R)$ is modelled by a choreography located entirely at one role. Given $\funf{modPow}(R)$,
we can implement Diffie--Hellman as the following choreography:

\begin{lstlisting}[mathescape=true]
$\fundef{diffieHellman}(P,Q) = $
  $\fun{psk}{\IntAt{P}}\; \fun{qsk}{\IntAt{Q}}\; \fun{psg}{\IntAt{P}}\;
   \fun{qsg}{\IntAt{Q}}\; \fun{psp}{\IntAt{P}}\; \fun{qsp}{\IntAt{Q}}$
    $\pair \; (\funf{modPow}(P)\; psg \; (\com{Q}{P} \; (\funf{modPow}(Q)\; qsg \; qsk \; qsp)) \; psp)$
    $\hphantom{\pair \;} (\funf{modPow}(Q)\; qsg \; (\com{P}{Q} \; (\funf{modPow}(P)\; psg \; psk \; psp)) \; qsp)$
\end{lstlisting}

Given the individual secret keys ($psk$ and $qsk$) and a previously publicly
agreed upon shared prime modulus and base ($psg = qsg, psp = qsp$), the
participants exchange their locally-computed public keys in order to arrive at a
shared key that can be used to encrypt all further communication. The
choreography has the type
\[\IntAt{P}\to\IntAt{Q}\to\IntAt{P}\to\IntAt{Q}\to\IntAt{P}\to\IntAt{Q}\to\pairt{\IntAt{P}}{\IntAt{Q}}\]
and represents the shared key as a pair of equal keys, one for each participant.

Using the key exchange protocol, we can now build a reusable utility that allows
us to achieve secure bidirectional communication between the parties, by
encrypting and decrypting messages with the shared key at the appropriate
endpoints. For this we assume two functions that allow us to encrypt and decrypt
a $\typef{String}$ message with a given $\typef{Int}$ key:
\begin{align*}
  \funf{encrypt}(R) &: \IntAt{R}\to\StringAt{R}\to\StringAt{R} \\
  \funf{decrypt}(R) &: \IntAt{R}\to\StringAt{R}\to\StringAt{R}
\end{align*}

The choreography then takes a shared key as its parameter and produces a pair of
unidirectional channels that wrap the communication primitive with the necessary
encryption based on the key:

\begin{lstlisting}[mathescape=true]
$\fundef{makeSecureChannels}(P, Q) = \fun{key}{\pairt{\IntAt{P}}{\IntAt{Q}}}$
  $\Pair\;(\fun{val}{\StringAt{P}}\;(\funf{decrypt}(Q)\; (\snd\;key)\; (\com{P}{Q}\;(\funf{encrypt}(P)\; (\fst\;key)\; val))))$
  $\hphantom{\Pair}\;(\fun{val}{\StringAt{Q}}\;(\funf{decrypt}(P)\; (\fst\;key)\; (\com{Q}{P}\;(\funf{encrypt}(Q)\; (\snd\;key)\; val))))$
\end{lstlisting}

The fact that this choreography returns a pair of channels can also be seen from
its type:
\[(\pairt{\IntAt{P}}{\IntAt{Q}})\to(\pairt{(\StringAt{P}\to \StringAt{Q})}{(\StringAt{Q}\to \StringAt{P})})\]

Using the channels is as easy as using $\boldsymbol{\mathsf{com}}$ itself and
amounts to a function application.

\subsection{Single Sign-on Authentication}
\label{sec:auth}

We now implement the single sign-on authentication protocol inspired by the OpenID
specification~\cite{openid} described in \cref{ex:authchor}. We first implement the choreography in a parametric
way that allows us to specify the means of communication, and then combine it
with the secure communication from the previous example.

The protocol involves three roles with the goal being for the client $C$ to
authenticate with the server $S$ via a third party identity provider $I$. If
authentication succeeds, the client and the server both get a unique token from
the identity provider.

We use the following local functions for working with user credentials
\begin{align*}
  \funf{username}(R) &: \typefAt{Credentials}{R}\to\StringAt{R} \\
  \funf{password}(R) &: \typefAt{Credentials}{R}\to\StringAt{R} \\
  \funf{calcHash}(R) &: \StringAt{R}\to\StringAt{R}\to\StringAt{R}
\end{align*}
computing, respectively, the username and password from a local type
$\typefAt{Credentials}{R}$ (which can be a pair, for example),
and the hash of a string with a given salt. These are mainly used by the client.

In addition, we require functions for retrieving the salt, validating the hash,
and creating a token for a given username, which are used by the identity
provider:
\begin{align*}
  \funf{getSalt}(R) &: \StringAt{R}\to\StringAt{R} \\
  \funf{check}(R) &: \StringAt{R}\to\StringAt{R}\to\BoolAt{R} \\
  \funf{createToken}(R) &: \StringAt{R}\to\StringAt{R}
  .
\end{align*}

Given the above helper functions, the authentication protocol is as follows.
Here we use if-then-else as syntactic sugar for $\ccase$:

\begin{lstlisting}[mathescape=true]
$\fundef{authenticate}(S, C, I) = \fun{credentials}{\typefAt{Credentials}{C}}$
  $\fun{comcip}{\StringAt{C}\to \StringAt{I}}\;\fun{comipc}{\StringAt{I}\to \StringAt{C}}$
  $\fun{comips}{\StringAt{I}\to \StringAt{S}}$
    $((\fun{user}{\StringAt{I}}\;(\fun{salt}{\StringAt{C}}\;(\fun{hash}{\StringAt{I}}$
      $\cif\;\funf{check}(I)\; user \; hash \; \cthen$
        $\select{I}{C}\;\lab{ok} \; (\select{I}{S}\;\lab{ok}$
          $(\fun{token}{\StringAt{I}}\;\inl\;(\pair\;(comipc\; token)\;(comips\; token)))$ 
          $(\funf{createToken}(I)\; user))$
      $\celse$
        $\select{I}{C}\;\lab{ko} \; (\select{I}{S}\;\lab{ko} \; \inr\;
        \litAt{\unit}{I}))$
    $\;\;\;\;\;(comcip\; (\funf{calcHash}(C)\; salt \; (\funf{password}(C)\; credentials))))$
    $\;\;\;(comipc\; (\funf{getSalt}(I)\; user)))$
    $\;(comcip\; (\funf{username}(C)\; credentials)))$
\end{lstlisting}

As mentioned, the choreography is parametrised over three channels between the
participants, allowing the communication to be customized ($comcip$, $comipc$
and $comips$). The client first sends their username to the identity provider
who replies with the appropriate salt. The client then calculates a salted hash
of their password and sends it back to the identity provider. Finally, the
identity provider validates the hash and either sends a token to both
participants or returns a unit. The shared token is again represented using a
pair of equal values, visible from the type of the choreography:
\begin{multline*}
  \typefAt{Credentials}{C}\to(\StringAt{C}\to\StringAt{I})\to(\StringAt{I}\to\StringAt{C})\\
  \to(\StringAt{I}\to\StringAt{S})\to((\pairt{\StringAt{C}}{\StringAt{S}})+\UnitAt{I})
\end{multline*}

By combining the choreographies $\funf{authenticate}(S, C, I)$ and
$\funf{makeSecureChannels}(P, Q)$ (from \cref{sec:seccom}), we can obtain a choreography $\funf{main}(S, C, I)$ that carries out the
authentication securely. Using $\funf{makeSecureChannels}(P, Q)$, the
participants first establish secure channels backed by encryption
keys derived using $\funf{diffieHellman}(P, Q)$. After the secure communication
is in place, the participants can execute the authentication protocol specified
by $\funf{authenticate}(S, C, I)$.

\begin{lstlisting}[mathescape=true]
$\fundef{main}(S, C, I) =$
  $(\fun{k1}{\pairt{\IntAt{C}}{\IntAt{I}}} \;
  \fun{k2}{\pairt{\IntAt{I}}{\IntAt{S}}}$
    $(\fun{c1}{\pairt{(\StringAt{C} \to \StringAt{I})}{(\StringAt{I} \to \StringAt{C})}}$
    $\;\fun{c2}{\pairt{(\StringAt{I} \to \StringAt{S})}{(\StringAt{S} \to \StringAt{I})}}$
      $(\fun{t}{(\pairt{\StringAt{C}}{\StringAt{S}})+\UnitAt{I}}\;$
        $\ccase\;t\;\cof$
          $\cleft{x} \litAt{\text{"Authentication successful"}}{C}$
          $\cright{x} \litAt{\text{"Authentication failed"}}{C})$
      $(\funf{authenticate}(S, C, I)\; (\fst\;c1) \; (\snd\;c1) \; (\fst\;c2)))$
    $(\funf{makeSecureChannels}(C, I)\; k1)
      \;(\funf{makeSecureChannels}(I, S)\; k2))$
  $(\funf{diffieHellman}(C, I)\; csk \; ipsk \; csg \; ipsg \; csp \; ipsp)$
  $(\funf{diffieHellman}(I, S)\; ipsk \; ssk \; ipsg \; ssg \; ipsp \; ssp)$
\end{lstlisting}

In this example, the client simply reports whether the authentication has
succeeded with a value, which can be checked in a larger context. Or, alternatively, we could parameterise $\mathsf{main}$ over choreographic continuations to be invoked in case of success or failure.

We denote by $\Gamma$ the set of typings we have given so far in this section.
Then we can type
\[
\{S,C,I\};\emptyset;\Gamma\vdash\funf{main}(S,C,I):\StringAt{C}
.
\]

\subsection{EAP}\label{sec:EAP}

Finally, we turn to implementing the core of the Extensible Authentication
Protocol (EAP)~\cite{eap}. EAP is a widely-employed link-layer protocol that
defines an authentication framework allowing a peer $P$ to authenticate with a
backend authentication server $S$, with the communication passing through an
authenticator $A$ that acts as an access point for the network.

The framework provides a core protocol parametrised over a set of authentication
methods (either predefined or custom vendor-specific ones), modelled as
individual choreographies  with type $\typefAt{AuthMethod}{(P,A,S)} = \StringAt{S} \to_{\{P, A\}} \BoolAt{S}\text.$
For this reason, it is desirable that the core of the
protocol be written in a way that doesn't assume any particular authentication
method.

The $\funf{eap}(P, A, S)$ choreography does exactly that by leveraging
higher-order composition of choreographies:
\begin{lstlisting}[mathescape=true]
$\fundef{eap}(P, A, S) = \fun{methods}{\ListAt{\typef{AuthMethod}}{(P, A, S)}}$
  $\funf{eapAuth}(P, A, S) \; (\funf{eapIdentity} \; \litAt{\text{"auth request"}}{S}) \; methods$

$\fundef{eapAuth}(P, A, S) = \fun{id}{\StringAt{S}} \; \fun{methods}{\ListAt{\typef{AuthMethod}}{(P, A, S)}}$
  $\cif \; \funf{empty}(P, A, S) \; methods \; \cthen$
    $\funf{eapFailure}(P, A, S) \; \litAt{\text{"try again later"}}{S}$
  $\celse$
    $\cif \; (\fst \; methods) \; id \; \cthen$
      $\select{S}{P}\;\lab{ok}\;(\select{S}{A}\;\lab{ok}\;(\funf{eapSuccess}(P, A, S) \; \litAt{\text{"welcome"}}{S}))$
    $\celse$
      $\select{S}{P}\;\lab{ko}\;(\select{S}{A}\;\lab{ko}\;(\funf{eapAuth}(P, A, S) \; id \; (\snd \; methods)))$
\end{lstlisting}

For the sake of simplicity, we have left out the definitions of a couple of helper
choreographies that are referenced in the example:
\begin{align*}
  \funf{eapIdentity}(P, A, S) &: \StringAt{S}\to_{\{P, A\}}\StringAt{S} \\
  \funf{eapSuccess}(P, A, S) &: \StringAt{S}\to(\StringAt{P} \times \StringAt{A}) \\
  \funf{eapFailure}(P, A, S) &: \StringAt{S}\to(\StringAt{P}\times\StringAt{A})
\end{align*}

The $\funf{eap}(P, A, S)$ choreography first fetches the client's identity using
$\funf{eapIdentity}(P, A, S)$ which exchanges the proper EAP packets and
delivers the client's identity to the server.

Once the identity is known, $\funf{eapAuth}(P, A, S)$ is invoked in order to try
the list of authentication methods in order until one succeeds, or the list is
exhausted and authentication fails. Just like the previous example, EAP is parametric on a choreography, or in this case a list of choreographies, $methods$. 
We use the notation for lists in $\ListAt{\typef{AuthMethod}}{(P, A, S)}$ as described in \cref{ex:choice}, while the function $\funf{empty}(P, A, S)$ allows us to determine whether the list of methods is empty.
Note that each authentication method can be an arbitrarily-complex choreography with its own communication structures that can involve all three involved roles, and implements a particular authentication method on top of EAP.

Finally, depending on the outcome of the authentication, an appropriate EAP
packet is sent with either $\funf{eapSuccess}(P, A, S)$ or $\funf{eapFailure}(P, A, S)$ to indicate the result to the client.

\section{Endpoint Projection}
\label{sec:epp}

Now that we have seen the kind of protocols we can describe with \chorlam, we want to define a distributed implementation of choreographies at the involved roles.
In order to implement a choreography, one must determine how each individual role behaves.
We introduce a distributed $\lambda$-calculus to specify these behaviours, and show how to generate
implementations of choreographies automatically.
In this context, roles are implemented by ($\lambda$-calculus) \emph{processes} and we use the two terms interchangeably. Our calculus resembles the distributed $\pi$-calculus \cite{H07} in that it puts $\lambda$-calculus processes at roles.

\subsection{Process Language}
\begin{definition}
  The syntax of behaviours, local values and local types is defined by the following grammar.
  \begin{align*}
    B &\Coloneqq L \mid f(\vec{\role R}) \mid B~B \mid \case{B}{x}{B}{x}{B} \mid \oplus_{\role R}~\ell~B \\
    &\hphantom{\Coloneqq} \mid \&_{\role R}\{\ell_1:B_1,\dots, \ell_n:B_n\}\\
    L &\Coloneqq x \mid \lambda x:T.B \mid \Inl~L \mid \Inr~L \mid \fst \mid \snd \mid \Pair~L~L \mid \unit \mid \recv{\role R} \mid \send{\role R} \mid \botm &\\
    T &\Coloneqq T \to T \mid T + T \mid T\times T \mid \unitt \mid t_i \mid \botmt
  \end{align*}
\end{definition}
Behaviours correspond directly to local counterparts of choreographic actions.
The terms from the $\lambda$-calculus, as well as pairs and sums, are unchanged (except that there are no role annotations now);
the choreographic actions generate two terms each.
Selection yields the \emph{offer} term $\&_{\role R}\{\ell_1:B_1,\dots, \ell_n:B_n\}$, which offers a number
of different ways it can continue for another process $\role{R}$ to choose from, and the \emph{choice} term
$\oplus_R~\ell~B$, which directs $\role{R}$ to continue as the behaviour labelled $\ell$.
Likewise, communication has been divided into a \emph{send} to $\role{R}$ action, $\send{R}$, and a
\emph{receive} from $\role{R}$ action, $\recv{R}$.
We also add a $\botm$ value and type, which we use when projecting choreographies or types onto roles where thy are not present, e.g. $\unit@R$ projected to $\roles{S}$ will be $\botm$ and also have the type $\botmt$.
Types are otherwise defined exactly as for choreographies, but without roles. The local type variable $t_i$ is the local type of $R_i$ in a choreography with type $t@(R_1,\dots,R_n)$.

\begin{definition}
  A network $\mathcal N$ is a finite map from a set of processes to behaviours.
\end{definition}

The parallel composition of two networks $\mathcal N$ and $\mathcal N'$ with disjoint domains,
$\mathcal N\mid\mathcal N'$, simply assigns to each process its behaviour in the network defining it.
Any network is equivalent to a parallel composition of networks with singleton domain, and therefore
we often write $R_1[B_1]\mid\ldots\mid R_n[B_n]$ for the network where process $R_i$ has behaviour
$B_i$.

\begin{figure}
	\begin{spreadlines}{\rulesvsep}
  \begin{eqnarray*}&
     \infer[{\rtag{\rlabel{\rname{NSend}}{rule:n-send}}}]{\send{\role R}~L\xrightarrow{\send{\role R}~L}_{\DP} \botm}{\fv(L)=\emptyset}
    \qquad
    \infer[{\rtag{\rlabel{\rname{NRecv}}{rule:n-recv}}}]
     {\recv{\role R}~\botm\xrightarrow{\recv{\role R}~L}_{\DP} L}
     {}
    \\&
     \infer[{\rtag{\rlabel{\rname{NCom}}{rule:n-com}}}]
     {S[B_1] \mid R[B_2]\xrightarrow{\tau_{\role S,\role R}}_{\DN} \role S[B'_1] \mid \role R[B'_2]}
     {B\xrightarrow{\send{\role S}~L}_{\DN(S)} B'_1 
     & B_2\xrightarrow{\recv{\role R}~L[\role S:=\role R]}_{\DN(R)} B'_2}
   \\&
   \infer[{\rtag{\rlabel{\rname{NCho}}{rule:n-cho}}}]
     {\oplus_{\role R}~l~B\xrightarrow{\oplus_{\role R}~l}_{\DP} B}{}
    \qquad
    \infer[{\rtag{\rlabel{\rname{NOff}}{rule:n-off}}}]
    {\&_{\role R}\{\ell_1:B_1,\dots,\ell_n:B_n\}\xrightarrow{\&_{\role R}\ell_i}_{\DP} B_i}{}
     \\& 
     \infer[{\rtag{\rlabel{\rname{NOff2}}{rule:n-off2}}}]{\&_{\role R}\{\ell_1:B_1,\dots,\ell_n:B_n\}\xrightarrow{\mu}_{\DP} \&_{\role R}\{\ell_1:B_1',\dots,\ell_n:B_n'\}}{B_i\xrightarrow{\mu}_{\DP}B_i'\text{ for }1\leq i\leq n & \mu\in\{\tau,\lambda\}}
    \\&
    \infer[{\rtag{\rlabel{\rname{NCho2}}{rule:n-cho2}}}]{\oplus_{\role R}~l~B\xrightarrow{\mu}_{\DP} \oplus_{\role R}~l~B'}{B\xrightarrow{\mu}_{\DP} B' & \mu\in\{\tau,\lambda\}}
     \qquad 
     \infer[{\rtag{\rlabel{\rname{NSel}}{rule:n-sel}}}]
     	{\role S[B_1] \mid R[B_2]\xrightarrow{\tau_{\role S,\role R}}_{\DN} \role S[B'_1] \mid \role R[B'_2]}
     	{B_1\xrightarrow{\oplus_{\role R}~\ell}_{\DN(S)} B'_1 
     	& B_2\xrightarrow{\&_{\role S}~\ell}_{\DN(R)} B'_2} 
    \\&
    \infer[{\rtag{\rlabel{\rname{NAbsApp}}{rule:n-absapp}}}]
     {(\lambda x:T.B)~L\xrightarrow{\tau}_\DP B[x:=L]}{}
     \qquad 
     \infer[\rtag{\rlabel{\rname{NInAbs}}{rule:n-inabs}}]
     	{\lambda x:T.B\xrightarrow{\lambda}_D \lambda x:T.B'}
     	{B\xrightarrow{\mu}_\DP B' & \mu\in\{\tau,\lambda\} }
    \\&
     \infer[{\rtag{\rlabel{\rname{NApp1}}{rule:n-app1}}}]
     	{B~B'\xrightarrow{\mu'}_\DP B''~B'}
     	{B\xrightarrow{\mu}_\DP B'' & \text{if }\mu=\lambda \text{ then } \mu'=\tau \text{ else } \mu'=\mu}
    \\&
     \infer[{\rtag{\rlabel{\rname{NApp2}}{rule:n-app2}}}]
     	{L~B\xrightarrow{\mu}_\DP L~B'}
     	{B\xrightarrow{\mu}_\DP B'}  
     	\qquad
     \infer[{\rtag{\rlabel{\rname{NApp2}}{rule:n-app3}}}]
     	{B~B'\xrightarrow{\tau}_\DP B~B''}
     	{B'\xrightarrow{\tau}_\DP B''}  \\&
     	\infer[{\rtag{\rlabel{\rname{NCase}}{rule:n-case}}}]
     	{\case{B}{x}{B'}{x'}{B''}\xrightarrow{\mu}_\DP \case{B'''}{x}{B'}{x'}{B''}}
     	{B\xrightarrow{\mu}_\DP B'''} \\&
     	\infer[{\rtag{\rlabel{\rname{NCase2}}{rule:n-case2}}}]
     	{\case{B}{x}{B_1}{x'}{B_2}\xrightarrow{\mu}_\DP \case{B}{x}{B_1'}{x'}{B_2'}}
     	{B_1\xrightarrow{\mu}_\DP B_1' & B_2\xrightarrow{\mu}_\DP B_2' & \mu\in\{\lambda,\tau\}}  
    \\&
     \infer[{\rtag{\rlabel{\rname{NPro}}{rule:n-pro}}}]
     	{\role R[B]\xrightarrow{\tau_R}_{\DN} R[B']}
     	{B\xrightarrow{\tau}_{\DN(R)} B'}
    \qquad
     	\infer[{\rtag{\rlabel{\rname{NPar}}{rule:n-par}}}]
     		{\mathcal{N}\mid\mathcal{N}'\xrightarrow{\tau_{\mathbf{R}}}_{\DN} \mathcal{N}''\mid\mathcal{N}'}
     		{\mathcal{N}\xrightarrow{\tau_{\mathbf{R}}}_{\DN} \mathcal{N}''} \\&	
     \infer[\rtag{\rlabel{\rname{NStr}}{rule:n-str}}]{B\xrightarrow{\mu}_\DP B'}{B\rightsquigarrow^* B'' & B''\xrightarrow{\mu} B'}
  \end{eqnarray*}
  \end{spreadlines}
  \caption{Network semantics (representative rules).}
  \label{fig:network-semantics}
\end{figure}
\begin{figure}[t]
\begin{spreadlines}{\rulesvsep}
  \begin{eqnarray*}&
  \infer[{\rtag{\rlabel{\rname{LR-AbsR}}{rule:lr-absr}}}]
  {((\lambda x.B)~B')~B''\rightsquigarrow (\lambda x.B~B'')~B')} 
  {}
  \qquad
\infer[{\rtag{\rlabel{\rname{LR-AbsL}}{rule:lr-absl}}}]{B''~((\lambda x.B)~B')\rightsquigarrow (\lambda x.B''~B)~B')}{\roles(B'')=\emptyset}  
&\\&
\infer[{\rtag{\rlabel{\rname{LR-CaseR}}{rule:lr-caser}}}]
{\begin{array}{l}(\case{B}{x}{B_1}{x}{B_2})~B'\rightsquigarrow \\ \qquad\qquad\qquad\qquad \case{B}{x}{(B_1~B')}{x}{(B_2~B')}\end{array}}
{}
&\\&
\infer[{\rtag{\rlabel{\rname{LR-CaseL}}{rule:lr-caseL}}}]{\begin{array}{l}B'~(\case{B}{x}{B_1}{x}{B_2})\rightsquigarrow \\ \qquad\qquad\qquad\qquad \case{B}{x}{(B'~B_1)}{x}{(B'~B_2)}\end{array}}{\roles(B')=\emptyset}
&\\&
\infer[{\rtag{\rlabel{\rname{LR-OffL}}{rule:lr-offl}}}]{B~(\&_{\role R}\{\ell_1:B_1,\dots,\ell_n:B_n\})\rightsquigarrow \&_{\role R}\{\ell_1:B~B_1,\dots,\ell_n:B~B_n\}}{\roles(B)=\emptyset}
&\\&
\infer[{\rtag{\rlabel{\rname{LR-OffR}}{rule:lr-offr}}}]
{(\&_{\role R}\{\ell_1:B_1,\dots,\ell_n:B_n\})~B\rightsquigarrow \&_{\role R}\{\ell_1:B_1~B,\dots,\ell_n:B_n~B\}}
{}
&\\&
\infer[{\rtag{\rlabel{\rname{LR-ChoL}}{rule:lr-chol}}}]{B'~(\oplus_{\role R}~l~B)\rightsquigarrow \oplus_{\role R}~l~(B'~B)}{\roles(B')=\emptyset}
\qquad
\infer[{\rtag{\rlabel{\rname{LR-ChoR}}{rule:lr-chor}}}]
{(\oplus_{\role R}~l~B)~B'\rightsquigarrow \oplus_{\role R}~l~(B~B')}
{}
&\\&
\infer[{\rtag{\rlabel{\rname{LR-Botm}}{rule:lr-botm}}}]
{\botm~\botm\rightsquigarrow \botm}
{}
\end{eqnarray*}
  \end{spreadlines}
  \caption{Rewriting of processes.}
  \label{fig:rewriteP}
\end{figure}

The semantics of networks is given as a labelled transition system.
Representative rules that define transitions are included in \cref{fig:network-semantics}.

Most of these rules are similar to the ones for choreographies; the difference is that communications and selections now require synchronisation between the processes implementing the two local actions.
This is achieved by matching the appropriate labels on the reductions.

Our network semantics are unusual because out-of-order execution is allowed not only at the choreography level, as is common, but also at the process level. This is necessary because an action in the choreography can correspond to an internal action at multiple roles.

\begin{example}
Consider the choreography \begin{align*}M={}&\ccase\;N@S\;\cof\;\cleft{x}{((\lambda y:T@S\rightarrow_{\emptyset} T@R.y~x)~\com{S}{R})}\\&
\hphantom{\ccase\;N@S\;\cof\;}\cright{x'}{((\lambda y:T@S\rightarrow_{\emptyset} T@R.y~x')~\com{S}{R})}
\end{align*}
keeping in mind that $\com{S}{R}$ is a value and can be used as such in an application. Since the conditional and $N$ are located at $S$, the part of $M$ located at $R$ is an application of a receive, $(\lambda y:\botmt\rightarrow T.y~\botm)~\recv{S}$, which lets us perform the application resulting in $\recv{S}~\botm$. Since we have out-of-order-execution rules in our choreography semantics, we can perform the corresponding action in the choreography 
\[M\xrightarrow{\tau,\emptyset} \case{N@S}{x}{\com{S}{R}~x}{x'}{\com{S}{R}~x'}\]
Since $\com{S}{R}$ is located at both $S$ (as $\send{R}$) and $R$ (as $\recv{R}$), this action at $M$ also corresponds to an application at $S$. However, at $S$ this application is guarded by the conditional,with $M$ at $S$ corresponding to the process $\case{N}{x}{((\lambda y:T\rightarrow\botmt.y~x)~\send{R})}{x'}{((\lambda y:T\rightarrow\botmt.y~x')~\send{R})}$. We therefore need to allow out-of-order execution at the processes as well as the choreographies.

Note that since $R$ must wait for $S$ to be ready to synchronise before performing the receive, we do not allow out-of-order execution of communications or selections.
\eoe
\end{example}

In addition, we have another rewriting rule, \cref{rule:lr-botm}, for collecting $\botm$ processes. The need for this rule is illustrated by \cref{ex:botm}.

\begin{example}\label{ex:botm}
Consider the choreography $(\com{S}{R}~(\lambda x:T@S.M@S))~(\com{S}{R}~V@S)$, where a function and a value are both sent from $\role{S}$ to $\role{R}$ before being applied at $\role{R}$. At $\role{R}$ this is the process $(\recv{S}~\botm)~(\recv{S}~\botm)$, which executes as we want. However, at $\roles{S}$ this choreography becomes the process $(\send{R} (\lambda x:T.M))~(\send{R}~V)$, which after the two communications are executed becomes $\botm~\botm$. After the choreography has executed both communications it is $(\lambda x:T@R.M@R)~V@R$, which is located entirely at $R$, and therefore at $S$ becomes $\botm$. We therefore need a way to make the application $\botm~\botm$ in a process become only one $\botm$. The rewriting rule \cref{rule:lr-botm} serves this purpose.
\eoe
\end{example}

\begin{figure}[t]
	\begin{spreadlines}{\rulesvsep}
  \begin{eqnarray*}&
     \infer[{\rtag{\rlabel{\rname{NTChor}}{rule:nt-cho}}}]
     	{\Sigma;\Gamma\vdash \oplus_{\role R}~\ell~B:T}
     	{ \Sigma;\Gamma\vdash B:T}
    \qquad
     \infer[{\rtag{\rlabel{\rname{NTOff}}{rule:nt-off}}}]
     	{\Sigma;\Gamma\vdash \&_{\role R}\{\ell_1:B_1,\dots\ell_n:B_n\}:T}
     	{\Sigma;\Gamma\vdash B_i:T \text{ for }1\leq i\leq n} 
    \\&
      \infer[{\rtag{\rlabel{\rname{NTSend}}{rule:nt-send}}}]
     {\Sigma;\Gamma\vdash \send{\role R}:T \rightarrow \botmt}
     {}
    \qquad
    \infer[{\rtag{\rlabel{\rname{NTRecv}}{rule:nt-recv}}}]
     {\Sigma;\Gamma\vdash \recv{\role R}:\botmt \rightarrow T}
     {}
     \\&
     \infer[{\rtag{\rlabel{\rname{NTbotm}}{rule:nt-botm}}}]
     	{\Sigma;\Gamma\vdash \botm:\botmt}
       {}
     	\qquad
     \infer[{\rtag{\rlabel{\rname{NTApp2}}{rule:nt-app2}}}]
     	{\Sigma;\Gamma\vdash B~B':\botmt}
     	{ \Sigma;\Gamma\vdash B:\botmt & \Sigma;\Gamma\vdash B':\botmt}
  \end{eqnarray*}
  \end{spreadlines}
  \caption{Typing rules for behaviours (representative rules).}
  \label{fig:network-typing}
\end{figure}
Our calculus includes a typing system for typing behaviours.
Typing judgements now have the form $\Sigma;\Gamma\vdash B:T$.
Most of the rules are direct counterparts to those in \cref{fig:typing}, obtained by removing $\Theta$ and any side conditions involving roles. We also add the $\botmt$ type, used for typing $\botm$ values or choreographies resulting $\botm$ values, including applications of two such choreographies, which can always be reduced to $\botm$ using \cref{rule:lr-botm}. 
A representative section of the new rules are given in \cref{fig:network-typing}.%

\subsection{Endpoint Projection (EPP)}
We now have the necessary ingredients to define the endpoint projection (EPP) of a choreography $M$ for an individual role $\role{R}$ given a typing derivation showing that $\Theta;\Sigma;\Gamma\vdash M:T$ for some type $T$.
Formally, the definition of EPP depends on this typing derivation; but to keep notation simple we write $\epp MR$ and refer to the type $T$ associated to $M$ in the derivation as $\type(T)$.

Intuitively, the projection simply translates each choreography action to the corresponding local behaviour.
For example, a communication action projects to a send (for the sender), a receive (for the receiver), or a unit (for the remaining processes).
In order to define EPP precisely, we need a few additional ingredients, which we briefly describe.

Projecting a term requires knowing the roles involved in its type.
This is implicitly given in the derivation provided to EPP.
It can easily be shown by structural induction that, if the derivation contains two different typing judgements for the same term, then the roles involved in that term's type are the same.
So we write without ambiguity $\roles(\type(M))$ for this set of roles.

The second ingredient concerns knowledge of choice.
When projecting $\case Mx{M'}y{M''}$, roles not occurring in $M$ cannot know what branch of the choreography is chosen; therefore, the projections of $M'$ and $M''$ must be combined in a uniquely defined behaviour.
This is done by means of a standard partial \emph{merge} operator ($\merge$), adapted from~\cite{CHY12,CM20,HYC16}, whose key property is
\[\&\{\ell_i:B_i\}_{i\in I} \merge \&\{\ell_j:B'_j\}_{j\in J}= \&\left(\{\ell_k:B_k\merge B'_k\}_{k\in I\cap J} \cup \{\ell_i:B_i\}_{i\in I\setminus J}\cup \{\ell_j:B'_j\}_{j\in J\setminus I}\right)\]
and which is homomorphically defined for the remaining constructs (see the \cref{sec:appendix} for the full definition).
Merging of incompatible behaviours is undefined.

\begin{definition}
  The EPP of a choreography $M$ for role $\role{R}$ is defined by the rules in \cref{fig:projection}.

  To project a network from a choreography, we therefore project the choreography for each role and combine the results in parallel:
  $\epp M{}=\prod_{R\in \roles(M)} R\left[\epp MR\right]$.
\end{definition}

\begin{figure}[p]
\begin{small}
  \begin{eqnarray*}
    \lefteqn{\mbox{Choreographies:}} \\
    && \epp{M~N}R =
    \begin{dcases*}\epp MR~\epp NR & if $R\in \roles(\type(M))$ or $R\in\roles(M)\cap\roles(N)$ \\
    \botm & if $\epp MR=\epp NR=\botm$ \\ 
    \epp MR & if $\epp NR=\botm$ \\
    \epp NR & otherwise \\
\end{dcases*} \\[5pt]
    && \epp{\lambda x:T.M}R =
    \begin{dcases*} \lambda x:\epp{T}{R}.\epp MR & if $R\in \roles(\type(x:T.M))$ \\
      \botm & otherwise \end{dcases*} \\[5pt]
    && \epp{\case{M}{x}{N}{x'}{N'}}R = \\[5pt]
    && \qquad
    \begin{dcases*} \case{\epp MR}{x}{\epp NR}{x'}{\epp{N'}R} & if $R\in \roles(\type(M))$ \\
    \botm & if $\epp MR= \epp NR= \epp {N'}R=\botm$ \\
    \epp MR & if $\epp NR=\epp {N'}R=\botm$ \\
    \epp NR \merge \epp {N'}R & if $\epp MR=\botm$ \\
      (\lambda x'':\botmt.\epp NR \merge \epp{N'}R)~\epp MR & for some $x''\notin {\fv(N) \cup \fv(N')}$ \\ & otherwise \end{dcases*} \\[5pt]
    && \epp{\select{S}{S'}~\ell~M}R =
    \begin{dcases*}\oplus_{S'}~\ell~\epp MR & if $R=S\neq S'$ \\
      \&_S\{\ell:\epp MR\} & if $R=S'\neq S$ \\
      \epp MR & otherwise \end{dcases*} \\[5pt]
    && \epp{\com{S}{S'}}R =
    \begin{dcases*} \lambda x:\epp{T}{R}.x & if $R=S=S'$ and $\type(\com{S}{S'})= T\rightarrow_{\emptyset} T'$\\
      \send{S'} & if $R=S\neq S'$\\
      \recv{S} & if $R=S'\neq S$\\
      \botm & otherwise \end{dcases*} \\
    && \epp{\litAt{\unit}{S}}R= \begin{dcases*} \unit & if $S=R$ \\
    \botm & otherwise
    \end{dcases*} 
    \qquad\qquad
    \epp xR =
    \begin{dcases*} x & if $R\in \roles(\type(x))$ \\ \botm & otherwise \end{dcases*}
    \\
    &&
    \epp {f(\vec{R})}R=
    \begin{dcases*}f_i( R_1,\dots,R_{i-1},R_{i+1},\dots, R_n) & if $\vec{R}= R_1,\dots,R_{i-1},R,R_{i+1},\dots, R_n$ \\
    \botm & otherwise
    \end{dcases*}
    \\[5pt]
    \lefteqn{\mbox{Types:}} \\
    && \epp{\unitt@S}R = \begin{dcases*} \unitt & if $S=R$ \\
    \botmt & otherwise
    \end{dcases*} 
    \qquad
    \epp{T\times T'}R =
    \begin{dcases*} \epp TR \times \epp{T'}R & if $R\in\roles(T\times T')$ \\
      \botmt & otherwise \end{dcases*} \\[5pt]
    && \epp{t@\vec{R}}R =
    \begin{dcases*} t_i & if $\vec{R}= R_1,\dots,R_{i-1},R,R_{i+1},\dots, R_n$ \\ \botmt & otherwise \end{dcases*}
    \\[5pt]
    && \epp{T\rightarrow_\rho T'}R =
    \begin{dcases*}\epp TR \rightarrow \epp{T'}R & if $R\in \rho\cup\roles(T)\cup\roles(T')$ \\
      \botmt & otherwise \end{dcases*}  \\[5pt]
    \lefteqn{\mbox{Definitions:}} \\
    && \epp D{} = \{f_i( R_1,\dots, R_{i-1},R_{i+1},\dots, R_n) \mapsto \epp{D(f( R_1,\dots, R_n))}{R_i} \mid f( R_1,\dots, R_n)\in\mathsf{domain}(D) \}
  \end{eqnarray*}
  \end{small}
  \caption{Projecting a choreography in \chorlam onto a role}
  \label{fig:projection}
\end{figure}

Intuitively, projecting a choreography to a role that is not involved in it returns a $\botm$.
More complex choreographies, though, may involve roles that are not shown in their type.
This explains the first clause for projecting an application: even if $R$ does not appear in the type of $M$, it may participate in interactions inside $M$.
A similar observation applies to the projection of $\ccase$, where merging is also used.

Selections and communications follow the intuition given above, with one interesting detail:
self-selections are ignored, and self-communications project to the identity function.
This is different from many standard choreography calculi, where self-communications are not allowed---we do not want to impose this in \chorlam, since one of the planned future developments for this language is to add polymorphism.

Likewise, projecting a type yields $\botmt$ at any role not used in that type.
The projection of a set of function definitions maps choreography names to behaviours.

\begin{proposition}
  Let $M$ be a closed choreography.
  If $\Theta;\Sigma;\Gamma\vdash M:T$, then for any role $\role{R}$ appearing in $M$,
  we have that $\epp\Sigma{};\epp\Gamma{}\vdash\epp MR:\epp TR$, where $\epp\Sigma{}$ and
  $\epp\Gamma{}$ are defined by applying EPP to all the types occurring in those sets.
\end{proposition}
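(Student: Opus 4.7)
The plan is to prove the proposition by structural induction on the derivation of $\Theta;\Sigma;\Gamma\vdash M:T$, strengthening the statement to cover every role $R\in\Theta$ (not only those syntactically appearing in $M$) so that the induction goes through when we inspect sub-derivations. For each typing rule in \Cref{fig:typing} I would then verify that every clause of the projection rules in \Cref{fig:projection} produces a well-typed behaviour under the process typing rules of \Cref{fig:network-typing} and their counterparts in the appendix. Before entering the induction I would record three auxiliary facts: a role-containment invariant saying that if $\Theta;\Sigma;\Gamma\vdash M:T$ then every role of $M$ lies in $\Theta$ and whenever $R\not\in\roles(T')$ one has $\epp{T'}{R}=\botmt$; that merging preserves typing, i.e.\ if $B_1:T$, $B_2:T$ and $B_1\merge B_2$ is defined then $B_1\merge B_2:T$, proved by induction on the behaviours being merged; and a substitution lemma $\epp{T[\vec{R}:=\vec{R'}]}{R'_i}=\epp{T}{R_i}$, needed for \cref{rule:t-eq} and the role instantiation in \cref{rule:t-fun}.

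For the inductive step, the cases \cref{rule:t-var}, \cref{rule:t-fun}, \cref{rule:t-abs}, \cref{rule:t-sel}, and those for sums, products and units are essentially routine: I would apply the inductive hypothesis to the premises and close with the corresponding process typing rule, noting that when $R$ does not occur in the type the projection yields $\botm$ and the expected type collapses to $\botmt$, so \cref{rule:nt-botm} suffices. For \cref{rule:t-com} I would case-split on whether $R$ equals the sender, the receiver, both, or neither, invoking \cref{rule:nt-send} or \cref{rule:nt-recv} as appropriate; the side condition $\roles(T)=\{S\}$ guarantees that the projected function types align with $\epp{T}{R}\to\botmt$ or $\botmt\to\epp{T[S:=R]}{R}$ in the non-trivial cases.

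The real work lies in \cref{rule:t-app} and the case for $\ccase$. For an application $N~M$ with $N:T\to_\rho T'$ and $M:T$, I would split according to the four clauses defining $\epp{N~M}{R}$. When $R\in\roles(\type(N))=\rho\cup\roles(T)\cup\roles(T')$ the inductive hypothesis gives $\epp{N}{R}:\epp{T}{R}\to\epp{T'}{R}$ and $\epp{M}{R}:\epp{T}{R}$, so the standard application typing rule closes the case. In the other three clauses $R\not\in\roles(\type(N))$, which through the role-containment invariant forces both $\epp{T}{R}=\botmt$ and $\epp{T'}{R}=\botmt$; the inductive hypothesis then yields $\epp{N}{R}:\botmt$ and $\epp{M}{R}:\botmt$, so that $\botm$, $\epp{N}{R}$ alone, $\epp{M}{R}$ alone, or the application $\epp{N}{R}~\epp{M}{R}$ all inhabit $\botmt=\epp{T'}{R}$ by \cref{rule:nt-botm} or \cref{rule:nt-app2}. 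The case for $\ccase$ proceeds analogously, with the additional ingredient of the merge lemma to type $\epp{N}{R}\merge\epp{N'}{R}$; the ``otherwise'' clause that introduces an extra $\beta$-redex $(\lambda x'':\botmt.\,(\epp{N}{R}\merge\epp{N'}{R}))~\epp{M}{R}$ types cleanly because $R\not\in\roles(\type(M))$ forces $\epp{M}{R}:\botmt$, matching the declared domain of the abstraction.

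The main obstacle I anticipate is the bookkeeping in the application and $\ccase$ cases, verifying in each sub-clause that the types of the two sub-projections actually line up. This relies crucially on the annotation $\rho$ on arrow types being a true over-approximation of the roles participating in the function, which is precisely the property motivating its inclusion in \Cref{sec:typing}. A secondary concern is that typing preservation under EPP tacitly assumes that the merges encountered during projection are defined; I would state this assumption explicitly (as is standard for results of this shape) and leave to a separate ``EPP is defined'' result the argument that well-typed choreographies enjoying knowledge of choice are everywhere projectable.
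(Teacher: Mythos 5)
Your proposal is correct and follows essentially the same route as the paper, whose proof of this proposition is simply ``straightforward from the typing and projection rules'': you carry out the induction on the typing derivation that the paper leaves implicit, checking each projection clause against the process typing rules and using \cref{rule:nt-botm} and \cref{rule:nt-app2} for roles outside the relevant types. The auxiliary facts you isolate (the role-containment invariant, preservation of typing under merging, and the role-substitution lemma) and the caveat about definedness of merging are exactly the bookkeeping the paper's one-line proof elides.
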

\begin{proof}
  Straightforward from the typing and projection rules.
\end{proof}

\begin{example}
  The projections of the choreographies in \cref{ex:choice} are the following.
\begin{lstlisting}[mathescape=true]
$\epp{D(\funf{remoteFunction}(R, S))}1(S) = \fun{f}{(\typef{Int}\to
\typef{Int})}\;\fun{val}{\botmt}\;\send{S}\;(f\;(\recv{S}\;\botmt))$
$\epp{D(\funf{remoteFunction}(R, S))}2(R) =
\fun{f}{\botmt}\;\fun{val}{\typef{Int}}\;\recv{R}\;(\send{R}\;val)$
\end{lstlisting}
  This example illustrates the key features discussed in the text: projection of communications as
  two dual actions; and the way
  function applications are projected when the role does not appear in the function's type.
  \eoe
\end{example}

We describe what we consider adequacy of choreographic projection, formally defined in \cref{def:adeq}. Adequacy means that for any context $C[]$ the projection of $C[M]$ at $R$ will be the same for any $M$ which does not involve $R$. The definition of contest is as expected and can be found in \cref{sec:appendix}. Adequacy is usual (and expected) in choreographic programming, because the projection of $R$ should not be generating junk code based on the behaviour of other roles.

\begin{definition}[Adequacy]\label{def:adeq}
An EPP is adequate if $\epp{C[M]}{R}=\epp{C[N]}{R}$ for any role $R$, context $C[]$, and choreographies $M$ and $N$ such that $R\notin \roles(M,N,\type(M),\type(N))$.
\end{definition}

Adequacy ensures that if we modify a part of the choreography in which $R$ is not involved, we do not need to recompile the projection of $R$ (this is important for cloud computing, e.g., DevOps, library management, and modularity in general). The following proposition, \cref{thm:RedToUnit}, shows that our EPP is adequate.

\begin{proposition}\label{thm:RedToUnit}
Given a closed choreography, $M$, if $\Theta;\Sigma;\Gamma\vdash M:T$ and $R\notin \roles(M)\cup \roles(T)$ then $\llbracket M\rrbracket_R=\botm$.
\end{proposition}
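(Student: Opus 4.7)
We proceed by structural induction on $M$ (equivalently on the typing derivation of $\Theta;\Sigma;\Gamma \vdash M : T$). The guiding observation is that every clause of the projection in \cref{fig:projection} producing a non-$\botm$ result is guarded by $R$ appearing in the roles of some sub-term or sub-type; the hypothesis $R \notin \roles(M) \cup \roles(T)$ excludes exactly those guards, funnelling us into the $\botm$-producing branches. The base cases follow by direct inspection: for $f(\vec{R'})$ the side condition $\roles(T) \subseteq \vec{R''}$ of \cref{rule:t-fun} combined with the instantiation $T = T'[\vec{R''} := \vec{R'}]$ forces $R \notin \vec{R'}$; the cases of $\litAt{\unit}{S}$, $\com{S}{S'}$ and $\select{S}{S'}$ are analogous pattern matches whose ``otherwise'' branches yield $\botm$ precisely under our hypothesis; and the compound constructors (abstractions, injections, pairs) have their non-$\botm$ clause gated on $R$ being in the roles of the associated type, which the hypothesis excludes.

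The delicate case is application $M_1\, M_2$, and, symmetrically, the case analysis. Here the projection selects the non-$\botm$ form $\epp{M_1}{R}\,\epp{M_2}{R}$ whenever $R \in \roles(\type(M_1))$, and $\type(M_1) = T' \rightarrow_\rho T$, so we must rule out $R \in \roles(T') \cup \rho$ and not merely $R \notin \roles(T)$. To handle this, I would strengthen the inductive claim to: \emph{if $R \notin \Theta$ then $\epp{M}{R} = \botm$}. A short separate induction on the typing rules establishes the invariant $\roles(\type(M)) \subseteq \Theta$ across any derivation, with \cref{rule:t-abs} as the decisive case since its premise $\rho \cup \roles(T) \cup \roles(T') = \Theta' \subseteq \Theta$ directly yields the inclusion. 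The stated proposition then follows by choosing a canonical derivation tightening $\Theta$ to $\roles(M) \cup \roles(T)$, after shrinking the $\rho$-annotations of internal abstractions to their minimal values; residual roles that survive this tightening must, by the shape of \cref{rule:t-abs}, already appear syntactically in $\roles(M)$ and so are ruled out by the hypothesis.

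With the strengthening in place, the application case closes cleanly: the guard $R \in \roles(\type(M_1))$ fails, both subterm projections yield $\botm$ by the induction hypothesis, and the compound $\botm\,\botm$ collapses to $\botm$ via the rewriting rule \cref{rule:lr-botm}. The case-analysis case is analogous, additionally using that $\botm \merge \botm = \botm$ by the homomorphic extension of the merge operator. The main obstacle is precisely this reformulation of the inductive statement in terms of $\Theta$, together with the accompanying check that a minimal canonical derivation exists; once these are in place, the rest of the proof is a mechanical verification against the clauses of \cref{fig:projection}.
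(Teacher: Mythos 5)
Your overall route---induction on the typing derivation, checking that the hypothesis drives every clause of \cref{fig:projection} into its $\botm$ branch---is the same as the paper's, which cites \cref{thm:Induction,thm:ValUnit} and treats the induction as straightforward. You go beyond the paper in one respect that is a genuine improvement: you notice that in the application case the guard is $R\in\roles(\type(M_1))$ with $\type(M_1)=T'\rightarrow_\rho T$, so the hypothesis $R\notin\roles(M_1~M_2)\cup\roles(T)$ does not by itself exclude $R\in\roles(T')\cup\rho$. Your strengthened induction hypothesis ($R\notin\Theta$ implies $\epp{M}{R}=\botm$), backed by the invariant $\roles(\type(M))\subseteq\Theta$, is sound and closes every case; the same subtlety in fact also arises at \cref{rule:t-inl,rule:t-inr}, where the unused summand of the sum type is unconstrained by the term and could smuggle $R$ into $\roles(\type(\cdot))$ of a subterm.

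The gap is in your final step. EPP is defined relative to a fixed typing derivation, so you cannot discharge the stated proposition by ``choosing a canonical derivation'': switching derivations changes $\epp{M}{R}$ itself, not merely the bookkeeping (with a bloated internal $\rho$ one obtains, e.g., $(\lambda x:\botmt.\botm)~\botm$, which is $\equiv$-equivalent to $\botm$ but not equal to it). Either the proposition is read as being about the derivation actually used for projection---in which case you must argue that this derivation can be taken minimal, and the existence and relevant properties of such a minimal derivation are asserted rather than proved in your sketch---or the conclusion must be weakened to $\epp{M}{R}\equiv\botm$. As written, your last paragraph establishes the claim for a different projection than the one in the statement. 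Two smaller points: when both subprojections of an application are $\botm$, the second clause of the projection returns $\botm$ directly, so no term $\botm~\botm$ is ever formed and \cref{rule:lr-botm} (a runtime rewriting rule on behaviours) plays no role here; likewise the $\ccase$ projection has an explicit clause returning $\botm$ when all three subprojections are $\botm$, so $\botm\merge\botm$ is not needed either.
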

\begin{proof}
Straightforward from \cref{thm:Induction,thm:ValUnit} and induction on $\Theta;\Sigma;\Gamma\vdash M:T$.
\end{proof}

We now show that there is a close correspondence between the executions of choreographies and of
their projections.
Intuitively, this correspondence states that a choreography can execute an action if, and only if, its projection can execute the same action, and both transition to new terms in the same relation.
However, this is not completely true: if a choreography $C$ reduces by \cref{rule:c-case}, then the result has fewer branches than the network obtained by performing the corresponding reduction in the projection of $C$.

In order to capture this, we revert to the notion of pruning~\cite{CHY12,CM20}, defined by $B\sqsupset B'$ iff $B\merge B'=B$.
Intuitively, if $B\sqsupset B'$, then $B$ offers the same and possibly more behaviours than $B'$.
This notion extends to networks by defining $\mathcal{N}\sqsupset\mathcal{N}'$ to mean that, for any role $\role{R}$, $\mathcal{N}(R)\sqsupset \mathcal{N}'(R)$.

\begin{example}
  \label{ex:prune}
  Consider the choreography
  \[
   C=\case{\Inl~\litAt{\unit}{R}}{x}{\select{R}{S}~\lab{left}~\litAt{0}{S}}{y}{\select{R}{S}~\lab{right}~\litAt{1}{S}} \, \text{.}
  \]
  Its projection for role $\role{S}$ is $\epp{C}{\role{S}}=\&_\role{R}\{\lab{left}:0,\lab{right}:1\}$.

  After entering the conditional in the choreography, $C$ reduces to
  $C'=\select{R}{S}~\lab{left}~\litAt{0}{S}$, whereas $S$ does not have a corresponding action and its
  behaviour remains $\&_\role{R}\{\lab{left}:0,\lab{right}:1\}$, which is not the projection of $C'$.
  However,
  $\&_\role{R}\{\lab{left}:0,\lab{right}:1\}\merge\&_\role{R}\{\lab{left}:0\}=\&_\role{R}\{\lab{left}:0,\lab{right}:1\}$,
  so $\epp{C'}{\role{S}}$ is a pruning of this behaviour.
  \eoe
\end{example}

In addition to pruning, we need to equate behaviours that differ only by applications to $\botm$ like $P$ and $(\lambda x:\botmt.P)~\botm$ introduced by the projection of applications.%
\begin{definition}\label{def:equiv}
We define $\equiv$ as the least equivalence relation on behaviours that is closed under context and $P\equiv (\lambda x:\botmt.P)~\botm$ for any behaviour $P$.
We write $\mathcal{N} \equiv \mathcal{N}'$ for the pointwise extension of $\equiv$ to networks (i.e., $\Pi_{R} R[P_R] \equiv \Pi_R R[P_R']$ iff $P_R\equiv P_R'$ for all $R$s) and $\mathcal{N}\sqsupseteq \mathcal{N}'$ if there is a network $\mathcal{N}''$ such that $\mathcal{N}\sqsupset \mathcal{N''}$ and $\mathcal{N}''\equiv \mathcal{N}'$.
\end{definition}%

We can finally show that the EPP of a choreography can do all that (completeness) and only what (soundness) the original choreography does.

\begin{theorem}[Completeness]\label{thm:ChorToNet}
  Given a closed choreography $M$, if $M\xrightarrow{\tau,\mathbf{R}}_D M'$ and $\Theta;\Sigma;\Gamma\vdash M:T$, then
  there exist networks $\mathcal N$ and $M''$ such that:
  $\epp M{}\rightarrow^+_{\epp D{}}\mathcal{N}$; $M'\rightarrow^*M''$; and
  $\mathcal{N}\sqsupseteq\epp{M''}{}$.
\end{theorem}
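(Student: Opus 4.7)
The plan is to proceed by induction on the derivation of $M \xrightarrow{\tau,\mathbf{R}}_D M'$, relying on a few auxiliary lemmas that I would establish first. Chief among these are: a substitution lemma showing $\epp{M[x:=V]}{R} \equiv \epp{M}{R}[x:=\epp{V}{R}]$ (with appropriate care when $R$ is outside the roles of $V$); a correspondence between choreographic rewriting and network-level rewriting, i.e., if $M \rightsquigarrow N$ then $\epp{M}{R}$ rewrites to a behaviour equivalent to $\epp{N}{R}$ using the dual rules from \cref{fig:rewriteP}; compatibility of $\merge$ with reduction, ensuring that if $B \merge B'$ is defined and both sides can perform the same local step, so can the merge, producing the merge of the results; and preservation of $\sqsupseteq$ under network reductions, so that a step available from a pruned behaviour lifts to a matching sequence of steps from the larger one.

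For the base reduction rules I would proceed case by case. For \cref{rule:c-appabs}, at each involved role the redex projects to $(\lambda x:\epp{T}{R}.\epp{M}{R})~\epp{V}{R}$, reducing by \cref{rule:n-absapp}; uninvolved roles receive either $\botm$ or $\botm$-applications that collapse via \cref{rule:lr-botm}, and the substitution lemma delivers the required equality. The \cref{rule:c-com} rule lifts to a synchronisation via \cref{rule:n-com} between $S$'s $\send{R}$ and $R$'s $\recv{S}$, and \cref{rule:c-sel} is analogous via \cref{rule:n-sel}. The $\iota$-reductions on sums and pairs and the function unfolding of \cref{rule:c-fun} translate directly using $\epp{D}{}$. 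Importantly, for \cref{rule:c-casel} (and its mirror), at roles that did not appear in the scrutinee's type the projection has already merged the two branches, so reducing after the choice yields a behaviour that is merely a pruning of the merged one---exactly why the theorem uses $\mathcal{N} \sqsupseteq \epp{M''}{}$ instead of equality, as illustrated by \cref{ex:prune}.

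For the inductive cases \cref{rule:c-app1,rule:c-app2,rule:c-app3,rule:c-case,rule:c-inabs,rule:c-insel,rule:c-incase} I would apply the induction hypothesis and lift the resulting network step through the appropriate projection clause; the side conditions on synchronising roles (e.g.\ $\mathbf{R} \cap \roles(N) = \emptyset$ in \cref{rule:c-app3}) together with their network counterparts ensure that the lifted step remains available in the larger context. The \cref{rule:c-str} case reduces to the rewriting-correspondence lemma followed by the induction hypothesis. The main obstacle will be \cref{rule:c-incase} combined with the merge operator: both branches reduce identically, but at roles invisible to the scrutinee the projection is a single merged behaviour, and one must show that this merged term admits the same reduction, produces the merge of the reduced branches, and matches (up to $\sqsupseteq$ and $\equiv$) the projection of the reduced choreography. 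Closely related is handling the term $(\lambda x'':\botmt.\epp{N}{R} \merge \epp{N'}{R})~\epp{M}{R}$ introduced by the ``otherwise'' clause of the $\ccase$ projection: reductions propagating out of $\epp{M}{R}$ must be allowed to traverse the enclosing abstraction-application without getting stuck, which is precisely the role played by the equivalence $\equiv$ together with \cref{rule:lr-botm}.
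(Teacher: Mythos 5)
Your proposal is correct and follows essentially the same route as the paper: structural induction on the derivation of $M\xrightarrow{\tau,\mathbf{R}}_D M'$, supported by a projection--substitution commutation fact, the rewriting correspondence (the paper's \cref{thm:squigepp}) for the \cref{rule:c-str} case, and pruning via $\sqsupseteq$ to absorb the branch discarded by \cref{rule:c-casel}. The auxiliary lemmas you name (merge compatibility, preservation of $\sqsupseteq$ under reduction) are handled inline in the paper's \cref{rule:c-incase} and $\ccase$ cases rather than stated separately, but the substance is the same.
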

\begin{proof}
  By structural induction on the derivation of $M\rightarrow_D M'$.
\end{proof}

\begin{theorem}[Soundness]\label{thm:NetToChor}
  Given a closed choreography $M$, if $\Theta;\Sigma;\Gamma\vdash M:T$ and
  $\epp M{}\xrightarrow{\tau_{\mathbf{R}}}_{\DN} \mathcal{N}$ for some network $\mathcal N$, then
  there exist a choreography $M'$, a set mapping $D$, and a network $\mathcal N'$ such that: $M\rightarrow^*_D M'$; $\epp{D}{}=\DN$;
  $\mathcal{N}\rightarrow^*\mathcal{N}'$; and $\mathcal{N'}\sqsupseteq \epp{M'}{}$.
\end{theorem}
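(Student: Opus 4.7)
The plan is to proceed by induction on the derivation of $\epp{M}{}\xrightarrow{\tau_{\mathbf{R}}}_{\DN}\mathcal{N}$, with inversion on the shape of $M$ (guided by the typing derivation $\Theta;\Sigma;\Gamma\vdash M:T$), and to take $D$ to be the set of definitions over which the choreographic semantics is parameterised, so that $\epp{D}{}=\DN$ holds by construction. The first step is to classify the network step by $\mathbf{R}$: either $\mathbf{R}=\{R\}$ is a local action at a single process derived via \cref{rule:n-pro}, or $\mathbf{R}=\{S,R\}$ is a synchronisation derived via \cref{rule:n-com} or \cref{rule:n-sel}.

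For a synchronisation on $\{S,R\}$, the behaviours of $S$ and $R$ in $\epp{M}{}$ must expose complementary send/receive or choice/offer actions at matching positions in their evaluation contexts, possibly only after process-level rewriting via \cref{fig:rewriteP}. Inspection of the projection clauses for $\com{S}{R}$ and $\select{S}{R}$ shows that such dual actions can only originate from a corresponding $\com{S}{R}$ or $\select{S}{R}$ subterm of $M$. Using the rewriting rules of \cref{fig:rewrite} together with \cref{rule:c-str}, I would bring that subterm into reducing position and fire \cref{rule:c-com} or \cref{rule:c-sel} to obtain $M'$; the side conditions on $\sroles$ and $\roles$ make those rewrites admissible, and \cref{thm:onesynch} rules out any ambiguity about which choreographic synchronisation is being mirrored. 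The pruning witness $\mathcal{N}'$ is then obtained by letting the purely local residual reductions at both endpoints terminate.

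For a local transition $\mathbf{R}=\{R\}$, the step inside $R$'s behaviour is a $\lambda$-calculus reduction, a projection on a pair, or a definition unfolding, each with a direct counterpart in the choreographic semantics (\cref{rule:c-appabs,rule:c-app1,rule:c-app2,rule:c-app3,rule:c-case,rule:c-casel,rule:c-proj1,rule:c-fun} and their congruence rules), possibly preceded by choreographic rewriting to align the term shape. The projection clause for $\ccase$ introduces auxiliary wrappers of the form $(\lambda x{:}\botmt.B)\,\botm$ at non-knowledgeable roles; reductions of those wrappers in the network have no counterpart in the choreography, which is exactly what the equivalence $\equiv$ inside $\sqsupseteq$ is designed to absorb. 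The subtlest sub-case is a case reduction whose scrutinee is $\Inl\,V$ or $\Inr\,V$: while $M$ steps by \cref{rule:c-casel} to $N[x{:=}V]$, at a role $R\notin\roles(\type(M))$ the projection is the merge $\epp{N}{R}\merge\epp{N'}{R}$, which still carries both branches after the network step. The additional network reductions $\mathcal{N}\to^*\mathcal{N}'$ then propagate the selection labels from the taken branch of $M'$ to prune the untaken options, delivering $\mathcal{N}'\sqsupseteq\epp{M'}{}$.

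The main obstacle I expect is the bookkeeping required to reconcile the two levels of rewriting (\cref{fig:rewrite} versus \cref{fig:rewriteP}) with the auxiliary $\botm$-wrappers introduced by projection, in particular showing that every process-level rewrite is either mirrored by a choreographic rewrite or absorbed by $\equiv$, and that the synchronisation side conditions of the two semantics agree under projection. A handful of auxiliary lemmas---projection commutes with substitution, the projection of a rewritten choreography is pruned by the projection of the original modulo $\equiv$, and merging distributes over projected contexts---will carry most of the technical weight, leaving the induction itself a fairly mechanical case analysis.
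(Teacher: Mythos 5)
Your proposal is correct in substance and relies on the same ingredients as the paper's proof---the rewriting correspondence lemmas, pruning for the branches of $\ccase$, and the $\equiv$-absorption of the $(\lambda x{:}\botmt.B)~\botm$ wrappers---but it organises the induction differently: you induct on the derivation of the network transition and split first on whether $\mathbf{R}$ is a singleton or a synchronising pair, whereas the paper performs structural induction on $M$ (after discharging \cref{rule:n-str} via \cref{thm:squignet}) and, inside each syntactic case, enumerates which network rule could have fired. The two decompositions end up covering the same case space; yours front-loads the communication/local distinction, which makes the synchronisation argument more prominent, while the paper's keeps the projection clauses for each constructor of $M$ in one place, which is where most of the bookkeeping with $\botm$ and merging actually lives. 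One point to tighten: you invoke \cref{thm:onesynch} to rule out ambiguity about which choreographic communication a network synchronisation mirrors, but that proposition is a statement about choreography reductions, not about the projected network. What you actually need is the network-level fact that the projection of a well-typed choreography never simultaneously enables two competing complementary send/receive (or choice/offer) pairs between the same two roles---this is exactly the pathology of \cref{fig:exsynch} and \cref{ex:comorder}, and the paper establishes the needed disambiguation case by case (for an application, if the function position at $S$ can emit a send then it is not a local value, so the matching receive at $R$ must also come from the function position, and symmetrically). That fact does hold thanks to the synchronisation side conditions and the absence of out-of-order communication at the process level, so this is a fixable imprecision rather than a gap, but as written the citation does not do the work you ask of it.
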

\begin{proof}
  By structural induction on $M$.
\end{proof}

From \cref{thm:ChorToNet,thm:NetToChor,thm:TypeReduc,thm:TypePres}, we get the following corollary, which states that a network derived from a
well-typed closed choreography can continue to reduce until all roles contain only local values.
\begin{corollary}[Deadlock-freedom]\label{thm:NetReduc}
  Given a closed choreography $M$ and a function environment $D$ containing all the functions of
  $M$, if $\Theta;\Sigma;\Gamma\vdash M:T$ and $\Theta;\Sigma;\Gamma\vdash D$, then: whenever
  $\epp M{}\rightarrow^*_{\epp D{}}\mathcal{N}$ for some network $\mathcal N$, either there exists $\mathbf{R}$ and $\mathcal{N'}$ such that
  $\mathcal{N}\xrightarrow{\tau_{\mathbf{R}}}_{\epp D{}}\mathcal{N}'$ or
  $\mathcal{N}=\prod_{R\in \roles(M)} R[L_R]$.
\end{corollary}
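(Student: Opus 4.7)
The plan is to combine Soundness (\Cref{thm:NetToChor}), Completeness (\Cref{thm:ChorToNet}), Progress (\Cref{thm:TypeReduc}), and Type Preservation (\Cref{thm:TypePres}), tracking the network and the choreography in lockstep. Assume $\epp{M}{}\rightarrow^*_{\epp D{}}\mathcal{N}$. First I would iterate Soundness along this network trace to reconstruct a matching choreographic reduction $M\rightarrow^*_D M^*$ together with some $\mathcal{N}'$ satisfying $\mathcal{N}\rightarrow^*\mathcal{N}'$ and $\mathcal{N}'\sqsupseteq\epp{M^*}{}$. Since Soundness is stated only for networks arising as projections, the induction step needs an auxiliary lemma that lifts a transition of a network $\mathcal{N}_i$ to any $\mathcal{N}_i'\sqsupseteq\mathcal{N}_i$ (in particular, to $\epp{M_i}{}$), so that Soundness can be re-applied at the next step.

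Next, iterating Type Preservation along $M\rightarrow^*_D M^*$ yields $\Theta;\Sigma;\Gamma\vdash M^*:T$, and Progress then splits into two cases. If $M^*\rightarrow_D M^{**}$, Completeness gives $\epp{M^*}{}\rightarrow^+\mathcal{N}''$; using the lifting-through-pruning property the other way round, this step is mirrored by one out of $\mathcal{N}'$, so $\mathcal{N}$ itself reduces (either immediately or by following the existing prefix $\mathcal{N}\rightarrow^*\mathcal{N}'$ and then performing this additional step). If instead $M^*$ is a value, a straightforward structural induction on the value grammar of \chorlam shows that $\epp{M^*}{R}$ is a local value $L_R$ for every $R\in\roles(M)$, so $\epp{M^*}{}=\prod_{R\in\roles(M)} R[L_R]$. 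Pruning and the equivalence $\equiv$ from \Cref{def:equiv} only differ from this shape by wrappers of the form $(\lambda x:\botmt.P)~\botm$, which are themselves $\beta$-redexes; hence either $\mathcal{N}=\mathcal{N}'=\prod_{R\in\roles(M)} R[L_R]$, placing us in the desired normal form, or $\mathcal{N}$ can take one of the intermediate steps toward $\mathcal{N}'$, or $\mathcal{N}'$ still admits the reduction of such a wrapper.

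The main obstacle I anticipate is the bookkeeping around pruning and $\equiv$. Specifically, two transfer properties are needed: if $\mathcal{N}_1\sqsupseteq\mathcal{N}_2$ and $\mathcal{N}_1\rightarrow\mathcal{N}_1'$ then $\mathcal{N}_2$ can simulate the step up to pruning, which is what makes the iteration of Soundness and Completeness work; and a converse that allows a step of $\epp{M^*}{}$ to be pushed back to $\mathcal{N}'$. Both should follow routinely from the shape of the reduction and rewriting rules, but care is needed to ensure that the restrictions on synchronising roles behind \Cref{thm:onesynch} are compatible with pruning and with the $\equiv$-wrappers, so that no spurious synchronisations are invented when lifting transitions.
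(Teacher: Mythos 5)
Your proposal is correct and follows essentially the same route as the paper, which derives this corollary directly from Completeness (\cref{thm:ChorToNet}), Soundness (\cref{thm:NetToChor}), Progress (\cref{thm:TypeReduc}), and Type Preservation (\cref{thm:TypePres}) without spelling out further detail. The auxiliary transfer lemmas you identify for moving transitions across $\sqsupseteq$ and $\equiv$ are exactly the bookkeeping the paper leaves implicit, and your treatment of them is sound.
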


To complete this section, we give the EPPs of some of the choreographies found in \cref{sec:example}. Specifically, we show the projection of the creation of a secure channel and Diffie-Hellman onto the first role involved (both roles have similar behaviours in these choreographies), and the projection of our authentication protocol onto the client and server. Projecting our
choreographies $\funf{diffieHellman}(P,Q)$ and $\funf{makeSecureChannels}(P,Q)$
for role $\role P$ yields the following behaviours:

\begin{lstlisting}[mathescape=true]
$\epp{D(\fundef{diffieHellman}(P,Q))}1(Q) =
\fun{psk}{\typef{Int}}\;\fun{qsk}{\botmt}\;\fun{psg}{\typef{Int}}\;\fun{qsg}{\botmt}\;\fun{psp}{\typef{Int}}\;
\fun{qsp}{\botmt}$
  $\pair\;(\funf{modPow}_1\;psg\;(\recv{Q}\;\botm))\;psp)$
  $\hphantom{\pair\;}(\send{Q}\;(\funf{modPow}_1\;psg\;psk\;psp))$

$\epp{D(\fundef{makeSecureChannels}(P, Q))}1(Q) = \fun{key}{\pairt{\typef{Int}}{\botmt}}$
  $\Pair\;(\fun{val}{\typef{String}}\;((\snd\;key)\;(\send{Q}\;(\funf{encrypt}_1\;(\fst\;key)\;val)))$
  $\hphantom{\Pair}\;(\fun{val}{\botmt}\;(\funf{decrypt}_1\; (\fst\;key)\;
     (\recv{Q}\;(\snd\;key)))$
\end{lstlisting}

Note the way local functions such as $\funf{modPow}(P)$ in the choreography get projected to $\funf{modPow}_1$ at $P$, since they are treated as degenerate choreographies (they have only one role) and $P$ is the first and only role involved. Conversely, $\funf{modPow}(Q)$ gets projected as $\botm$ since it is located entirely at a different role.

In turn, $\funf{authenticate}(S, C, I)$ has the following projections for the
client and the server roles:

\begin{lstlisting}[mathescape=true]
$\epp{D(\fundef{authenticate}(S, C, I))}2(S, I) = \fun{credentials}{\typef{Credentials}}$
  $\fun{comcip}{\typef{String}\to\botmt}\;\fun{comipc}{\botmt\to \typef{String}}\;
  \fun{comips}{\botmt}$
    $((\fun{user}{\botmt}\;(\fun{salt}{\typef{String}}\;(\fun{hash}{\botmt}$
      $\&_{I}\{\lab{ok}:(\fun{token}{\botmt}\;\inl\;(\pair\;(comipc\; \botm)\;\botm))\;\botm,$
      $\hphantom{\&_{I}\{}\lab{ko}:\inr\;\botm\})$
    $\;\;\;\;\;(comcip\; (\funf{calcHash}_1 \; salt \; (\funf{password}_1 \; credentials))))$
    $\;\;\;(comipc\; \botm))$
    $\;(comcip\; (\funf{username}_1 \; credentials)))$
\end{lstlisting}

\begin{lstlisting}[mathescape=true]
$\epp{D(\fundef{authenticate}(S, C, I))}1(C,I) = \fun{credentials}{\botmt}$
  $\fun{comcip}{\botmt}\;\fun{comipc}{\botmt}\;\fun{comips}{\botmt\to \typef{String}}$
    $((\fun{user}{\botmt}\;(\fun{salt}{\botmt}\;(\fun{hash}{\botmt}$
       $\&_{\rolef{IP}}\{\lab{ok}:(\fun{token}{\botmt}\;\inl\;(\pair\;\botm\;
         (comips\; \botm)))\; \botm,$
       $\hphantom{\&_{\rolef{IP}}\{}\lab{ko}:\inr\;\botm\})$
       $\botm)\;\botm)\;\botm))$
\end{lstlisting}

Despite $S$ only being involved in receiving the final token, it still has to execute the applications of a number of $\botm$s, replicating the actions of $I$ and $C$.

\FloatBarrier

\section{Related Work}\label{sec:related}

We already discussed much of the previous and related work on choreographic languages and choreographic programming in \cref{sec:introduction}. In this section, we discuss relevant technical aspects in related work more in detail.

The language nearest to \chorlam is Choral~\cite{GMP20}, the first higher-order choreographic programming language. As we discussed in \cref{sec:introduction}, Choral comes with no formal explanation of its semantics, typing, projection, and guarantees. We have covered all of these in the present article, showing that it is possible to formulate a theory of higher-order choreographic programming that satisfies the expected properties of the paradigm (correctness of EPP and deadlock-freedom).
Our examples were chosen to illustrate that \chorlam is a satisfactory theory of the core principles of higher-order choreographies. Some examples, in particular \cref{ex:remfun,ex:choice} and the code in \cref{sec:seccom,sec:auth}, are reconstructions of the key examples given for Choral~\cite{GMP20}.
Since Choral is object-oriented and \chorlam is functional, there are two differences between code in Choral and terms in \chorlam terms: terms in \chorlam are pure, whereas code in Choral can have side-effects; and we use functions instead of object methods.
Both differences are orthogonal to the development of higher-order choreographies, so they fall out of the scope of this paper. However, they might represent interesting future work.

Our design choices for \chorlam were aimed at making our theoretical development as simple as possible, but without sacrificing details of practical importance (e.g., out-of-order execution, our types for tracking roles that participate in functions, and an adequate notion of EPP).
As a full-fledged implementation, Choral is not an ideal language to seek a clean foundational theory: it is an extension of Java, and therefore quite complicated for reasons orthogonal to choreographies.
Indeed, follow-up work on Choral already started using fragments of it to focus on particular aspects of the language (``mini'' versions)~\cite{GMPRSW21}. However, these do not come with any theoretical developments, as in here.

There exist theories of choreographies that support some form of higher-order composition, which are more restrictive than Choral and \chorlam~\cite{DH12,HG22}. In particular, they fall short of capturing distribution, both in terms of independent execution and data structures.
In~\cite{DH12}, the authors present a choreographic language for writing abstract specifications of system behaviour (as in multiparty session types~\cite{HYC16}) that supports higher-order composition. The language is much less expressive than \chorlam, e.g., it can express neither data nor computation.
More importantly, the design of the language hampers decentralisation: entering a choreography requires that the programmer picks a role as central coordinator, which then orchestrates the other roles with multicasts. This coordination effectively acts as a barrier, so processes cannot really perform their own local computations independently of each other when higher-order composition is involved.
After~\cite{DH12} and Choral~\cite{GMP20}, a theory of higher-order choreographic programming was proposed in~\cite{HG22}. While this theory supports computation at roles, it is even more centralised than~\cite{DH12}: every function application in a choreography requires that all processes generated by projection go through a global barrier that involves the entire system. The global barrier is modelled as a middleware in the semantics of the language, and involves even processes that do not contribute at all to the function or its arguments.

In contrast to~\cite{DH12} and~\cite{HG22}, \chorlam presents no ``hardcoded'' barriers: coordination among roles is left entirely to the programmer of the choreography, and EPP inserts no hidden synchronisations. Our EPP thus generates more concurrent and faithful implementations.
For example, consider the following choreography. \[\funf{Assem}(A)~(\com{C_1}{A}~\funf{Const}(C_1)~\com{O_1}{C_1}~\var{ord}@O_1)~(\com{C_2}{A}~\funf{Const}(C_2)~\com{O_2}{C_2}~\var{ord}@O_2)\]
This is a toy factory scenario, in which two roles $O_1$ and $O_2$ concurrently communicate orders to $C_1$ and $C_2$, which perform the construction of the ordered components before sending them to be assembled at $A$.
The choreography specifies 4 communications, and in \chorlam the implementation generated by EPP would faithfully perform exactly 4 synchronisations. Instead, in~\cite{HG22}, there would be 3 additional global synchronisations among all roles: two for the two invocations of $\funf{Const}$ (even if they are completely local to $C_1$ and $C_2$) and another when entering $\funf{Assem}$ (even if it is local to $A$).
These extra synchronisations are essential to the proof of correctness of EPP in~\cite{HG22}. Our theory dispenses from this restriction thanks to our semantics for application, which is novel for $\lambda$-calculus and supports out-of-order execution (\cref{fig:Sem,fig:rewrite}).

On a similar note, our \cref{thm:RedToUnit} guarantees that the code generated for a role by EPP contains no ``junk'', in the sense that no code is generated for a role if it is not involved in the source choreographic term.
This result is a validation of the modularity of our translation, and gives us a substitution principle of practical value: when a choreography is edited, only the code for the roles affected by the edit needs to be regenerated (by using EPP). This property can therefore be useful in DevOps practices, software maintenance, and versioning.
\Cref{thm:RedToUnit} depends on both our semantics for out-of-order execution (without it, we would need to insert code for global barriers) and the $\rho$-annotations for functional types used in higher-order parameters (without them, EPP could not statically determine which choreographic terms a role participates in, exactly).
A similar result does not hold for~\cite{HG22}, since these features are missing.

Previous theories of choreographies organised syntax in two layers: one for local computation and one for communication~\cite{CHY12,CM13,CMP21b,CMP21,CM20,DGGLM17,HG22,JV22}. The communication layer can typically refer to the computation layer, but not vice versa. \chorlam has a very different and novel design, whereby a unified language addresses both areas. This design is enabled by our formulation of the communication primitive as a function and our typing discipline, which allows for reasoning about terms that compose data and functions at different roles.
An important consequence of our unified approach is that \chorlam can express distributed data structures (e.g., pairs with elements located at different roles), which can be manipulated by independent local computations or in coordination by performing appropriate communications.
This feature is crucial for our examples in \cref{sec:example} (and several examples in the original presentation of Choral in~\cite{GMP20}).

Many of our examples and those in the presentation of Choral~\cite{GMP20} rely on function definitions that are parametric on roles, $f(\vec R)$. The same feature is supported in some previous works, but all of them are less expressive either because of the reasons explained above (e.g., barriers~\cite{DH12}) or they do not support higher-order choreographies~\cite{CM17}.
Role parametricity is a key feature in practical choreographic programming: not having it is limiting for modularity, as it is what allows us to define a choreography (e.g., for creating an encrypted channel) that can be modularly and uniformly reused for different pairs of participants of a larger choreography.

We believe that the principles explored within \chorlam can guide the future extension of choreographic languages. 
In particular, there are several implementations of choreographic languages that are equipped with ad-hoc, limited variations of choreographic procedures (functions in \chorlam) and could benefit from being extended with higher-order composition~\cite{CM13,HMBCY11,DGGLM17}.

Another related line of work is that on multitier programming and its progenitor calculus, Lambda 5~\cite{MCHP04}. Similarly to \chorlam, Lambda 5 and multitier languages have data types with locations~\cite{WWS20}. However, they are used very differently. In choreographic languages (thus \chorlam), programs have a ``global'' point of view and express how multiple roles interact with each other. By contrast, in multitier programming programs have the usual ``local'' point of view of a single role but they can nest (local) code that is supposed to be executed remotely.
The reader interested in a detailed comparison of choreographic and multitier programming can consult~\cite{GMPRSW21}, which presents algorithms for translating choreographies to multitier programs and vice versa. The correctness of these algorithms has never been proven, because they use an informally-specified fragment of Choral as a representative choreographic language. We conjecture that the introduction of \chorlam could be the basis for a future investigation of formal translations between choreographic programs (in terms of \chorlam) and multitier programs (in terms of Lambda 5).
In a similar direction, \cite{CY20} presented a simple first-order multitier language from which it is possible to infer abstract choreographies (computation is not included) that describe the communication flows that multitier programs enact. This language, like all existing multitier languages, does not fully support higher-order composition of multitier programs (programs cannot be values dynamically passed as arguments). Establishing translations between \chorlam and multitier languages might provide insight on how multitier languages can support higher-order composition.

Another notion of compositionality for choreographies is used in \cite{BDH19,MY13}, but it is orthogonal to higher-order choreographies. Specifically, in these works, choreographies can interact with each other via side-effects generated by local actions at processes.

\section{Conclusion and Future Work}\label{sec:conc}

We have presented \chorlam, a new theory of choreographic programming that supports higher-order, modular choreographies. \chorlam is equipped with a type system that guarantees progress (\cref{thm:TypeReduc,thm:NetReduc}).
We have then shown how to obtain correct implementations of choreographies in \chorlam in terms of a process language (\cref{thm:ChorToNet,thm:NetToChor}). Unlike any previous choreographic programming language, \chorlam is based on the $\lambda$-calculus. It therefore inherits the simple syntax of $\lambda$ terms and it is the first purely functional choreographic programming language. The semantics of \chorlam makes it the first theory of higher-order choreographies that is truly decentralised: processes can proceed independently unless the choreography specifies explicitly that they should interact.

\looseness=-1
We have demonstrated the usefulness of higher-order choreographies in \chorlam by modelling common protocols in \cref{sec:example}. The examples on single sign-on with encrypted channels and EAP, in particular, are parametrised on choreographies and cannot be expressed in previous theories, either because of lack of higher-order composition or because the semantics is not satisfactory due to global synchronisations---which the original protocol specifications do not expect.

\subparagraph{Future Work}
An obvious extension of \chorlam would be to add generic data types, which we did not include to keep the focus on choreographies. Since we use $\lambda$-calculus as foundation, we believe that this would be a straightforward import of known methods.

Other features that are interesting for \chorlam have been investigated in the context of first-order choreographic languages and represent future work.
These include: channel-based communication \cite{CM13}, dynamic creation of roles \cite{CM17}, internal threads \cite{CHY12}, group communication \cite{CMP18}, availability-awareness \cite{LNN16}, and runtime adaptation \cite{DGGLM17}.

A more sophisticated extension would be to reify roles, that is, extending the syntax such that values can be roles that can be acted upon. This could, for example, enable dynamic topologies: choreographies where a process receives at runtime a role that it needs to interact with at a later time.

Another interesting line of future work would be to extend existing formalisations of choreographic languages with the features explored in this work \cite{CMP21,CMP21b,HG22,PGSN22}.

\bibliography{biblio}

\begin{thebibliography}{10}

\bibitem{forte2016}
Elvira Albert and Ivan Lanese, editors.
\newblock {\em Formal Techniques for Distributed Objects, Components, and
  Systems - 36th {IFIP} {WG} 6.1 International Conference, {FORTE} 2016, Held
  as Part of the 11th International Federated Conference on Distributed
  Computing Techniques, DisCoTec 2016, Heraklion, Crete, Greece, June 6-9,
  2016, Proceedings}, volume 9688 of {\em Lecture Notes in Computer Science}.
  Springer, 2016.

\bibitem{Aetal16}
Davide Ancona, Viviana Bono, Mario Bravetti, Joana Campos, Giuseppe Castagna,
  Pierre{-}Malo Deni{\'{e}}lou, Simon~J. Gay, Nils Gesbert, Elena Giachino,
  Raymond Hu, Einar~Broch Johnsen, Francisco Martins, Viviana Mascardi,
  Fabrizio Montesi, Rumyana Neykova, Nicholas Ng, Luca Padovani, Vasco~T.
  Vasconcelos, and Nobuko Yoshida.
\newblock Behavioral types in programming languages.
\newblock {\em Foundations and Trends in Programming Languages},
  3(2--3):95--230, 2016.

\bibitem{BDH19}
Franco Barbanera, Ugo de'Liguoro, and Rolf Hennicker.
\newblock Connecting open systems of communicating finite state machines.
\newblock {\em J. Log. Algebraic Methods Program.}, 109, 2019.
\newblock \href {https://doi.org/10.1016/j.jlamp.2019.07.004}
  {\path{doi:10.1016/j.jlamp.2019.07.004}}.

\bibitem{BZ07}
Mario Bravetti and Gianluigi Zavattaro.
\newblock Towards a unifying theory for choreography conformance and contract
  compliance.
\newblock In Markus Lumpe and Wim Vanderperren, editors, {\em Software
  Composition, 6th International Symposium, {SC} 2007, Braga, Portugal, March
  24-25, 2007, Revised Selected Papers}, volume 4829 of {\em Lecture Notes in
  Computer Science}, pages 34--50. Springer, 2007.
\newblock \href {https://doi.org/10.1007/978-3-540-77351-1\_4}
  {\path{doi:10.1007/978-3-540-77351-1\_4}}.

\bibitem{BCGMS21}
Alessandro Bruni, Marco Carbone, Rosario Giustolisi, Sebastian
  M{\"{o}}dersheim, and Carsten Sch{\"{u}}rmann.
\newblock Security protocols as choreographies.
\newblock In Daniel Dougherty, Jos{\'{e}} Meseguer, Sebastian~Alexander
  M{\"{o}}dersheim, and Paul~D. Rowe, editors, {\em Protocols, Strands, and
  Logic - Essays Dedicated to Joshua Guttman on the Occasion of his 66.66th
  Birthday}, volume 13066 of {\em Lecture Notes in Computer Science}, pages
  98--111. Springer, 2021.
\newblock \href {https://doi.org/10.1007/978-3-030-91631-2\_5}
  {\path{doi:10.1007/978-3-030-91631-2\_5}}.

\bibitem{CHY12}
Marco Carbone, Kohei Honda, and Nobuko Yoshida.
\newblock Structured communication-centered programming for web services.
\newblock {\em {ACM} Trans. Program. Lang. Syst.}, 34(2):8:1--8:78, 2012.
\newblock \href {https://doi.org/10.1145/2220365.2220367}
  {\path{doi:10.1145/2220365.2220367}}.

\bibitem{CM13}
Marco Carbone and Fabrizio Montesi.
\newblock Deadlock-freedom-by-design: multiparty asynchronous global
  programming.
\newblock In Roberto Giacobazzi and Radhia Cousot, editors, {\em Procs.\ POPL},
  pages 263--274. {ACM}, 2013.
\newblock \href {https://doi.org/10.1145/2429069.2429101}
  {\path{doi:10.1145/2429069.2429101}}.

\bibitem{CG00}
Luca Cardelli and Andrew~D. Gordon.
\newblock Mobile ambients.
\newblock {\em Theor. Comput. Sci.}, 240(1):177--213, 2000.
\newblock \href {https://doi.org/10.1016/S0304-3975(99)00231-5}
  {\path{doi:10.1016/S0304-3975(99)00231-5}}.

\bibitem{CDP11}
Giuseppe Castagna, Mariangiola Dezani-Ciancaglini, and Luca Padovani.
\newblock On global types and multi-party sessions.
\newblock In {\em Formal Techniques for Distributed Systems}, pages 1--28.
  Springer, 2011.

\bibitem{CY20}
David Castro{-}Perez and Nobuko Yoshida.
\newblock Compiling first-order functions to session-typed parallel code.
\newblock In Louis{-}No{\"{e}}l Pouchet and Alexandra Jimborean, editors, {\em
  {CC} '20: 29th International Conference on Compiler Construction, San Diego,
  CA, USA, February 22-23, 2020}, pages 143--154. {ACM}, 2020.
\newblock \href {https://doi.org/10.1145/3377555.3377889}
  {\path{doi:10.1145/3377555.3377889}}.

\bibitem{Church32}
Alonzo Church.
\newblock A set of postulates for the foundation of logic.
\newblock {\em Annals of Mathematics}, 33(2):346--366, 1932.
\newblock URL: \url{http://www.jstor.org/stable/1968337}.

\bibitem{CM16}
Lu{\'{\i}}s Cruz{-}Filipe and Fabrizio Montesi.
\newblock Choreographies in practice.
\newblock In Albert and Lanese \cite{forte2016}, pages 114--123.
\newblock \href {https://doi.org/10.1007/978-3-319-39570-8_8}
  {\path{doi:10.1007/978-3-319-39570-8_8}}.

\bibitem{CM17}
Lu{\'{\i}}s Cruz{-}Filipe and Fabrizio Montesi.
\newblock Procedural choreographic programming.
\newblock In {\em {FORTE}}, volume 10321 of {\em Lecture Notes in Computer
  Science}, pages 92--107. Springer, 2017.

\bibitem{CM20}
Lu{\'{\i}}s Cruz{-}Filipe and Fabrizio Montesi.
\newblock A core model for choreographic programming.
\newblock {\em Theor. Comput. Sci.}, 802:38--66, 2020.
\newblock \href {https://doi.org/10.1016/j.tcs.2019.07.005}
  {\path{doi:10.1016/j.tcs.2019.07.005}}.

\bibitem{CMP18}
Lu{\'{\i}}s Cruz{-}Filipe, Fabrizio Montesi, and Marco Peressotti.
\newblock Communications in choreographies, revisited.
\newblock In {\em Proceedings of the 33rd Annual {ACM} Symposium on Applied
  Computing, {SAC} 2018, Pau, France, April 09-13, 2018}, pages 1248--1255.
  {ACM}, 2018.
\newblock \href {https://doi.org/10.1145/3167132.3167267}
  {\path{doi:10.1145/3167132.3167267}}.

\bibitem{CMP21b}
Lu{\'{\i}}s Cruz{-}Filipe, Fabrizio Montesi, and Marco Peressotti.
\newblock Certifying choreography compilation.
\newblock In Antonio Cerone and Peter~Csaba {\"{O}}lveczky, editors, {\em
  Theoretical Aspects of Computing - {ICTAC} 2021 - 18th International
  Colloquium, Virtual Event, Nur-Sultan, Kazakhstan, September 8-10, 2021,
  Proceedings}, volume 12819 of {\em Lecture Notes in Computer Science}, pages
  115--133. Springer, 2021.
\newblock \href {https://doi.org/10.1007/978-3-030-85315-0\_8}
  {\path{doi:10.1007/978-3-030-85315-0\_8}}.

\bibitem{CMP21}
Lu{\'{\i}}s Cruz{-}Filipe, Fabrizio Montesi, and Marco Peressotti.
\newblock Formalising a turing-complete choreographic language in coq.
\newblock In Liron Cohen and Cezary Kaliszyk, editors, {\em 12th International
  Conference on Interactive Theorem Proving, {ITP} 2021, June 29 to July 1,
  2021, Rome, Italy (Virtual Conference)}, volume 193 of {\em LIPIcs}, pages
  15:1--15:18. Schloss Dagstuhl - Leibniz-Zentrum f{\"{u}}r Informatik, 2021.
\newblock \href {https://doi.org/10.4230/LIPIcs.ITP.2021.15}
  {\path{doi:10.4230/LIPIcs.ITP.2021.15}}.

\bibitem{DGGLM17}
Mila Dalla~Preda, Maurizio Gabbrielli, Saverio Giallorenzo, Ivan Lanese, and
  Jacopo Mauro.
\newblock Dynamic choreographies: Theory and implementation.
\newblock {\em Log.\ Methods Comput.\ Sci.}, 13(2), 2017.
\newblock \href {https://doi.org/10.23638/LMCS-13(2:1)2017}
  {\path{doi:10.23638/LMCS-13(2:1)2017}}.

\bibitem{DH12}
Romain Demangeon and Kohei Honda.
\newblock Nested protocols in session types.
\newblock In Maciej Koutny and Irek Ulidowski, editors, {\em {CONCUR} 2012 -
  Concurrency Theory - 23rd International Conference, {CONCUR} 2012, Newcastle
  upon Tyne, UK, September 4-7, 2012. Proceedings}, volume 7454 of {\em Lecture
  Notes in Computer Science}, pages 272--286. Springer, 2012.
\newblock \href {https://doi.org/10.1007/978-3-642-32940-1\_20}
  {\path{doi:10.1007/978-3-642-32940-1\_20}}.

\bibitem{DH76}
Whitfield Diffie and Martin~E. Hellman.
\newblock New directions in cryptography.
\newblock {\em {IEEE} Trans. Inf. Theory}, 22(6):644--654, 1976.
\newblock \href {https://doi.org/10.1109/TIT.1976.1055638}
  {\path{doi:10.1109/TIT.1976.1055638}}.

\bibitem{GLR18}
Saverio Giallorenzo, Ivan Lanese, and Daniel Russo.
\newblock Chip: {A} choreographic integration process.
\newblock In Herv{\'{e}} Panetto, Christophe Debruyne, Henderik~A. Proper,
  Claudio~Agostino Ardagna, Dumitru Roman, and Robert Meersman, editors, {\em
  Procs.\ OTM, part II}, volume 11230 of {\em Lecture Notes in Computer
  Science}, pages 22--40. Springer, 2018.
\newblock \href {https://doi.org/10.1007/978-3-030-02671-4_2}
  {\path{doi:10.1007/978-3-030-02671-4_2}}.

\bibitem{GMP20}
Saverio Giallorenzo, Fabrizio Montesi, and Marco Peressotti.
\newblock Choreographies as objects.
\newblock {\em CoRR}, abs/2005.09520, 2020.
\newblock URL: \url{https://arxiv.org/abs/2005.09520}.

\bibitem{GMPRSW21}
Saverio Giallorenzo, Fabrizio Montesi, Marco Peressotti, David Richter, Guido
  Salvaneschi, and Pascal Weisenburger.
\newblock {Multiparty Languages: The Choreographic and Multitier Cases}.
\newblock In {\em 35th European Conference on Object-Oriented Programming,
  {ECOOP} 2021, July 12-17, 2021, Aarhus, Denmark (Virtual Conference)},
  LIPIcs. Schloss Dagstuhl - Leibniz-Zentrum fuer Informatik, 2021.
\newblock To appear. Pre-print available at
  \url{https://fabriziomontesi.com/files/gmprsw21.pdf}.

\bibitem{H07}
Matthew Hennessy.
\newblock {\em A distributed Pi-calculus}.
\newblock Cambridge University Press, 2007.

\bibitem{HG22}
Andrew~K. Hirsch and Deepak Garg.
\newblock Pirouette: higher-order typed functional choreographies.
\newblock {\em Proc. {ACM} Program. Lang.}, 6({POPL}):1--27, 2022.
\newblock \href {https://doi.org/10.1145/3498684} {\path{doi:10.1145/3498684}}.

\bibitem{HMBCY11}
Kohei Honda, Aybek Mukhamedov, Gary Brown, Tzu{-}Chun Chen, and Nobuko Yoshida.
\newblock Scribbling interactions with a formal foundation.
\newblock In Raja Natarajan and Adegboyega~K. Ojo, editors, {\em Distributed
  Computing and Internet Technology - 7th International Conference, {ICDCIT}
  2011, Bhubaneshwar, India, February 9-12, 2011. Proceedings}, volume 6536 of
  {\em Lecture Notes in Computer Science}, pages 55--75. Springer, 2011.
\newblock \href {https://doi.org/10.1007/978-3-642-19056-8\_4}
  {\path{doi:10.1007/978-3-642-19056-8\_4}}.

\bibitem{HYC16}
Kohei Honda, Nobuko Yoshida, and Marco Carbone.
\newblock Multiparty asynchronous session types.
\newblock {\em J. {ACM}}, 63(1):9, 2016.
\newblock Also: POPL, pages 273--284, 2008.
\newblock \href {https://doi.org/10.1145/2827695} {\path{doi:10.1145/2827695}}.

\bibitem{Hetal16}
Hans H{\"{u}}ttel, Ivan Lanese, Vasco~T. Vasconcelos, Lu{\'{\i}}s Caires, Marco
  Carbone, Pierre{-}Malo Deni{\'{e}}lou, Dimitris Mostrous, Luca Padovani,
  Ant{\'{o}}nio Ravara, Emilio Tuosto, Hugo~Torres Vieira, and Gianluigi
  Zavattaro.
\newblock Foundations of session types and behavioural contracts.
\newblock {\em {ACM} Comput.\ Surv.}, 49(1):3:1--3:36, 2016.
\newblock \href {https://doi.org/10.1145/2873052} {\path{doi:10.1145/2873052}}.

\bibitem{msc}
{Intl.\ Telecommunication Union}.
\newblock Recommendation \mbox{Z.120}: {Message Sequence Chart}, 1996.

\bibitem{JV22}
Sung{-}Shik Jongmans and Petra van~den Bos.
\newblock A predicate transformer for choreographies---computing preconditions
  in choreographic programming.
\newblock In Peter M{\"{u}}ller, editor, {\em Programming Languages and Systems
  - 31st European Symposium on Programming, {ESOP} 2022, Proceedings}, volume
  To appear of {\em Lecture Notes in Computer Science}. Springer, 2022.

\bibitem{LLLG16}
Tanakorn Leesatapornwongsa, Jeffrey~F. Lukman, Shan Lu, and Haryadi~S. Gunawi.
\newblock Tax{DC}: {A} taxonomy of non-deterministic concurrency bugs in
  datacenter distributed systems.
\newblock In {\em Proc. of {ASPLOS}}, pages 517--530, 2016.

\bibitem{LN15}
Alberto Lluch{-}Lafuente, Flemming Nielson, and Hanne~Riis Nielson.
\newblock Discretionary information flow control for interaction-oriented
  specifications.
\newblock In Narciso Mart{\'{\i}}{-}Oliet, Peter~Csaba {\"{O}}lveczky, and
  Carolyn~L. Talcott, editors, {\em Logic, Rewriting, and Concurrency}, volume
  9200 of {\em Lecture Notes in Computer Science}, pages 427--450. Springer,
  2015.
\newblock \href {https://doi.org/10.1007/978-3-319-23165-5_20}
  {\path{doi:10.1007/978-3-319-23165-5_20}}.

\bibitem{LH17}
Hugo~A. L{\'{o}}pez and Kai Heussen.
\newblock Choreographing cyber-physical distributed control systems for the
  energy sector.
\newblock In Ahmed Seffah, Birgit Penzenstadler, Carina Alves, and Xin Peng,
  editors, {\em Procs.\ SAC}, pages 437--443. {ACM}, 2017.
\newblock \href {https://doi.org/10.1145/3019612.3019656}
  {\path{doi:10.1145/3019612.3019656}}.

\bibitem{LNN16}
Hugo~A. L{\'{o}}pez, Flemming Nielson, and Hanne~Riis Nielson.
\newblock Enforcing availability in failure-aware communicating systems.
\newblock In Albert and Lanese \cite{forte2016}, pages 195--211.
\newblock \href {https://doi.org/10.1007/978-3-319-39570-8_13}
  {\path{doi:10.1007/978-3-319-39570-8_13}}.

\bibitem{LPSZ08}
Shan Lu, Soyeon Park, Eunsoo Seo, and Yuanyuan Zhou.
\newblock Learning from mistakes: a comprehensive study on real world
  concurrency bug characteristics.
\newblock In {\em Proc. of {ASPLOS}}, pages 329--339, 2008.

\bibitem{M13p}
Fabrizio Montesi.
\newblock {\em Choreographic Programming}.
\newblock {Ph.{D}. Thesis}, IT University of Copenhagen, 2013.

\bibitem{M22}
Fabrizio Montesi.
\newblock {\em {Introduction to Choreographies}}.
\newblock Cambridge University Press, 2022.
\newblock To appear.

\bibitem{MY13}
Fabrizio Montesi and Nobuko Yoshida.
\newblock Compositional choreographies.
\newblock In Pedro~R. D'Argenio and Hern{\'{a}}n~C. Melgratti, editors, {\em
  {CONCUR} 2013 - Concurrency Theory - 24th International Conference, {CONCUR}
  2013, Buenos Aires, Argentina, August 27-30, 2013. Proceedings}, volume 8052
  of {\em Lecture Notes in Computer Science}, pages 425--439. Springer, 2013.
\newblock \href {https://doi.org/10.1007/978-3-642-40184-8\_30}
  {\path{doi:10.1007/978-3-642-40184-8\_30}}.

\bibitem{MCHP04}
Tom Murphy~VII, Karl Crary, Robert Harper, and Frank Pfenning.
\newblock A symmetric modal lambda calculus for distributed computing.
\newblock In {\em 19th {IEEE} Symposium on Logic in Computer Science {(LICS}
  2004), 14-17 July 2004, Turku, Finland, Proceedings}, pages 286--295. {IEEE}
  Computer Society, 2004.
\newblock \href {https://doi.org/10.1109/LICS.2004.1319623}
  {\path{doi:10.1109/LICS.2004.1319623}}.

\bibitem{NS78}
Roger~M. Needham and Michael~D. Schroeder.
\newblock Using encryption for authentication in large networks of computers.
\newblock {\em Commun.\ {ACM}}, 21(12):993--999, 1978.
\newblock \href {https://doi.org/10.1145/359657.359659}
  {\path{doi:10.1145/359657.359659}}.

\bibitem{bpmn}
{O}bject~{M}anagement {G}roup.
\newblock {B}usiness {P}rocess {M}odel and {N}otation.
\newblock
  \href{http://www.omg.org/spec/BPMN/2.0/}{http://www.omg.org/spec/BPMN/2.0/},
  2011.

\bibitem{O18}
Peter~W. O'Hearn.
\newblock Experience developing and deploying concurrency analysis at facebook.
\newblock In Andreas Podelski, editor, {\em Static Analysis - 25th
  International Symposium, {SAS} 2018, Freiburg, Germany, August 29-31, 2018,
  Proceedings}, volume 11002 of {\em Lecture Notes in Computer Science}, pages
  56--70. Springer, 2018.
\newblock \href {https://doi.org/10.1007/978-3-319-99725-4\_5}
  {\path{doi:10.1007/978-3-319-99725-4\_5}}.

\bibitem{openid}
{OpenID Foundation}.
\newblock {OpenID Specification}.
\newblock
  \href{https://openid.net/developers/specs/}{https://openid.net/developers/specs/},
  2014.

\bibitem{PGSN22}
Johannes~{\AA}man Pohjola, Alejandro G{\'{o}}mez{-}Londo{\~{n}}o, James Shaker,
  and Michael Norrish.
\newblock Kalas: {A} verified, end-to-end compiler for a choreographic
  language.
\newblock In June Andronick and Leonardo de~Moura, editors, {\em 13th
  International Conference on Interactive Theorem Proving, {ITP} 2022, August
  7-10, 2022, Haifa, Israel}, volume 237 of {\em LIPIcs}, pages 27:1--27:18.
  Schloss Dagstuhl - Leibniz-Zentrum f{\"{u}}r Informatik, 2022.
\newblock \href {https://doi.org/10.4230/LIPIcs.ITP.2022.27}
  {\path{doi:10.4230/LIPIcs.ITP.2022.27}}.

\bibitem{QZCY07}
Zongyan Qiu, Xiangpeng Zhao, Chao Cai, and Hongli Yang.
\newblock Towards the theoretical foundation of choreography.
\newblock In {\em WWW}, pages 973--982, United States, 2007. IEEE Computer
  Society Press.

\bibitem{eap}
John Vollbrecht, James~D. Carlson, Larry Blunk, Dr. Bernard~D. Aboba, and
  Henrik Levkowetz.
\newblock {Extensible Authentication Protocol (EAP)}.
\newblock RFC 3748, June 2004.
\newblock URL: \url{https://rfc-editor.org/rfc/rfc3748.txt}, \href
  {https://doi.org/10.17487/RFC3748} {\path{doi:10.17487/RFC3748}}.

\bibitem{wscdl}
{W3C}.
\newblock {WS Choreography Description Language}.
\newblock
  \href{http://www.w3.org/TR/ws-cdl-10/}{http://www.w3.org/TR/ws-cdl-10/},
  2004.

\bibitem{WWS20}
Pascal Weisenburger, Johannes Wirth, and Guido Salvaneschi.
\newblock A survey of multitier programming.
\newblock {\em {ACM} Comput. Surv.}, 53(4):81:1--81:35, 2020.
\newblock \href {https://doi.org/10.1145/3397495} {\path{doi:10.1145/3397495}}.

\end{thebibliography}

\appendix

\def\rulesvsep{5pt}

\section{Full definitions and proofs}
\label{sec:appendix}

\begin{definition}[Free Variables]\label{def:FreeVar}
Given a choreography $M$, the free variables of $M$, $\fv(M)$ are defined as:

\begin{tabular}{ll}
$\fv(N~N')=\fv(N)\cup\fv(N')$ & $\fv(\select{S}{R}~l~M)=\fv(M)$ \\
$\fv(x)=x$ &
$\fv(\lambda x:T.N)=\fv(N)\setminus \{x\}$ \\
  $\fv(\litAt{\unit}{R})=\emptyset$ & $\fv(\com{S}{R})=\emptyset$ \\
$\fv(f)=\emptyset$&
$\fv(\Pair~V~V')=\fv(V)\cup \fv(V')$ \\
\multicolumn{2}{l}{$\fv(\case{N}{x}{M}{y}{M'})=\fv(N)\cup (\fv(M)\setminus \{x\})\cup (\fv(M')\setminus \{y\})$}\\
$\fv(\fst)=\fv(\snd)=\emptyset$ & $\fv(\Inl~V)=\fv(\Inr~V)=\fv(V)$ \\
\end{tabular}
\end{definition}

\begin{figure}
\begin{spreadlines}{\rulesvsep}
\begin{eqnarray*}
& \infer[\rtag{\ref*{rule:t-abs}}]
	{\Theta;\Sigma;\Gamma\vdash\lambda x:T.M:T \rightarrow_\rho T'}
	{\roles(T \rightarrow_\rho T');\Sigma;\Gamma,x:T\vdash M:T' 
	&\roles(T \rightarrow_\rho T') \subseteq \Theta} 
\\ & 
\infer[\rtag{\ref*{rule:t-var}}]
 {\Theta;\Sigma;\Gamma\vdash x:T}
 {x:T\in \Gamma & \roles(T)\subseteq \Theta}
\qquad 
	\infer[\rtag{\ref*{rule:t-app}}]
	{\Theta;\Sigma;\Gamma\vdash N~M:T'}
	{\Theta;\Sigma;\Gamma\vdash N:T\rightarrow_\rho T' 
	&\Theta;\Sigma;\Gamma\vdash M:T} 
\\& 
\infer[\rtag{\rlabel{\rname{TCase}}{rule:t-case}}]
	{\Theta;\Sigma;\Gamma\vdash \case{N}{x}{M'}{x'}{M''}:T}
	{\Gamma\vdash N:T_1 + T_2 
	&\Theta;\Sigma;\Gamma,x:T_1\vdash M':T 
	&\Theta;\Sigma;\Gamma,x':T_2\vdash M'':T} 
\\ & 
\infer[\rtag{\ref*{rule:t-sel}}]
	{\Theta;\Sigma;\Gamma\vdash \select{S}{R}~l~M:T}
	{ \Theta;\Sigma;\Gamma\vdash M:T & S,R\in \Theta} 
\\ &
\infer[\rtag{\ref*{rule:t-fun}}]
		{\Theta;\Sigma;\Gamma\vdash f(\vec{\role R}):T[{\vec{R'}}:={\vec{R}}]}
		{f(\vec{\role{R'}}):T\in \Gamma & \roles(T)\subseteq \vec{\role R'}\subseteq\Theta & ||\vec{R}||=||\vec{R'}|| & \dist(\vec{R})}
\\& 
\infer[\rtag{\rlabel{\rname{TUnit}}{rule:t-unit}}]
	{\Theta;\Sigma;\Gamma\vdash \litAt{\unit}{R}:\unitt@R}
	{R\in\Theta} 
\qquad
\infer[\rtag{\ref*{rule:t-com}}]
	{\Theta;\Sigma;\Gamma\vdash \com{S}{R}:T \rightarrow_\emptyset T[S:=R]}
	{S,R\in \Theta 
	&\roles(T)=S}
\\& 
\infer[\rtag{\rlabel{\rname{TPair}}{rule:t-pair}}]
	{\Theta;\Sigma;\Gamma\vdash \Pair~V~V': (T\times T')}
	{\Theta;\Sigma;\Gamma\vdash V:T & \Theta;\Sigma;\Gamma\vdash V':T'} 
\\ & 
\infer[\rtag{\rlabel{\rname{TProj1}}{rule:t-proj1}}]
	{\Theta;\Sigma;\Gamma\vdash \fst:(T \times T') \rightarrow_\emptyset T}
	{\roles(T\times T')\subseteq \Theta}
\quad
\infer[\rtag{\rlabel{\rname{TProj2}}{rule:t-proj2}}]
	{\Theta;\Sigma;\Gamma\vdash \snd:(T \times T') \rightarrow_\emptyset T'}
	{\roles(T\times T')\subseteq \Theta} 
\\& 
\infer[\rtag{\rlabel{\rname{TInl}}{rule:t-inl}}]
	{\Theta;\Sigma;\Gamma\vdash \Inl~V:(T + T')}
	{\Theta;\Sigma;\Gamma\vdash V:T 
	&\roles(T+T')\subseteq \Theta} 
\quad
\infer[\rtag{\rlabel{\rname{TInR}}{rule:t-inr}}]
	{\Theta;\Sigma;\Gamma\vdash \Inr~V: (T + T')}
	{\Theta;\Sigma;\Gamma\vdash V:T' 
	&\roles(T+T')\subseteq \Theta} 
\\& 
\infer[\rtag{\ref*{rule:t-eq}}]
			{\Theta;\Sigma;\Gamma\vdash M:T[{\vec{R'}}:={\vec{R}}]}
			{\Theta;\Sigma;\Gamma\vdash M:t@\vec{R} 
			&t@\vec{R'}=_{\Sigma} T 
      &||\vec{R}||=||\vec{R'}|| 
      & \dist(\vec{R})}
\\&
\infer[\rtag{\ref*{rule:t-defs}}]
	{\Theta;\Sigma;\Gamma\vdash D}
		{\forall f(\vec{R})\in \mathsf{domain}(D): 
		& f(\vec{R}):T\in \Gamma 
		& \vec{R};\Sigma;\Gamma\vdash D(f(\vec{R})):T & \dist(\vec{R}) & \vec{R}\subseteq\Theta}
\end{eqnarray*}
\end{spreadlines}
\caption{Full set of typing rules for \chorlam.}\label{fig:Type}
\end{figure}

\begin{figure}
\begin{spreadlines}{\rulesvsep}
  \begin{eqnarray*}&	    
  \infer[{\rtag{\ref{rule:c-appabs}}}]
  {\lambda x:T.M~V\xrightarrow{\tau,\emptyset}_D M[x:=V]}
  {}
\qquad \infer[\rtag{\ref{rule:c-inabs}}]
     	{\lambda x:T.M\xrightarrow{\lambda,\mathbf{R}}_D \lambda x:T.M'}
     	{M\xrightarrow{\ell,\mathbf{R}}_D M' }
    &\\&
     \infer[\rtag{\ref{rule:c-app1}}]
     	{M~N\xrightarrow{\tau,\mathbf{R}}_D M'~N}
     	{M\xrightarrow{\ell,\mathbf{R}}_D M' & \ell=\lambda\Rightarrow\mathbf{R}\cap \roles(N)=\emptyset}
    &\\&
     \infer[\rtag{\ref{rule:c-app2}}]
     {V~N\xrightarrow{\tau,\mathbf{R}}_D V~N'}
     {N\xrightarrow{\tau,\mathbf{R}}_D N'} 
     \qquad  
     \infer[\rtag{\ref{rule:c-app3}}]
     {M~N\xrightarrow{\tau,\mathbf{R}}_D M~N'}
     {N\xrightarrow{\tau,\mathbf{R}}_D N' & \mathbf{R}\cap\roles(M)=\emptyset} 
    &\\&
     \infer[\rtag{\ref{rule:c-case}}]
     	{\case{N}{x}{M}{x'}{M'}\xrightarrow{\tau,\mathbf{R}}_D \case{N'}{x}{M}{x'}{M'}}
     	{N\xrightarrow{\tau,\mathbf{R}}_D N'} 
    &\\&
     \infer[\rtag{\ref{rule:c-incase}}]
     	{\case{N}{x}{M_1}{x'}{M_2}\xrightarrow{\ell,\mathbf{R}}_D \case{N}{x}{M_1'}{x'}{M_2'}}
     	{M_1\xrightarrow{\ell,\mathbf{R}}_D M_1' & M_2\xrightarrow{\ell,\mathbf{R}}_D M_2' & \mathbf{R}\cap \roles(N)=\emptyset} 
    &\\&
      \infer[\rtag{\ref{rule:c-casel}}]
     {\case{\Inl~V}{x}{M}{x'}{M'}\xrightarrow{\tau,\emptyset}_D M[x:= V]}{}
    &\\&
    \infer[\rtag{\rlabel{\rname{CaseR}}{rule:c-caser}}]
     {\case{\Inr~V}{x}{M}{x'}{M'}\xrightarrow{\tau,\emptyset}_D M'[x':= V]}{}
    &\\&
      \infer[\rtag{\ref{rule:c-proj1}}]
     {\fst~\Pair~V~V'\xrightarrow{\tau,\emptyset}_D V}{}
   \qquad
   \infer[\rtag{\rlabel{\rname{Proj2}}{rule:c-proj2}}]
   {\snd~\Pair~V~V'\xrightarrow{\tau,\emptyset}_D V'}{}
   &\\&
	    \infer[\rtag{\ref{rule:c-fun}}]{f(\vec{R}) \xrightarrow{\tau,\emptyset}_D M [{\vec{R'}}:={\vec{R}}]}{D(f(\vec{R'}))=M}
    &\\&
     \infer[{\rtag{\ref{rule:c-com}}}]{\com{S}{R}~V \xrightarrow{\tau,\{S,R\}}_D V[S:=R]}{\fv(V)=\emptyset}
    \qquad
    \infer[\rtag{\ref{rule:c-sel}} ]
     {\select{S}{R}~l~M \xrightarrow{\tau,\{S,R\}}_D M}{}
     &\\&
     \infer[\rtag{\ref{rule:c-insel}}]
     {\select{S}{R}~\ell~M \xrightarrow{\ell,\mathbf{R}}_D \select{S}{R}~\ell~M'}
     {M\xrightarrow{\ell,\mathbf{R}}_D M' & \mathbf{R}\cap \{S,R\}=\emptyset} 
     \qquad
     \infer[\rtag{\ref{rule:c-str}}]{M\xrightarrow{\tau,\mathbf{R}}_D M'}{M\rightsquigarrow^* N & N\xrightarrow{\tau,\mathbf{R}} N'}
  \end{eqnarray*}
  \end{spreadlines}
\caption{Semantics of \chorlam}
\label{fig:Sem}
\end{figure}

\begin{figure}[p]
\begin{spreadlines}{\rulesvsep}
  \begin{eqnarray*}&
 \infer[{\rtag{\ref{rule:n-send}}}]{\send{\role R}~L\xrightarrow{\send{\role R}~L}_{\DP} \botm}{\fv(L)=\emptyset}
    \qquad
    \infer[{\rtag{\ref{rule:n-recv}}}]
     {\recv{\role R}~\botm\xrightarrow{\recv{\role R}~L}_{\DP} L}{}
    \\&
     \infer[{\rtag{\ref{rule:n-com}}}]
     {S[B_1] \mid R[B_2]\xrightarrow{\tau_{\role S,\role R}}_{\DN} \role S[B'_1] \mid \role R[B'_2]}
     {B\xrightarrow{\send{\role S}~L}_{\DN(S)} B'_1 
     &B_2\xrightarrow{\recv{\role R}~L[\role S:=\role R]}_{\DN(R)} B'_2}
   \\&
   \infer[{\rtag{\ref{rule:n-cho}}}]
     {\oplus_{\role R}~l~B\xrightarrow{\oplus_{\role R}~l}_{\DP} B}
     {}
    \qquad
    \infer[{\rtag{\ref{rule:n-off}}}]
     {\&_{\role R}\{\ell_1:B_1,\dots,\ell_n:B_n\}\xrightarrow{\&_{\role R}\ell_i}_{\DP} B_i}{}
     \\&    
     \infer[{\rtag{\ref{rule:n-off2}}}]{\&_{\role R}\{\ell_1:B_1,\dots,\ell_n:B_n\}\xrightarrow{\mu}_{\DP} \&_{\role R}\{\ell_1:B_1',\dots,\ell_n:B_n'\}}{B_i\xrightarrow{\mu}_{\DP}B_i'\text{ for }1\leq i\leq n & \mu\in\{\tau,\lambda\}}
    \\&
     \infer[{\rtag{\ref{rule:n-cho2}}}]{\oplus_{\role R}~l~B\xrightarrow{\mu}_{\DP} \oplus_{\role R}~l~B'}{B\xrightarrow{\mu}_{\DP} B' & \mu\in\{\tau,\lambda\}}
     \qquad 
     \infer[{\rtag{\ref{rule:n-sel}}}]
     	{\role S[B_1] \mid R[B_2]\xrightarrow{\tau_{\role S,\role R}}_{\DN} \role S[B'_1] \mid \role R[B'_2]}
     	{B_1\xrightarrow{\oplus_{\role R}~\ell}_{\DN(S)} B'_1 
     	& B_2\xrightarrow{\&_{\role S}~\ell}_{\DN(R)} B'_2} 
    \\&
    \infer[{\rtag{\ref{rule:n-absapp}}}]
     {(\lambda x:T.B)~L\xrightarrow{\tau}_\DP B[x:=L]}
     {}
      \quad
     \infer[\rtag{\ref{rule:n-inabs}}]
     	{\lambda x:T.B\xrightarrow{\lambda}_D \lambda x:T.B'}
     	{B\xrightarrow{\mu}_\DP B' & \mu\in\{\tau,\lambda\} }
    \\&
     \infer[{\rtag{\ref{rule:n-app1}}}]
     	{B~B'\xrightarrow{\mu'}_\DP B''~B'}
     	{B\xrightarrow{\mu}_\DP B'' & \text{if }\mu=\lambda \text{ then } \mu'=\tau \text{ else } \mu'=\mu}
    \\&
     \infer[{\rtag{\ref{rule:n-app2}}}]
     	{L~B\xrightarrow{\mu}_\DP L~B'}
     	{B\xrightarrow{\mu}_\DP B'}  
     	\qquad
     \infer[{\rtag{\ref{rule:n-app3}}}]
     	{B~B'\xrightarrow{\tau}_\DP B~B''}
     	{B'\xrightarrow{\tau}_\DP B''}  \\&
     	\infer[{\rtag{\ref{rule:n-case}}}]
     	{\case{B}{x}{B'}{x'}{B''}\xrightarrow{\mu}_\DP \case{B'''}{x}{B'}{x'}{B''}}
     	{B\xrightarrow{\mu}_\DP B'''} \\&
     	\infer[{\rtag{\ref{rule:n-case2}}}]
     	{\case{B}{x}{B_1}{x'}{B_2}\xrightarrow{\mu}_\DP \case{B}{x}{B_1'}{x'}{B_2'}}
     	{B_1\xrightarrow{\mu}_\DP B_1' & B_2\xrightarrow{\mu}_\DP B_2' & \mu\in\{\lambda,\tau\}}       \\&
       \infer[{\rtag{\rlabel{\rname{NCaseL}}{rule:n-casel}}}]
     	{\case{\Inl~L}{x}{B}{x'}{B'}\xrightarrow{\tau}_\DP B[x:= L]}{}
    \\&
    \infer[{\rtag{\rlabel{\rname{NCaseR}}{rule:n-caser}}} ]
     {\case{\Inr~L}{x}{B}{x'}{B'}\xrightarrow{\tau}_\DP B'[x':=L]}{}
    \\&
    \infer[{\rtag{\rlabel{\rname{NProj1}}{rule:n-proj1}}}]
    	{\fst~\Pair~L~L'\xrightarrow{\tau}_\DP L}{}
    \qquad
    \infer[{\rtag{\rlabel{\rname{NProj2}}{rule:n-proj2}}} ]
    	{\snd~\Pair~L~L'\xrightarrow{\tau}_\DP L'}
      {}
    \\&
     \infer[{\rtag{\ref{rule:n-pro}}}]
     	{\role R[B]\xrightarrow{\tau_R}_{\DN} R[B']}
     	{B\xrightarrow{\tau}_{\DN(R)} B'}
    \qquad
     	\infer[{\rtag{\ref{rule:n-par}}}]
     		{\mathcal{N}\mid\mathcal{N}'\xrightarrow{\tau_{\mathbf{R}}}_{\DN} \mathcal{N}''\mid\mathcal{N}'}
     		{\mathcal{N}\xrightarrow{\tau_{\mathbf{R}}}_{\DN} \mathcal{N}''} \\&	
     \infer[\rtag{\rlabel{\rname{Nfun}}{rule:n-fun}}]{f(\vec{R}) \xrightarrow{\tau}_\DP B [{\vec{R'}}:={\vec{R}}]}{D(f(\vec{R'}))=B}
     \qquad
     \infer[\rtag{\ref{rule:n-str}}]{B\xrightarrow{\mu}_\DP B'}{B\rightsquigarrow^* B'' & B''\xrightarrow{\mu} B'}  
  \end{eqnarray*}
  \end{spreadlines}
  \caption{Semantics of networks.}
  \label{fig:NetSem}
\end{figure}

\begin{figure}[t]
	\begin{spreadlines}{\rulesvsep}
  \begin{eqnarray*}&
     \infer[{\rtag{\ref{rule:nt-cho}}}]
     	{\Sigma;\Gamma\vdash \oplus_R~\ell~B:T}
     	{ \Sigma;\Gamma\vdash B:T}
    \qquad
     \infer[{\rtag{\ref{rule:nt-off}}}]
     	{\Sigma;\Gamma\vdash \&_R\{\ell_1:B_1,\dots\ell_n:B_n\}:T}
     	{\Sigma;\Gamma\vdash B_i:T \text{ for }1\leq i\leq n} 
    \\&
    \infer[{\rtag{\ref{rule:nt-send}}}]
     {\Sigma;\Gamma\vdash \send{R}:T \rightarrow \botmt}{}
    \qquad
    \infer[{\rtag{\ref{rule:nt-recv}}}]
     {\Sigma;\Gamma\vdash \recv{R}:\botmt \rightarrow T}
     {}
		\\& 
			\infer[{\rtag{\rlabel{\rname{NTAbs}}{rule:nt-Abs}}}]
				{\Sigma;\Gamma\vdash\lambda x:T.B:T \rightarrow T'}
				{\Sigma;\Gamma,x:T\vdash B:T'} 
		\qquad
			\infer[{\rtag{\rlabel{\rname{NTVar}}{rule:nt-var}}}]
				{\Sigma;\Gamma\vdash x:T}
				{x:T\in \Gamma}
		\\& 
			\infer[{\rtag{\rlabel{\rname{NTApp}}{rule:nt-app}}}]
				{\Sigma;\Gamma\vdash B~B':T'}
				{\Sigma;\Gamma\vdash B:T\rightarrow T' 
				&\Sigma;\Gamma\vdash B:T} 
				\qquad
     \infer[{\rtag{\ref{rule:nt-app2}}}]
     	{\Sigma;\Gamma\vdash B~B':\botmt}
     	{ \Sigma;\Gamma\vdash B:\botmt & \Sigma;\Gamma\vdash B':\botmt}
		\\& 
			\infer[{\rtag{\rlabel{\rname{NTCase}}{rule:nt-case}}}]
				{\Sigma;\Gamma\vdash \case{B}{x}{B'}{x'}{B''}:T}
				{\Sigma;\Gamma\vdash B:T_1 + T_2 & \Sigma;\Gamma,x:T_1\vdash B':T & \Sigma;\Gamma,x':T_2\vdash B'':T} 
		\\& 
			\infer[{\rtag{\rlabel{\rname{NTDef}}{rule:nt-fun}}}]
				{\Sigma;\Gamma\vdash f:T}
				{f:T\in \Gamma } 
		\qquad
      \infer[{\rtag{\rlabel{\rname{NTUnit}}{rule:nt-unit}}}]
			{\Sigma;\Gamma\vdash \unit:\unitt}{}
		\qquad
    \infer[{\rtag{\ref{rule:nt-botm}}} ]
			{\Sigma;\Gamma\vdash \botm:\botmt}
      {}
		\\& 
    \infer[{\rtag{\rlabel{\rname{NTPair}}{rule:nt-pair}}}]
			{\Sigma;\Gamma\vdash \Pair:T \rightarrow_\emptyset T' \rightarrow_\emptyset (T\times T')}{}
		\\& 
    \infer[{\rtag{\rlabel{\rname{NTProj1}}{rule:nt-proj1}}}]
			{\Sigma;\Gamma\vdash \fst:(T \times T') \rightarrow_\emptyset T}
      {}
		\qquad
    \infer[{\rtag{\rlabel{\rname{NTProj2}}{rule:nt-proj2}}}]
			{\Sigma;\Gamma\vdash \snd:(T \times T') \rightarrow_\emptyset T'}{}
		\\& 
			\infer[{\rtag{\rlabel{\rname{NTEq}}{rule:nt-eq}}}]
				{\Sigma;\Gamma\vdash B:T}
				{\Sigma;\Gamma\vdash B:T' 
				&\{T=T',T'=T\}\cap\Sigma\neq\emptyset} 
		\\&
			\infer[{\rtag{\rlabel{\rname{NTDefs}}{rule:nt-defs}}}]
				{\Sigma;\Gamma\vdash \DP}
				{\forall f\in \mathsf{domain}(\DP) 
				&f:T\in \Gamma
				&\Sigma;\Gamma\vdash \DP(f):T}
  \end{eqnarray*}
  \end{spreadlines}
	\label{fig:ProcType}
	\caption{Typing rules for simple processes.}
\end{figure}

\begin{definition}[Merging]\label{def:merge}
Given two behaviours $B$ and $B'$, $B\merge B'$ is defined as follows.
\begin{eqnarray*}
& B_1~B_2\merge B'_1~B'_2= (B_1\merge B'_1)~(B_2\merge B'_2) \\
& \case{B_1}{x}{B_2}{y}{B_3}\merge \case{B'_1}{x}{B'_2}{y}{B'_3}= \\
& \quad \case{(B_1\merge B'_1)}{x}{(B_2\merge B'_2)}{y}{(B_3\merge B'_3)} \\
& \oplus_R~\ell~B \merge \oplus_R~\ell~B'=\oplus_R~\ell~(B\merge B') \\
& \&\{\ell_i:B_i\}_{i\in I} \merge \&\{\ell_j:B'_j\}_{j\in J}= \&\left(\{\ell_k:B_k\merge B'_k\}_{k\in I\cap J} \cup \{\ell_i:B_i\}_{i\in I\setminus J}\cup \{\ell_j:B'_j\}_{j\in J\setminus I}\right) \\
& x\merge x=x \quad\quad
\lambda x:T.B\merge \lambda x:T.B'=\lambda x:T.(B\merge B') \\
& \fst \merge \fst=\fst \quad\quad \snd\merge \snd=\snd \\
& \Inl~L \merge \Inl~L' =\Inl~(L\merge L') \quad\quad  \Inr~L \merge \Inr~L' =\Inr~(L\merge L') \\
& \Pair~L_1~L_2 \merge \Pair~L'_1~L'_2 =\Pair~(L_1\merge L_1')~(L_2\merge L'_2)\quad\quad
f\merge f=f \\
& \recv{R}\merge \recv{R}= \recv{R}\quad\quad
\send{R}\merge \send{R}=\send{send}{R}
\end{eqnarray*}
\end{definition}

\begin{figure}\begin{footnotesize}\begin{eqnarray*}
\lefteqn{\mbox{Choreographies:}}\\
    && \epp{M~N}R =
    \begin{dcases*}\epp MR~\epp NR & if $R\in \roles(\type(M))$ or $R\in\roles(M)\cap\roles(N)$ \\
    \botm & if $\epp MR=\epp NR=\botm$ \\ 
    \epp MR & if $\epp NR=\botm$ \\
    \epp NR & otherwise \\
\end{dcases*} \\[5pt]
    && \epp{\lambda x:T.M}R =
    \begin{dcases*} \lambda x.\epp MR & if $R\in \roles(\type(x:T.M))$ \\
      \botm & otherwise \end{dcases*} \\[5pt]
    && \epp{\case{M}{x}{N}{x'}{N'}}R = \\[5pt]
    && \qquad
    \begin{dcases*} \case{\epp MR}{x}{\epp NR}{x'}{\epp{N'}R} & if $R\in \roles(\type(M))$ \\
    \botm & if $\epp MR= \epp NR= \epp {N'}R=\botm$ \\
    \epp MR & if $\epp NR=\epp {N'}R=\botm$ \\
    \epp NR \merge \epp {N'}R & if $\epp MR=\botm$ \\
      (\lambda x'':\botmt.\epp NR \merge \epp{N'}R)~\epp MR & for some $x''\notin {\fv(N) \cup \fv(N')}$ \\ & otherwise \end{dcases*} \\[5pt]
    && \epp{\select{S}{S'}~\ell~M}R =
    \begin{dcases*}\oplus_{S'}~\ell~\epp MR & if $R=S\neq S'$ \\
      \&_S\{\ell:\epp MR\} & if $R=S'\neq S$ \\
      \epp MR & otherwise \end{dcases*} \\[5pt]
    && \epp{\com{S}{S'}}R =
    \begin{dcases*} \lambda x.x & if $R=S=S'$ \\
      \send{S'} & if $R=S\neq S'$\\
      \recv{S} & if $R=S'\neq S$\\
      \botm & otherwise \end{dcases*} \\
    && \epp{\litAt{\unit}{S}}R= \begin{dcases*}
    \unit & if $S=R$ \\
    \botm & otherwise
    \end{dcases*}
    \qquad\qquad
    \epp xR =
    \begin{dcases*} x & if $R\in \roles(\type(x))$ \\ \botm & otherwise \end{dcases*}
    \\
    &&
    \epp {f(\vec{R})}R=
    \begin{dcases*}f_i( R_1,\dots,R_{i-1},R_{i+1},\dots, R_n) & if $\vec{R}= R_1,\dots,R_{i-1},R,R_{i+1},\dots, R_n$ \\
    \botm & otherwise
    \end{dcases*} \\
&& \llbracket \Pair~V~V' \rrbracket_R= \begin{dcases*} \Pair~\llbracket V \rrbracket_R~\llbracket  V' \rrbracket_R & if $R\in\roles(\type(V)\times \type(V'))$ \\
\botm & otherwise \end{dcases*} \\[5pt]
&& \llbracket \fst \rrbracket_R= \begin{dcases*} \fst & if $R\in\roles(\type(\fst))$ \\
\botm & otherwise \end{dcases*} \qquad
 \llbracket \snd \rrbracket_R= \begin{dcases*} \snd & if $R\in\roles(\type(\snd))$ \\
\botm & otherwise \end{dcases*} \\[5pt]
&& %
 \llbracket \Inl~V \rrbracket_R= \begin{dcases*} \Inl~\llbracket V\rrbracket_R & if $R\in\roles(\type(\Inl~V))$ \\
\botm & otherwise \end{dcases*} %
\quad \llbracket \Inr~V \rrbracket_R= \begin{dcases*} \Inr~\llbracket V\rrbracket_R & if $r\in\roles(\type(\Inr~V))$ \\
\botm & otherwise \\
\end{dcases*}  \\[5pt]
\lefteqn{\mbox{Types:}}\\
&& \llbracket T\rightarrow_\rho T' \rrbracket_R= \begin{dcases*}\llbracket T \rrbracket_R \rightarrow \llbracket T' \rrbracket_R & if $R\in \rho\cup\roles(T)\cup \roles(T')$ \\
\botmt & otherwise
\end{dcases*} \qquad
\llbracket \unitt@S \rrbracket_R= \begin{dcases*}
    \unitt & if $S=R$ \\
    \botmt & otherwise
    \end{dcases*}\\[5pt]
&& \llbracket T\times T' \rrbracket_R= \begin{dcases*} \llbracket T \rrbracket_R \times \llbracket  T' \rrbracket_R & if $R\in\roles(T\times T')$ \\
\botmt & otherwise \end{dcases*}   \quad \llbracket T+ T' \rrbracket_R= \begin{dcases*} \llbracket T \rrbracket_R + \llbracket  T' \rrbracket_R & if $R\in\roles(T+ T')$ \\
\botmt & otherwise \end{dcases*} \\[5pt]
&&
\epp{t@\vec{R}}R =
    \begin{dcases*} t_i & if $\vec{R}= R_1,\dots,R_{i-1},R,R_{i+1},\dots, R_n$ \\ \botmt & otherwise \end{dcases*} \\[5pt]
    \lefteqn{\mbox{Definitions:}} \\
    && \epp D{} = \{f_i( R_1,\dots, R_{i-1},R_{i+1},\dots, R_n) \mapsto \epp{D(f( R_1,\dots, R_n))}{R_i} \mid f( R_1,\dots, R_n)\in\mathsf{domain}(D) \}\}
\end{eqnarray*}\end{footnotesize}\caption{Projecting \chorlam onto a role}\label{fig:Proj}
\end{figure}

\begin{definition}[Context]\label{def:context}
We define a context $C[]$ in \chorlam as follows:
\[\begin{array}{rl}
C[]\Coloneqq& [] \mid M~C[] \mid C[]~M \mid \select{\role R}{\role R}~l~C[] \mid 
\case {C[]} xM{x}{M} \\ & \mid \case M x{C[]}{x}{M} \mid \case M x{M}{x'}{C[]} \mid
\lambda{x:T}.C[]
\end{array}\]
\end{definition}

\FloatBarrier
\section{Proof of \cref{thm:TypeReduc}}\label{app:TypeReduc}

\begin{proof}[Proof of \cref{thm:TypeReduc}]
We prove this by induction on the typing derivation of $\Theta;\Sigma;\Gamma\vdash M:T$. Most cases either $M$ is a value, or the result follows from simple induction, we go through the rest.
\begin{itemize}
\item Assume we use \cref{rule:t-app}, so $M=N_1~N_2$, $\Theta;\Sigma;\Gamma\vdash N_1:T'\rightarrow_\rho T$, and $\Theta;\Sigma;\Gamma\vdash N_2:T'$. If $N_1$ or $N_2$ is not a value then the result follows from induction and using \cref{rule:c-app1} or \cref{rule:c-app2}. Otherwise, we have four cases:
\begin{itemize}
\item Assume $\Theta;\Sigma;\Gamma\vdash N_1:T'\rightarrow_\rho T$ uses \cref{rule:t-abs}. Then the result follows using \cref{rule:c-appabs}.
\item Assume $\Theta;\Sigma;\Gamma\vdash N_1:T'\rightarrow_\rho T$ uses \cref{rule:t-com}. Then, since $M$ is closed, the result follows by \cref{rule:c-com}.
\item Assume $\Theta;\Sigma;\Gamma\vdash N_1:T'\rightarrow_\rho T$ uses \cref{rule:t-proj1}. Then, since $M$ is closed and $N_2$ is a value, $N_2=\Pair~V~V'$, and consequently the result follows using \cref{rule:c-proj1}.
\item Assume $\Theta;\Sigma;\Gamma\vdash N_1:T'\rightarrow_\rho T$ uses \cref{rule:t-proj2}. Then, since $M$ is closed and $N_2$ is a value, $N_2=\Pair~V~V'$, and consequently the result follows using \cref{rule:c-proj2}.
\end{itemize}
\item Assume we use \cref{rule:t-case}, so $M=\case{N_1}{x}{N_2}{x'}{N_3}$, $\Theta;\Sigma;\Gamma\vdash N_1:T_1+T_2$, $\Theta;\Sigma;\Gamma,x:T_1\vdash N_2:T$, and $\Theta;\Sigma;\Gamma,x':T_2\vdash N_3:T$. Then if $N_1$ is not a value the result follows from induction and using \cref{rule:c-case}. If $N_1$ is a value then, since $M$ is closed, either $N_1=\Inl~V$ or $N_1=\Inr~V$, and the result follows by \cref{rule:c-casel} or \cref{rule:c-caser} respectively.
\item Assume we use \cref{rule:t-sel} so $M=\select{S}{R}~l~N:T$, $\Theta;\Sigma;\Gamma\vdash N:T$, and $S,R\in\Theta$. Then the result follows from using \cref{rule:c-sel}.
\item Assume we use \cref{rule:t-fun} and $M=f(\vec{R})$, $f(\vec{R'}:T'\in\Gamma$, $\vec{\role R}\subseteq \Theta$, $||\vec{R}||=||\vec{R'}||$, $\dist(\vec{R})$, and $T=T'[\vec{R'}:=\vec{R}]$. Then the result follows from $D$ containing $f$, $\Theta;\Sigma;\Gamma\vdash D$ and \cref{rule:c-fun}. 
\end{itemize}
\end{proof}
\section{Proof of \cref{thm:TypePres}}\label{app:TypePres}
\begin{lemma}\label{thm:ThetaExp}
Given a choreography, $M$, if $\Theta;\Sigma;\Gamma\vdash M$ then $\Theta\cup\Theta';\Sigma;\Gamma\vdash M$
\end{lemma}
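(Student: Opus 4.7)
The plan is a straightforward induction on the derivation of $\Theta;\Sigma;\Gamma\vdash M$ (understood as $\Theta;\Sigma;\Gamma\vdash M:T$ for some type $T$, since the lemma statement omits the type but the derivation must produce one).

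First I would scan the full set of typing rules in \cref{fig:typing} and \cref{fig:Type} and observe that every side condition that mentions $\Theta$ has a \emph{monotone} shape: either $X\subseteq\Theta$ for some set of roles $X$ determined by the conclusion (as in \cref{rule:t-var,rule:t-com,rule:t-sel,rule:t-fun,rule:t-eq,rule:t-unit,rule:t-proj1,rule:t-proj2,rule:t-inl,rule:t-inr}), or the inclusion $\Theta'\subseteq\Theta$ between a locally-fixed role context and $\Theta$ in \cref{rule:t-abs}. Every such condition is automatically preserved when $\Theta$ is replaced by any superset $\Theta\cup\Theta'$.

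Then, for each rule applied at the root of the derivation, I would reapply the same rule with $\Theta\cup\Theta'$ in place of $\Theta$, invoking the inductive hypothesis on exactly those premises whose role context is the outer $\Theta$ (e.g.\ both premises of \cref{rule:t-app} and \cref{rule:t-case}, the premise of \cref{rule:t-sel}, and the premise of \cref{rule:t-eq}). The only case worth singling out is \cref{rule:t-abs}: its premise is typed in a different, locally-determined role context $\Theta''=\rho\cup\roles(T)\cup\roles(T')$ rather than in $\Theta$. Here the inductive hypothesis is not even needed---I would reuse the premise verbatim and simply observe that $\Theta''\subseteq\Theta\cup\Theta'$ follows from $\Theta''\subseteq\Theta$.

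No step poses a genuine obstacle; the lemma is a pure monotonicity-in-$\Theta$ property of the type system. The only thing to be careful about is not trying to mechanically push the inductive hypothesis into the premise of \cref{rule:t-abs}, where the role context is independent of the one being weakened---but this is a clarification rather than a difficulty, since the unweakened premise already suffices to reapply the rule.
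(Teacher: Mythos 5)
Your proposal is correct and is essentially the paper's own argument: the paper's proof consists of the single observation that ``the typing rules only ever discuss subsets of $\Theta$,'' which is exactly the monotonicity property you identify, and your induction on the derivation (including the remark that the premise of \cref{rule:t-abs} lives in a locally-determined role context and needs no inductive hypothesis) just makes explicit what the paper leaves implicit.
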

\begin{proof}
Follows from the typing rules only ever discussing subsets of $\Theta$.
\end{proof}
\begin{lemma}[Type preservation under rewriting]\label{thm:TypePresRew}
Let $M$ be a choreography.
  If there exists a typing context $\Theta;\Sigma;\Gamma$ such that $\Theta;\Sigma;\Gamma\vdash M:T$, then $\Theta;\Sigma;\Gamma\vdash M':T$ for any $M'$ such that $M\rightsquigarrow M'$.
\end{lemma}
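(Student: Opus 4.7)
The plan is to proceed by induction on the derivation of $M \rightsquigarrow M'$, taking each of the six rewriting rules in \cref{fig:rewrite} as a base case and treating the (implicit) congruence cases by a direct application of the inductive hypothesis. For each rule, I would first invert the typing derivation of the left-hand side to extract typings for its subterms, and then reassemble these into a typing derivation for the right-hand side using the corresponding introduction rules and, where needed, the $\Theta$-weakening result of \cref{thm:ThetaExp} together with a standard weakening lemma stating that typing is preserved by adding to $\Gamma$ a fresh variable not appearing free in the term.

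The prototypical and most delicate case is \cref{rule:r-absr}, where $((\lambda x:T.M)~N)~M' \rightsquigarrow (\lambda x:T.(M~M'))~N$. Iterated inversion of \cref{rule:t-app} and \cref{rule:t-abs} yields, writing $T_4$ for the type of the whole term, typings $\Theta;\Sigma;\Gamma \vdash M' : T_3$ and $\Theta;\Sigma;\Gamma \vdash N : T$ along with $\Theta';\Sigma;\Gamma, x{:}T \vdash M : T_3 \rightarrow_{\rho_2} T_4$ where $\Theta' = \rho_1 \cup \roles(T) \cup \roles(T_3 \rightarrow_{\rho_2} T_4) \subseteq \Theta$. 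The side condition $x \notin \fv(M')$ lets me extend $M'$'s typing to the context $\Gamma, x{:}T$, and \cref{thm:ThetaExp} lets me enlarge its ambient role set as needed. The bookkeeping step is then to pick a fresh role annotation $\rho_{\text{new}}$ for the outer abstraction large enough to cover both $\Theta'$ and the roles used by $M'$; concretely, taking $\rho_{\text{new}} \supseteq \rho_1 \cup \rho_2 \cup \roles(T_3)$ and additionally any roles used by $M'$ and not already in $\roles(T) \cup \roles(T_4)$ gives a set $\Theta''$ satisfying the TAbs equality constraint, under which both $M$ and $M'$ can be typed; a final \cref{rule:t-app} with $N:T$ then recovers type $T_4$.

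The dual rule \cref{rule:r-absl} is handled identically, the $\sroles$ side condition being irrelevant to typing (it is needed only downstream for semantic preservation). The remaining rules follow the same pattern but with much simpler role accounting: for \cref{rule:r-caser} and \cref{rule:r-casel}, since every branch of a case has the same type as the case itself, inversion followed by a fresh \cref{rule:t-case} instance reconstructs the typing of the distributed form; for \cref{rule:r-selr} and \cref{rule:r-sell}, \cref{rule:t-sel} is transparent to types, so the reassembly is immediate.

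The main obstacle is purely bookkeeping: maintaining the equality $\rho \cup \roles(T) \cup \roles(T') = \Theta'$ required by \cref{rule:t-abs} whenever a $\lambda$ is introduced, pushed outward, or has its body enlarged with a subterm originally typed in a larger ambient role set. Once $\Theta$-weakening and variable-weakening are in place, each rule becomes a routine reconstruction and no further ingredients are needed.
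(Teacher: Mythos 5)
Your proposal is correct and follows essentially the same route as the paper's proof: a case analysis over the six rewriting rules, inverting the typing derivation of the redex and reassembling it with \cref{rule:t-app}, \cref{rule:t-abs}, \cref{rule:t-case}, and \cref{rule:t-sel}, using the freshness side conditions ($x\notin\fv(\cdot)$) to justify variable weakening and \cref{thm:ThetaExp} to align the ambient role sets. The paper is terser about the choice of the new $\rho$ annotation on the rebuilt abstraction, but your explicit bookkeeping fills in exactly the step the paper leaves implicit, so no substantive difference remains.
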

\begin{proof}
We prove this by case analysis of the the rewriting rules:
\begin{description}
\item[\cref{rule:r-absr}] Then $M=((\lambda x:T_1.N_1)~N_2)~N_3$, and from the typing rules we get that there exist $T_2$ and $\rho$ such that $\Theta;\Sigma;\Gamma,x:T_1\vdash N_1:T_2\rightarrow_\rho T$, $\Theta;\Sigma;\Gamma\vdash N_2:T_1$, and $\Theta;\Sigma;\Gamma\vdash N_3:T_2$, and $M'=(\lambda x:T_1.(N_1~N_3))~N_2$. The result follows from using \cref{rule:t-app,rule:t-abs} and $x\notin \fv(N_3)$.
\item[\cref{rule:r-absl}] Then $M=N_1~((\lambda x:T_1.N_3)~N_2)$, $\Theta;\Sigma;\Gamma,x:T_1\vdash N_1:T_2\rightarrow_\rho T$, $\Theta;\Sigma;\Gamma\vdash N_2:T_1$, and $\Theta;\Sigma;\Gamma\vdash N_3:T_2$, and $M'=(\lambda x:T_1.(N_1~N_3))~N_2$. The result follows from using \cref{rule:t-app,rule:t-abs} and $x\notin \fv(N_1)$.
\item[\cref{rule:r-caser}] Then $M=\case{N_1}{x}{N_2}{x'}{N_3})~N_4$, and from the typing rules we get that there exist $T_1$, $T_2$, $T_3$, and $\rho$ such that $\Theta;\Sigma;\Gamma\vdash N_1:T_1 + T_2$, $\Theta;\Sigma;\Gamma,x:T_1\vdash N_2:T_3\rightarrow_\rho T$, $\Theta;\Sigma;\Gamma,x':T_2\vdash N_3:T_3\rightarrow_\rho T$, and $\Theta;\Sigma;\Gamma\vdash N_4:T_3$, and $M'=\case{N_1}{x}{N_2~N_4}{x'}{N_3~N_4})$. The result follows from using \cref{rule:t-case,rule:t-abs} and $x,x'\notin \fv(N_4)$.
\item[\cref{rule:r-casel}] This case is similar to the previous.
\item[\cref{rule:r-selr}] Then $M=N_1~(\select{S}{R}~l~N_2)$, and from the typing rules we get that there exist $T'$, and $\rho$ such that $\Theta;\Sigma;\Gamma\vdash N_1:T'\rightarrow_\rho T$ and $\Theta;\Sigma;\Gamma\vdash N_2:T'$, and $M'=\select{S}{R}~l~(N_1~N_2)$. The result follows from using \cref{rule:t-sel,rule:t-abs}.
\item[\cref{rule:r-sell}]This case is similar to the previous.
\end{description}
\end{proof}

\begin{proof}[Proof of \cref{thm:TypePres}]
We prove this by induction on the derivation of $M\xrightarrow{\tau,\mathbf{R}}_D M'$. The cases for \cref{rule:c-appabs,rule:c-app1,rule:c-app2} are standard for simply-typed $\lambda$-calculus. And the cases for \cref{rule:c-inabs,rule:c-app3,rule:c-case,rule:c-incase,rule:c-insel} follow from simple induction. We go through the rest.
\begin{itemize}
\item Assume we use \cref{rule:c-casel}. Then we know that $M=\case{\Inl~V}{x}{N_1}{x'}{N_2}$, and from the typing rules we get that there exists $T'$ such that $\Theta;\Sigma;\Gamma\vdash V:T'$ and $\Theta;\Sigma;\Gamma,x:T'\vdash N_1:T$. Therefore, $\Theta;\Sigma;\Gamma\vdash N_1[x:=V]:T$.
\item Assume we use \cref{rule:c-caser}. This is similar to the previous case.
\item Assume we use \cref{rule:c-proj1}. Then we know $M=\fst~\Pair~V~V'$, and from the typing rules we get that $\Theta;\Sigma;\Gamma\vdash V:T$.
\item Assume we use \cref{rule:c-proj2}. This is similar to the previous case.
\item Assume we use \cref{rule:c-fun}. From the typing of $M$ we get that there exists $f(\vec{R'}):T\in\Gamma$ such that $||\vec{R}||=||\vec{R'}||$, $\vec{R}\subseteq \Theta$, and $\dist(\vec{R'})$. From the typing of $D$ we get that $\dist(\vec{R'})$ and $\vec{R'};\Sigma;\Gamma\vdash D(f(\vec{R'}))$. Therefore, by \cref{thm:ThetaExp}, we get $\Theta;\Sigma;\Gamma\vdash D(f(\vec{R'}))$.
\item Assume we use \cref{rule:c-com}. Then we know that $M=\com{S}{R}~V$, $\fv(V)=\emptyset$, and there exists $T'$ such that $\Theta;\Sigma;\Gamma\vdash V:T'$, $\roles(T')=\{S\}$, and $T=T'[S:=R]$. We see from our typing rules that the only time we use roles not mentioned in the choreography in typing is when handling free variables. Therefore we get that $\Theta;\Sigma;\Gamma\vdash V[S:=R]:T$.
\item Assume we use \cref{rule:c-sel}. Then we know that $M=\select{S}{R}~l~N$ and $\Theta;\Sigma;\Gamma\vdash N:T$. The result follows.
\item Assume we use \cref{rule:c-str}. Then the result follows from \cref{thm:TypePresRew} and induction.
\end{itemize}
\end{proof}

\subsection{Proof of \Cref{thm:ChorToNet}}\label{pro:ChorToNet}

\begin{lemma}\label{thm:Induction}
Given a choreography $M$, if $\Theta;\Sigma;\Gamma\vdash M:T$ then for any role $\role{R}$ in $\roles(M)$, $\llbracket M\rrbracket_R=L$ if $M=V$.
\end{lemma}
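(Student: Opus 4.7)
The plan is to proceed by structural induction on the value $V$, inspecting the projection rule associated to each value constructor and verifying that the produced behaviour lies in the grammar of local values $L$. The restriction to $R\in\roles(M)$ is mainly stylistic: since $\botm$ is itself a local value, the same result would in fact hold unconditionally, which makes every ``otherwise'' branch of the projection harmless.

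First I would dispatch the base cases. For $V=x$, the projection yields either $x$ or $\botm$, both of which are local values. For $V=\litAt{\unit}{R'}$, projection gives $\unit$ (when $R=R'$) or $\botm$. For $V=\fst$ and $V=\snd$, projection returns the constant or $\botm$. For $V=\com{S}{R'}$, inspecting the four clauses of the projection for communications shows that the outcomes are $\lambda x{:}\epp{T}{R}.x$, $\send{R'}$, $\recv{S}$, or $\botm$, all in $L$.

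For the inductive cases I would use the typing-derived subderivations for the premises: for $V=\lambda x{:}T.M$ the projection is $\lambda x{:}\epp{T}{R}.\epp{M}{R}$ or $\botm$, where the body is a well-formed behaviour because the lemma is about the shape of the outer constructor, not the body; so this case follows directly from the projection rule's output being syntactically a $\lambda$-abstraction, which is in $L$. For $V=\Inl~V'$, $V=\Inr~V'$, and $V=\Pair~V_1~V_2$, the projection rules push projection inside the constructor, and the induction hypothesis applied to $V'$ (respectively $V_1,V_2$) gives that $\epp{V'}{R}$ (respectively $\epp{V_1}{R}$ and $\epp{V_2}{R}$) is a local value, so that $\Inl~\epp{V'}{R}$, $\Inr~\epp{V'}{R}$, and $\Pair~\epp{V_1}{R}~\epp{V_2}{R}$ fit the $L$ grammar.

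The hardest part is simply bookkeeping: for constructors like $\Inl$, $\Inr$, and $\Pair$ one has to check that the ``otherwise'' branch in the projection (yielding $\botm$) is correctly covered, and that the recursive calls inside the non-trivial branches only ever apply projection to sub-values, so that the induction hypothesis is legitimately invocable. Once this is established for every constructor in the value grammar, the lemma follows.
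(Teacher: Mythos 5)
Your proposal is correct and matches the paper's approach: the paper's proof is simply ``Straightforward from the projection rules,'' and your structural induction on the value constructors, checking that each clause of the projection outputs a term in the grammar of local values $L$ (with $\botm$ covering every ``otherwise'' branch), is exactly that argument made explicit.
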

\begin{proof}
Straightforward from the projection rules.
\end{proof}

\begin{lemma}\label{thm:TypeUnit}
Given a type $T$, for any role $R\notin \roles(T)$, $\llbracket T\rrbracket_R=\botmt$.
\end{lemma}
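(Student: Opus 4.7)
The plan is to proceed by straightforward structural induction on the type $T$, inspecting each clause of the projection function for types in \cref{fig:projection} and checking that the side condition which triggers the non-$\botmt$ branch fails precisely when $R \notin \roles(T)$. Since every clause of type projection already contains an explicit ``otherwise'' branch that returns $\botmt$, the only work is to verify that the guard of the ``non-otherwise'' branch implies $R \in \roles(T)$.

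First I would dispatch the base cases. If $T = \UnitAt{S}$, then $\roles(T) = \{S\}$, so $R \notin \roles(T)$ gives $S \neq R$, and the projection clause returns $\botmt$. If $T = t@\vec{R'}$, then $\roles(T) = \vec{R'}$, so $R \notin \roles(T)$ means $R$ does not occur among $R_1,\dots,R_n$, and the projection again falls to the $\botmt$ branch.

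Next I would handle the inductive cases. For $T = T_1 \times T_2$ (and symmetrically $T = T_1 + T_2$), the definition gives $\roles(T) = \roles(T_1) \cup \roles(T_2)$, so $R \notin \roles(T)$ triggers the otherwise branch directly and yields $\botmt$ (the induction hypothesis is not even needed here, since the outer clause short-circuits on the role membership test). For $T = T_1 \rightarrow_\rho T_2$, we have $\roles(T) = \rho \cup \roles(T_1) \cup \roles(T_2)$, so $R \notin \roles(T)$ falsifies the guard $R \in \rho \cup \roles(T_1) \cup \roles(T_2)$, again producing $\botmt$.

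There is no real obstacle here: the statement follows by reading the projection rules against the definition of $\roles(T)$, and every case collapses immediately because the projection is defined precisely so that the $\botmt$ branch is taken on absent roles. The induction hypothesis is technically only needed to ensure well-definedness of the recursive calls in the ``non-otherwise'' branches, but is not used to conclude the goal. This lemma will then feed directly into the proof of \cref{thm:RedToUnit}, combined with \cref{thm:Induction}.
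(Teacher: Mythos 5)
Your proof is correct and matches the paper's approach: the paper simply states that the lemma is straightforward by induction on $T$, and your case analysis of the projection clauses against the definition of $\roles(T)$ is exactly the detail that claim elides. Your observation that the induction hypothesis is never actually invoked (each clause short-circuits on the role-membership guard) is accurate and harmless.
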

\begin{proof}
Straightforward from induction on $T$.
\end{proof}

\begin{lemma}\label{thm:ValUnit}
Given a value $V$, for any role $R\notin \roles(\type(V))$, we have $\llbracket V\rrbracket_R=\botm$.
\end{lemma}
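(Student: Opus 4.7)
The plan is to proceed by a straightforward case analysis on the form of the value $V$, matching each clause of the projection rules (Figure~\ref{fig:projection}, expanded in Figure~\ref{fig:Proj}) against the role set of $\type(V)$ as determined by the typing rules of \cref{fig:Type}. No substantive induction is required, since every projection clause for a value form contains an explicit ``otherwise $\botm$'' branch; the entire obligation is to verify that the role-membership guard in that clause coincides with ``$R\in\roles(\type(V))$''.

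First I would enumerate the value forms from the grammar: $x$, $\lambda x{:}T.M$, $\Inl~V'$, $\Inr~V'$, $\fst$, $\snd$, $\Pair~V_1~V_2$, $\litAt{\unit}{R'}$, and $\com{S}{R'}$. For $x$, $\Inl$, $\Inr$, $\Pair$, $\fst$, and $\snd$, the defining clause of $\epp{V}{R}$ already returns $\botm$ under the guard ``$R\notin\roles(\type(V))$'', so the conclusion is immediate. For $\lambda x{:}T.M$, the clause tests $R\in\roles(\type(x{:}T.M))$, which by \cref{rule:t-abs} unfolds to $\rho\cup\roles(T)\cup\roles(T')=\roles(T\to_\rho T')$, and this is exactly the role set of the function's type; outside this set, the projection is $\botm$.

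For the two primitive values I would read off the type directly from the typing rules. By \cref{rule:t-unit}, $\litAt{\unit}{R'}$ has type $\UnitAt{R'}$, so $\roles(\type(V))=\{R'\}$, and the projection yields $\botm$ whenever $R\neq R'$. By \cref{rule:t-com}, $\com{S}{R'}$ has type $T\to_\emptyset T[S:=R']$ with $\roles(T)=\{S\}$, giving $\roles(\type(V))=\{S,R'\}$; the four-way clause for $\com{S}{R'}$ returns $\botm$ precisely when $R\notin\{S,R'\}$.

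The argument is therefore a finite case analysis with no real obstacle, as the projection rules were crafted to satisfy exactly this property. The only care required is, in each case, to check that the role-membership test appearing in the projection clause literally agrees with $R\in\roles(\type(V))$; for compound constructors this comes down to unfolding $\roles$ of the type built by the corresponding typing rule, and for the two primitives it comes down to reading the rule's conclusion.
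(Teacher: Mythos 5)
Your proposal is correct and follows essentially the same route as the paper, which likewise dispatches the lemma directly from the projection clauses (together with the observations that $\roles$ of a value's type matches the guard in each clause and that types outside their roles project to $\botmt$). Your case-by-case unfolding is just a more explicit rendering of the paper's one-line argument.
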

\begin{proof}
Follows from \cref{thm:Induction,thm:TypeUnit} and the projection rules.
\end{proof}

\begin{lemma}\label{thm:squignorestrict}
If $M\rightsquigarrow M'$ and $M\xrightarrow{\tau,\mathbf{R}}_D M''$ then $M'\xrightarrow{\tau,\mathbf{R}}_D M'''$ such that $M''\rightsquigarrow^* M'''$
\end{lemma}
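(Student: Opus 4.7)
The plan is to proceed by case analysis on the rewriting rule witnessing $M \rightsquigarrow M'$, and for each such rule to further analyse the derivation of the reduction $M \xrightarrow{\tau,\mathbf{R}}_D M''$. In every case we exhibit a matching reduction $M' \xrightarrow{\tau,\mathbf{R}}_D M'''$ and a rewriting sequence $M'' \rightsquigarrow^* M'''$. The intuition is that rewriting only reshuffles the syntactic shape of a choreography by pushing subterms under binders whose variables they do not capture, so any reduction available in the original shape must still be available after reshuffling, possibly after further rewriting to re-expose the appropriate redex.

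First I would dispatch the easy cases. If the reduction fires entirely inside a subterm that is moved unchanged by the rewriting (e.g., inside the $N$ of \cref{rule:r-absr}, or inside the $M_i$ of \cref{rule:r-caser}), then the same reduction can be applied in $M'$ and the side-conditions on the rewriting rule guarantee that $\mathbf{R}$ is still compatible with the surrounding context. The resulting $M''$ and $M'''$ are either identical or related by one more application of the same rewriting rule. For reductions that fire at the top level of the outer application (for instance, \cref{rule:c-app1,rule:c-app3} chosen so that the reducing position is itself inside a rewritten redex), the key observation is that \cref{rule:c-inabs,rule:c-incase,rule:c-insel} let us propagate a $\tau$- or $\lambda$-labelled reduction outward through a binder, so after the rewriting the same reduction is reachable by one extra inference step.

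The delicate cases are the ``L''-rules (\crefv{rule:r-absl,rule:r-casel,rule:r-sell}) where a subterm $M'$ on the left is pushed underneath an abstraction/case/selection on the right. Here the side condition $\sroles(M') \cap \roles(N) = \emptyset$ (and analogous ones) is essential: it ensures that any communication or selection reduction issuing from $M'$ in the original term cannot conflict with synchronising roles of $N$, and so the same synchronisation remains legal once $M'$ is moved under the binder. I would check each sub-case against the side conditions in \crefv{rule:c-app1,rule:c-app3,rule:c-incase,rule:c-insel} to verify that the label $\ell$ and synchronising set $\mathbf{R}$ continue to satisfy the disjointness constraints; \cref{thm:onesynch} together with these side conditions rules out any interference. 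Reductions internal to $N$ on the right of the rewritten redex are unproblematic, as they are matched by the analogous reduction in $M'$ modulo at most one use of \cref{rule:c-str}.

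Finally, I would handle reductions derived via \cref{rule:c-str} itself: if $M \rightsquigarrow^* N \xrightarrow{\tau,\mathbf{R}} M''$, then by the inductive hypothesis applied along the rewriting chain, together with local confluence of the rewriting relation (which can be proved in passing by inspection of the rewriting rules), we can commute $M \rightsquigarrow M'$ past the $\rightsquigarrow^*$ sequence and obtain the required $M'''$. The main obstacle I anticipate is exactly this bookkeeping around \cref{rule:c-str} combined with the side conditions on synchronising roles; the argument is essentially local but requires care to ensure $\mathbf{R}$ and $\ell$ stay compatible at every step. Once these cases are in place, the lemma follows.
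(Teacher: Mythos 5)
Your proposal takes essentially the same approach as the paper, whose entire proof of this lemma is ``follows from case analysis on $M\rightsquigarrow M'$''. Your elaboration of the delicate points---the side conditions on synchronising roles for the ``L''-rules and the bookkeeping around \cref{rule:c-str}---is consistent with and more detailed than what the paper records.
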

\begin{proof}
Follows from case analysis on $M\rightsquigarrow M'$.
\end{proof}

\begin{lemma}\label{thm:squigepp}
If $M\rightsquigarrow M'$ then for any role $R$, $\epp{M}{R}\rightsquigarrow\cup\xrightarrow{\tau}^* B$ such that $B\equiv \epp{M'}{R}$
\end{lemma}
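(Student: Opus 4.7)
The plan is to proceed by structural case analysis on the choreographic rewriting rule used to derive $M \rightsquigarrow M'$, fixing an arbitrary role $R$ and inspecting how projection treats each subterm. Each of the six rules in \cref{fig:rewrite} has a direct counterpart among the process-level rewriting rules in \cref{fig:rewriteP}: \cref{rule:r-absr,rule:r-absl,rule:r-caser,rule:r-casel} correspond to \cref{rule:lr-absr,rule:lr-absl,rule:lr-caser,rule:lr-caseL}, while \cref{rule:r-selr,rule:r-sell} correspond to the offer/choice variants \cref{rule:lr-offl,rule:lr-offr,rule:lr-chol,rule:lr-chor} depending on whether $R$ is the sender of the selection (a choice $\oplus$), the receiver (an offer $\&$), or uninvolved. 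In the principal situation where $R$ participates in every subterm touched by the rewriting, the projection mirrors the choreographic structure and the matching local rule yields $\epp{M}{R} \rightsquigarrow \epp{M'}{R}$ in a single step. The side conditions line up naturally: $x \notin \fv(\cdot)$ is preserved because projection only removes free-variable occurrences, and $\sroles(M')\cap\roles(N)=\emptyset$ implies exactly the $\roles(B'')=\emptyset$ condition required by \cref{rule:lr-absl,rule:lr-caseL,rule:lr-chol,rule:lr-offl} after projecting to any role not involved in $N$.

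The more delicate sub-cases occur when $R$ participates in only some of the subterms referenced by the rule. In those situations the projection definitions for applications and case expressions (\cref{fig:projection}) take one of their degenerate branches, so that parts of $\epp{M}{R}$ collapse to $\botm$ or disappear entirely. For instance, in \cref{rule:r-absr} with $M = ((\lambda x:T.N)~N')~N''$, if $R \notin \roles(\type(\lambda x:T.N)) \cup \roles(N')$ but $R \in \roles(N'')$, then $\epp{M}{R}$ reduces to $\epp{N''}{R}$ directly, while $\epp{M'}{R}$ has the shape $\epp{(\lambda x:T.(N~N''))}{R}~\epp{N'}{R}$, which is $\equiv$-equivalent to $\epp{N''}{R}$ after applying \cref{rule:lr-botm} to merge the stray $\botm$-applications produced by projecting subterms in which $R$ does not appear. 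Similar bookkeeping works for the case- and selection-rules, where \cref{rule:lr-botm} together with the equivalence $\equiv$ of \cref{def:equiv} (which equates $P$ with $(\lambda x:\botmt.P)~\botm$) absorbs the spurious $\botm$ residues introduced by the ``otherwise'' branches of the projection of $\ccase$ and of applications. In these degenerate cases one typically has $\epp{M}{R} \equiv \epp{M'}{R}$ with zero rewriting steps required, or at most a short chain of $\rightsquigarrow$-steps using \cref{rule:lr-botm}.

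The main obstacle is purely combinatorial: each rewriting rule must be checked against every combination of (i) whether $R$ belongs to the type of the outer abstraction or case discriminant, (ii) whether $R$ belongs to the common roles of the two arguments of an application, and (iii) whether the relevant sub-projections collapse to $\botm$. \Cref{thm:ValUnit} and the adequacy-style \cref{thm:RedToUnit} are used implicitly throughout to justify that projecting at a role not involved in a subterm always yields $\botm$, which in turn lets us invoke \cref{rule:lr-botm} and $\equiv$ to close each sub-case. Once the case split is unfolded, no single branch is deep: the work is in enumerating them systematically and verifying that the side conditions of the process-level rules are discharged by the projected side conditions of the choreographic rules.
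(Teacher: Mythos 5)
Your proposal is correct and takes essentially the same approach as the paper, whose entire proof is the one-line remark that the claim follows by case analysis on the rule deriving $M\rightsquigarrow M'$; you simply carry out that case analysis, correctly separating the case where the matching local rule of \cref{fig:rewriteP} fires from the degenerate cases closed by \cref{rule:lr-botm} and the $\equiv$ of \cref{def:equiv}. One small slip: the side condition $\roles(B'')=\emptyset$ of \cref{rule:lr-absl,rule:lr-caseL,rule:lr-chol,rule:lr-offl} must be discharged at the roles that \emph{are} involved in $N$ (where the local rewrite is actually performed), via $R\in\roles(N)\Rightarrow R\notin\sroles(M')\Rightarrow\roles(\epp{M'}{R})=\emptyset$, not at roles uninvolved in $N$ --- those fall into your degenerate $\botm$/$\equiv$ cases.
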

\begin{proof}
Follows from case analysis on $M\rightsquigarrow M'$.
\end{proof}

\begin{proof}[Proof of \Cref{thm:ChorToNet}]
We prove this by structural induction on $M\xrightarrow{\tau,\mathbf{R}}_D M'$.
\begin{itemize}
\item Assume $M=\lambda x:T.N~V$ and $M'=N[x:=V]$. Then for any role $R\in \roles(\type(\lambda x:T.N))$, we have $\llbracket M\rrbracket_R=(\lambda x:\epp{T}{R}. \llbracket N\rrbracket_R)~\llbracket V\rrbracket_R$ and $\llbracket M'\rrbracket_R=\llbracket N\rrbracket_R[x:=\llbracket V\rrbracket_R]$, and for any $R'\notin \roles(\type(\lambda x:T.N))$, we have $R'\notin \roles(\type(V))$ and therefore $\llbracket M\rrbracket_{R'}=\epp {M'}{R'}=\botm$. We therefore get $R[\llbracket M\rrbracket_R]\xrightarrow{\tau}_{\llbracket D\rrbracket} \llbracket M'\rrbracket_R$ for all $R\in\roles(\type(\lambda x:T.N)$ and define $\mathcal{N}=\prod\limits_{R\in\roles(\type(\lambda x:T.N))}R[\llbracket M'\rrbracket_R]\mid\prod\limits_{R'\notin\roles(\type((\lambda x:T.N))} R'[\botm]$ and the result follows.

\item Assume $M=N~M''$, $M'=N'~M''$, and $N\xrightarrow{\tau,\mathbf{R}}_D N'$. Then for any role $R\in \roles(\type(N))$, $\llbracket M\rrbracket_R= \llbracket N\rrbracket_R~\llbracket M''\rrbracket_R$ and $\llbracket M'\rrbracket_R=\llbracket N'\rrbracket_R~\llbracket M''\rrbracket_R$. 
For any role $R'$ such that $\epp N{R'}=\epp{M''}{R'}=\botm$, by induction we have $\epp{N'}{R'}=\botm$, and therefore $\epp M{R'}=\epp{M'}{R'}=\botm$. 
For any other role $R''$ such that $\epp{N}{R''}=\botm$, by induction we get $\epp{N'}{R''}=\botm$ and therefore $\epp M{R''}=\epp{M'}{R''}=\epp{M''}{R''}$. 
For any other role $R'''$ such that $\epp{M''}{R'''}=\botm$, we get $\epp{M}{R'''}=\epp{N}{R'''}$ and $\epp{M'}{R'''}=\epp{N'}{R'''}$. 
And by induction $\llbracket N\rrbracket\rightarrow^*_{\llbracket D\rrbracket} \mathcal{N}_N$ and $N'\rightarrow^*_{\llbracket D\rrbracket} N''$ for $\mathcal{N}_N\sqsupseteq\epp{N''}{}$. 
For any role $\role{R}$ we therefore get $\llbracket N\rrbracket_R\xrightarrow{\mu_0}_{\llbracket D\rrbracket}\xrightarrow{\mu_1}_{\llbracket D\rrbracket}\dots B_R$ for $B_R\sqsupseteq \epp{N''}{R}$ for some sequences of transitions $\xrightarrow{\mu_0}_{\llbracket D\rrbracket}\xrightarrow{\mu_1}_{\llbracket D\rrbracket}\dots$, and from the network semantics we get 
\[\llbracket M\rrbracket \rightarrow^* \begin{array}{l}
\prod\limits_{R\in \roles(\type(N))\cup(\roles(N)\cap\roles(M''))}R[B_R~\llbracket M''\rrbracket_R] \mid \prod\limits_{\epp N{R'}=\epp{M''}{R'}=\botm} R'[\botm] \\ \mid \prod\limits_{\epp{M}{R''}=\epp{M''}{R''}} R''[\epp{M''}{R''}] \mid \prod\limits_{\epp{M}{R'''}=\epp{N}{R'''}} R''[B_{R''}] \\ %
\end{array}=\mathcal{N}\] and 
$  M' \rightarrow^* N''~M$.
And since $\llbracket N\rrbracket\rightarrow^*_{\llbracket D\rrbracket} \mathcal{N}'$ and $\llbracket N'\rrbracket\rightarrow^*_{\llbracket D\rrbracket} \mathcal{N}'_N$, we know these sequences of transitions can synchronise when necessary, and if $\epp N{R''''}\neq  \epp{N'}{R''''}=\botm$ then we can do the extra application to get rid of this unit.

\item Assume $M=V~N$, $M'=V~N'$, and $N\xrightarrow{\tau,\mathbf{R}}_D N'$. Then for any role $R\in \roles(\type(V))$, $\llbracket M\rrbracket_R= \llbracket V\rrbracket_R~\llbracket N\rrbracket_R$ and $\llbracket M'\rrbracket_R=\llbracket V\rrbracket_R~\llbracket N'\rrbracket_R$. Since $V$ is a value, for any role $R'\notin\roles(\type(V))$, we have $\epp V{R'}=\botm$ and so for any role $R'$ such that $\epp V{R'}=\epp{N}{R'}=\botm$, by induction we get $\epp{N'}{R'}=\botm$ and therefore $\epp{M}{R'}=\epp{M'}{R'}=\botm$. For any other role $R''$ such that $\epp{V}{R''}=\botm$, we have $\epp{M}{R''}=\epp{N}{R''}$ and $\epp{M'}{R''}=\epp{N'}{R''}$. By induction, $\llbracket N\rrbracket\rightarrow^*_{\llbracket D\rrbracket} \mathcal{N}_N$ and $N'\rightarrow^*_{\llbracket D\rrbracket} N''$ for $\mathcal{N}_N\sqsupseteq\epp{N''}{}$. For any role $\role{R}$ we therefore get $\llbracket N\rrbracket_R\xrightarrow{\mu_0}_{\llbracket D\rrbracket(R)}\xrightarrow{\mu_1}_{\llbracket D\rrbracket(R)}\dots B_R$ for $B_R\sqsupseteq \epp{N''}{R}$ for some sequences of transitions $\xrightarrow{\mu_0}_{\llbracket D\rrbracket(R)}\xrightarrow{\mu_1}_{\llbracket D\rrbracket(R)}\dots$ and from the network semantics we get \[\llbracket M\rrbracket \rightarrow^* \prod_{R\in \roles(\type(N))}R[\llbracket V\rrbracket_R~B_R] \mid \prod_{R'\notin \roles(\type(N))} R'[B_{R'}]=\mathcal{N}\] and \[M' \rightarrow^* V~N''\] and the result follows.

\item Assume $M=M''~N$, $M'=M''~N'$, $N\xrightarrow{\tau,\mathbf{R}} N'$, and $\roles(M)\cap \mathbf{R}=\emptyset$. Then for any $R\in \mathbf{R}$, $\roles(\epp{M''}{R})\cap\mathbf{R}=\emptyset$ and the result follows from induction and using \cref{rule:n-app3}. 

\item Assume $M=\case{N}{x}{N'}{x'}{N''}$, $M'=\case{M''}{x}{N'}{x}{N''}$, and $N\xrightarrow{\tau,\mathbf{R}}_D M''$. Then for any role $\role{R}$ such that $R\in\roles(\type(N))$, we have $\llbracket M\rrbracket_R=\case{\llbracket N\rrbracket_R}{x}{\llbracket N'\rrbracket_R}{x'}{\llbracket N''\rrbracket_R}$ and $\llbracket M'\rrbracket_R=\case{\llbracket M''\rrbracket_R}{x}{\llbracket N'\rrbracket_R}{x'}{\llbracket N''\rrbracket_R}$. For any other role $R'$ such that $\epp N{R'}=\epp{N'}{R'}=\epp{N''}{R'}=\botm$, by induction we get $\epp{M''}{R'}=\botm$, and therefore $\epp{M}{R'}=\epp{M'}{R'}=\botm$. For any other role $R''$ such that $\epp{N}{R''}=\botm$, we get $\epp{M}{R''}=\epp{M'}{R''}=\epp{N'}{R''}\merge \epp{N''}{R''}$. For any other roles $R'''$ such that $\epp{N'}{R'''}=\epp{N''}{R'''}=\botm$, we have $\epp{M}{R'''}=\epp{N}{R'''}$ and $\epp{M'}{R'''}=\epp{M''}{R'''}$. For any other role $R''''$, we have $\llbracket M\rrbracket_{R''''}=(\lambda x:\botmt.\llbracket N'\rrbracket_{R''''}\merge\llbracket N''\rrbracket_{R''''})~\llbracket N\rrbracket_{R''''}$ and $\llbracket M'\rrbracket_{R''''}=(\lambda x.\llbracket N'\rrbracket_{R''''}\merge\llbracket N''\rrbracket_{R''''})~\llbracket M''\rrbracket_{R''''}$ for $x\notin \fv(N')\cup \fv(N'')$. The rest follows by simple induction similar to the second case.

\item Assume $M=\case{N}{x}{N_1}{x'}{N_2}$, $M'=\case{N}{x}{N_1'}{x'}{N_2'}$, $N_1\xrightarrow{\tau,\mathbf{R}}_D N_1'$, $N_1\xrightarrow{\tau,\mathbf{R}}_D N_2$, and $\mathbf{R}\cap \roles(N)=\emptyset$. Then for any role $\role{R}$ such that $R\in\roles(\type(N))$, we have $\llbracket M\rrbracket_R=\case{\llbracket N\rrbracket_R}{x}{\llbracket N'\rrbracket_R}{x'}{\llbracket N''\rrbracket_R}$ %
For any other role $R'$ such that $\epp N{R'}=\epp{N_1}{R'}=\epp{N_2}{R'}=\botm$, by induction we get $\epp{N_1'}{R'}=\epp{N_2'}{R'}=\botm$, and therefore $\epp{M}{R'}=\epp{M'}{R'}=\botm$. For any other role $R''$ such that $\epp{N}{R''}=\botm$, we get $\epp{M}{R''}=\epp{N_1}{R''}\merge \epp{N_2}{R''}$. For any other roles $R'''$ such that $\epp{N_1}{R'''}=\epp{N_2}{R'''}=\botm$, we have $\epp{M}{R'''}=\epp{N}{R'''}$. For any other role $R''''$, we have $\llbracket M\rrbracket_{R''''}=(\lambda x:\botmt.\llbracket N_1\rrbracket_{R''''}\merge\llbracket N_2\rrbracket_{R''''})~\llbracket N\rrbracket_{R''''}$. If $\epp{N_1'}{R}\merge \epp{N_2'}{R}$ is defined for all $R$ then the result follows from induction. Otherwise we have $M_1$ and $M_2$ such that $N_1'\xrightarrow{\tau,\mathbf{R}}_D M_1$ and $N_2'\rightarrow{\tau,\mathbf{R}}_D M_2$ and $\epp{M_1}{R}\merge \epp{M_2}{R}$ for all $R$, and the result follows from induction on these transitions.

\item Assume $M=\case{\Inl~V}{x}{N}{x'}{N'}$ and $M'=N[x:= V]$. Then for any role $R\in\roles(\type(\Inl~V))$, we have $\llbracket M\rrbracket_R=\case{\Inl~\llbracket V\rrbracket_R}{x}{\llbracket N\rrbracket_R}{x'}{\llbracket N'\rrbracket_R}$ and $\llbracket M'\rrbracket_R=\llbracket N[x:=\llbracket V\rrbracket_R]\rrbracket_R$. By \cref{thm:ValUnit}, $\llbracket N[x:=\llbracket V\rrbracket_R]\rrbracket_R=\llbracket N\rrbracket_R[x:=\llbracket V\rrbracket_R]$. For any other role $R'\notin\roles(\type(\Inl~V))$, $\epp{\Inl~V}{R'}=\botm$, and therefore $\epp{M}{R'}=\epp{N}{R'}\merge \epp{N'}{R'}\sqsupset \epp{N}{R'}=\epp{M'}{R'}$. The result follows.

\item Assume $M=\case{\Inr~V}{x}{N}{x'}{N'}$ and $M'=N'[x':= V]$. This case is similar to the previous.

\item Assume $M=\case{N}{x}{N_1}{x'}{N_2}$, $M'=\case{N}{x}{N_1'}{x'}{N_2'}$, $N_1\xrightarrow{\mathbf{R}}_DN_1'$, $N_2\xrightarrow{\mathbf{R}}N_2'$, and $\mathbf{R}\cap \roles(N)=\emptyset$. This case is similar to case four.

\item Assume $M=\com{S}{R} V$ and $M'=V[S:=R]$ and $\fv(V)=\emptyset$. Then if $S\neq R$, $\llbracket M\rrbracket_R=\recv{S}~\botm$, $\llbracket M'\rrbracket_R=\llbracket V[S:=R] \rrbracket_R= \llbracket V \rrbracket_R[S:=R]$ since $\roles(\type(V))=S$, $\llbracket M\rrbracket_S=\send{R}~\llbracket V \rrbracket_S$, $\llbracket M' \rrbracket_S=\botm$, and for any $R'\notin \{S,R\}$, $\llbracket M\rrbracket_{R'}=\llbracket M'\rrbracket_{R'}=\botm$. We therefore get $\llbracket M\rrbracket_R\xrightarrow{\recv{S} \llbracket V \rrbracket_S[S:=R]}_{\llbracket D\rrbracket} \llbracket M'\rrbracket_R$, $\llbracket M\rrbracket_S\xrightarrow{\send{R} \llbracket V \rrbracket_S}_{\llbracket D\rrbracket} \llbracket M'\rrbracket_S$, and $\llbracket M\rrbracket_{R'}=\llbracket M'\rrbracket_{R'}$. We define $\mathcal{N}=\mathcal{N}'=\llbracket M'\rrbracket$ and the result follows. If $S=R$, then $\llbracket M\rrbracket_R=(\lambda x. x)~\llbracket V \rrbracket_R$ and $\llbracket M'\rrbracket_R=\llbracket V \rrbracket_R$ and $\mathcal{N}=\mathcal{N}'=\llbracket M'\rrbracket$ and the result follows.

\item Assume $M=\select{S}{R}~l~M'$. Then $\llbracket M \rrbracket_{S}=\oplus_{R}~l~\llbracket M' \rrbracket_{S}$, $\llbracket M \rrbracket_{R}=\&\{l:\llbracket M' \rrbracket_{R}\}$, and for any $R'\notin \{S,R\}$, $\llbracket M \rrbracket_{R'}=\llbracket M' \rrbracket_{R'}$. We therefore get $\llbracket M \rrbracket\xrightarrow{\tau_{R,S}}_{\llbracket D\rrbracket} \llbracket M \rrbracket\setminus \{R,S\}\mid R[\llbracket M' \rrbracket_{R}]\mid S[\llbracket M' \rrbracket_{S}]$ and the result follows.

\item Assume $M=\select{S}{R}~l~N$, $M'=\select{S}{R}~l~N'$, $N\xrightarrow{\tau,\mathbf{R}}_D N'$, and $\mathbf{R}\cap \{S,R\}=\emptyset$. Then $\llbracket M \rrbracket_{S}=\oplus_{R}~l~\llbracket N \rrbracket_{S}$, $\llbracket M' \rrbracket_{S}=\oplus_{R}~l~\llbracket N' \rrbracket_{S}$, $\llbracket M \rrbracket_{R}=\&\{l:\llbracket N \rrbracket_{R}\}$, $\llbracket M' \rrbracket_{R}=\&\{l:\llbracket N' \rrbracket_{R}\}$, and for any $R'\notin \{S,R\}$, $\llbracket M \rrbracket_{R'}=\llbracket N \rrbracket_{R'}$ and $\llbracket M' \rrbracket_{R'}=\llbracket N' \rrbracket_{R'}$. The result follows from induction and using \cref{rule:n-off2,rule:n-cho2}.

\item Assume $M=\fst~\Pair~V~V'$ and $M'=V$. Then for any role $R\in \roles(\type(\Pair~M'~V'))$, $\llbracket M \rrbracket_R=\fst~\Pair~\llbracket M' \rrbracket_R~\llbracket V' \rrbracket_R$ and for any other role $R'\notin \roles(\type(\Pair~M'~V')$, we have $\llbracket M \rrbracket_{R'}=\botm$ and $\llbracket M' \rrbracket_{R'}=\botm$. We define $\mathcal{N}=\mathcal{N}'=\llbracket M'\rrbracket$ and the result follows.

\item Assume $M=\snd~\Pair~V~V'$ and $M'=V'$. Then the case is similar to the previous.

\item Assume $M=f(\vec{R})$ and $M'=D(f(\vec{R'}))[{\vec{R'}}:={\vec{R}}]$. Then the result follows from the definition of $\llbracket D\rrbracket$.\qedhere

\item Assume there exists $N$ such that $M\rightsquigarrow N$ and $N\xrightarrow{\tau,\mathbf{R}}_D M'$. Then the result follows from induction and \cref{thm:squigepp}.
\end{itemize}
\end{proof}

\subsection{Proof of \Cref{thm:NetToChor}}\label{pro:NetToChor}
\begin{definition}
Given a network $\mathcal{N}=\prod\limits_{R\in\rho} R[B_R]$, we have $\mathcal{N}\setminus \rho'= \prod\limits_{R\in(\rho\setminus \rho')} R[B_R]$
\end{definition}

\begin{lemma}
For any role $\role{R}$ and network $\mathcal{N}$, if $\mathcal{N}\xrightarrow{\tau_{\mathbf{R}}} \mathcal{N}'$ and $R\notin \mathbf{R}$ then $\mathcal{N}(R)=\mathcal{N}'(R)$.
\end{lemma}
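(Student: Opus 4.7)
The plan is to prove this lemma by straightforward induction on the derivation of $\mathcal{N}\xrightarrow{\tau_{\mathbf{R}}} \mathcal{N}'$. The only inference rules in \cref{fig:NetSem} that can conclude a transition at the network level (as opposed to the behaviour level) with a label of the form $\tau_{\mathbf{R}}$ are \cref{rule:n-com,rule:n-sel,rule:n-pro,rule:n-par}, so I would only need to check these four cases.

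First I would handle the two synchronisation rules, \cref{rule:n-com} and \cref{rule:n-sel}. In both cases the conclusion has the form $S[B_1]\mid R[B_2]\xrightarrow{\tau_{S,R}} S[B_1']\mid R[B_2']$, so $\mathbf{R}=\{S,R\}$. Since the network on the left-hand side only has $S$ and $R$ in its domain, any role $R'\notin \mathbf{R}$ is not in the domain of $\mathcal{N}$ at all, so the claim is vacuous (for roles in $\mathcal{N}$). For \cref{rule:n-pro}, the conclusion is $R[B]\xrightarrow{\tau_R} R[B']$, hence $\mathbf{R}=\{R\}$ and again the only role in $\mathcal{N}$ lies in $\mathbf{R}$.

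The interesting case is the inductive one, \cref{rule:n-par}, where $\mathcal{N}=\mathcal{N}_1\mid\mathcal{N}_2$ and $\mathcal{N}'=\mathcal{N}_1''\mid\mathcal{N}_2$ with $\mathcal{N}_1\xrightarrow{\tau_{\mathbf{R}}}\mathcal{N}_1''$. For any $R\notin\mathbf{R}$, either $R$ is in the domain of $\mathcal{N}_2$, in which case $\mathcal{N}(R)=\mathcal{N}_2(R)=\mathcal{N}'(R)$ directly by definition of parallel composition; or $R$ is in the domain of $\mathcal{N}_1$, in which case $\mathcal{N}(R)=\mathcal{N}_1(R)$ and $\mathcal{N}'(R)=\mathcal{N}_1''(R)$, and these are equal by the induction hypothesis applied to $\mathcal{N}_1\xrightarrow{\tau_{\mathbf{R}}}\mathcal{N}_1''$. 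The disjointness of the two domains required by $\mathcal{N}_1\mid\mathcal{N}_2$ ensures these two subcases are exhaustive and non-overlapping.

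There is no real obstacle to this proof; it is purely a matter of inspecting which rules can derive a $\tau_{\mathbf{R}}$-labelled network transition and observing that in each case the set $\mathbf{R}$ coincides with (or, for the inductive step, contains) exactly the set of role names whose behaviour is modified. The only subtle point worth stating explicitly is that behaviour-level rules such as \cref{rule:n-absapp} or \cref{rule:n-send} carry different kinds of labels (e.g.\ $\tau$, $\send{R}~L$, $\lambda$) and therefore do not directly conclude a $\tau_{\mathbf{R}}$ transition on a network; they can only contribute to one through the premises of \cref{rule:n-com,rule:n-sel,rule:n-pro}, whose conclusions do mention the relevant roles explicitly.
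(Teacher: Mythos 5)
Your proof is correct and matches the paper's approach: the paper simply asserts the lemma is ``straightforward from the network semantics,'' and your case analysis of the four network-level rules (the two synchronisation rules, the single-process rule, and the parallel-composition rule) is exactly the inspection that claim relies on. The observation that behaviour-level transitions carry labels other than $\tau_{\mathbf{R}}$ and so cannot directly conclude a network transition is the right justification for why those four rules are exhaustive.
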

\begin{proof}
Straightforward from the network semantics.
\end{proof}

\begin{lemma}
For any set of roles $\mathbf{R}$ and network $\mathcal{N}$, if $\mathcal{N}\xrightarrow{\tau_{\mathbf{R'}}} \mathcal{N}'$ and $\mathbf{R}\cap \mathbf{R'}=\emptyset$ then $\mathcal{N}\setminus \mathbf{R}\xrightarrow{\tau_{\mathbf{R'}}} \mathcal{N}'\setminus \mathbf{R}$.
\end{lemma}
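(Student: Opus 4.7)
The plan is to proceed by induction on the derivation of $\mathcal{N}\xrightarrow{\tau_{\mathbf{R'}}}\mathcal{N}'$, with a case analysis on the final rule applied. The intuitive content is that a transition with label $\tau_{\mathbf{R'}}$ is carried out entirely by the processes in $\mathbf{R'}$, so dropping processes outside $\mathbf{R'}$ (those in $\mathbf{R}$) leaves the derivation untouched. The previous lemma already gives us a pointwise version of this fact; what we need here is a structural version showing that the transition can be re-derived in the smaller network.

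First I would dispatch the base-case rules \cref{rule:n-pro,rule:n-com,rule:n-sel}. For \cref{rule:n-pro} we have $\mathbf{R'}=\{R\}$ and the network consists of a single process; since $\mathbf{R}\cap\{R\}=\emptyset$ the role $R$ is preserved by $\cdot\setminus\mathbf{R}$ and the rule applies verbatim to $\mathcal{N}\setminus\mathbf{R}$. For \cref{rule:n-com,rule:n-sel} we have $\mathbf{R'}=\{S,R'\}$ and the two processes $S$, $R'$ must both survive the deletion, again by disjointness; the synchronisation premises are on the surviving processes, so the same instance of the rule yields the required transition.

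The inductive case is \cref{rule:n-par}, where $\mathcal{N}=\mathcal{N}_1\mid\mathcal{N}_2$ and the derivation uses $\mathcal{N}_1\xrightarrow{\tau_{\mathbf{R'}}}\mathcal{N}_1''$. Here I would note that $\cdot\setminus\mathbf{R}$ distributes over parallel composition, i.e.\ $(\mathcal{N}_1\mid\mathcal{N}_2)\setminus\mathbf{R}=(\mathcal{N}_1\setminus\mathbf{R})\mid(\mathcal{N}_2\setminus\mathbf{R})$, because the domains of $\mathcal{N}_1$ and $\mathcal{N}_2$ are disjoint and $\cdot\setminus\mathbf{R}$ acts pointwise. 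Applying the induction hypothesis to $\mathcal{N}_1\xrightarrow{\tau_{\mathbf{R'}}}\mathcal{N}_1''$ with $\mathbf{R}\cap\mathbf{R'}=\emptyset$ gives $\mathcal{N}_1\setminus\mathbf{R}\xrightarrow{\tau_{\mathbf{R'}}}\mathcal{N}_1''\setminus\mathbf{R}$, and then a single application of \cref{rule:n-par} yields the desired transition.

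The main obstacle I expect is bookkeeping around the definition of $\cdot\setminus\mathbf{R}$ rather than any deep semantic subtlety: one must be careful that removing processes from the ambient network does not accidentally remove a synchronising participant, which is precisely what the disjointness hypothesis $\mathbf{R}\cap\mathbf{R'}=\emptyset$ rules out, and one must also verify that the synchronisation side-conditions on rules such as \cref{rule:n-com,rule:n-sel} depend only on the behaviours of the roles in $\mathbf{R'}$, not on the presence of other processes. Since the network semantics of \cref{fig:network-semantics} is built so that transitions touch only the processes named in the label, this verification reduces to inspecting each rule and confirming that no premise mentions a process outside of $\mathbf{R'}$.
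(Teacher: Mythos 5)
Your proof is correct and matches the paper's intent: the paper dismisses this lemma as ``straightforward from the network semantics,'' and your induction on the transition derivation, with the disjointness hypothesis guaranteeing that the synchronising roles in $\mathbf{R'}$ survive the restriction and that $\cdot\setminus\mathbf{R}$ distributes over parallel composition in the \ref{rule:n-par} case, is exactly the routine argument being alluded to.
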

\begin{proof}
Straightforward from the network semantics.
\end{proof}

\begin{lemma}\label{thm:squignet}
If $\epp{M}{R}\rightsquigarrow B$ then there exists $M'$ such that $M\rightsquigarrow M'$ and $B\equiv \epp{M'}{R}$
\end{lemma}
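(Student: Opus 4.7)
The plan is to prove the lemma by induction on the derivation of $\epp{M}{R}\rightsquigarrow B$, with a case analysis on which of the rules of \cref{fig:rewriteP} is applied at the top of the rewrite. For each case, I would inspect the projection definition in \cref{fig:projection} to enumerate the possible shapes of $M$ whose projection at $R$ exhibits the required structural pattern, and then apply the matching choreographic rewriting rule from \cref{fig:rewrite}. A first observation is that the rules of \cref{fig:rewriteP} (excluding \cref{rule:lr-botm}) are in rough correspondence with those of \cref{fig:rewrite}: \cref{rule:lr-absr,rule:lr-caser,rule:lr-offr,rule:lr-chor} correspond to \cref{rule:r-absr,rule:r-caser,rule:r-selr}, while \cref{rule:lr-absl,rule:lr-casel,rule:lr-offl,rule:lr-chol} correspond to \cref{rule:r-absl,rule:r-casel,rule:r-sell}.

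In the direct cases, such as \cref{rule:lr-absr}, when $\epp{M}{R}=((\lambda x.B_1)~B_2)~B_3$ arises from $M=((\lambda x:T.N_1)~N_2)~N_3$ with $R$ in the appropriate role sets, \cref{rule:r-absr} fires at the choreography and the projections coincide syntactically. For the rules carrying a $\roles(B'')=\emptyset$ side condition, I would check that $\roles(\epp{N'}{R})=\emptyset$ holds exactly when $R\notin\sroles(N')$, so that the side condition of the corresponding choreographic rule is satisfied. The offer/choice rewrites are handled by a common choreographic selection rule, whose projection naturally produces the offer and choice forms at different roles. When the rewrite happens inside a subterm of $\epp{M}{R}$ rather than at its root, the induction hypothesis applies to the appropriate sub-computation of $M$, and the result propagates through the structural clauses of projection.

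The main obstacle is the interaction between behaviour-level rewritings and projection artifacts that have no matching choreographic structure. Two sources of trouble arise. First, the projection of $\case{M_0}{x}{N}{x'}{N'}$ at a role $R\notin\roles(\type(M_0))$ is wrapped as $(\lambda x'':\botmt.\epp{N}{R}\merge\epp{N'}{R})~\epp{M_0}{R}$, so rules like \cref{rule:lr-absr} can fire on an outer $\lambda$-application that does not exist at the choreographic level; further, an application in $M$ may itself project at $R$ to a wrapped case when one of its operands contains a case. Second, the rule \cref{rule:lr-botm} collapses $\botm~\botm$ into $\botm$, an operation specific to projected code where both operands of some choreographic application project to $\botm$. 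In both situations, the plan is to take $M'=M$ and appeal to the equivalence $\equiv$ of \cref{def:equiv}, which identifies $P$ with $(\lambda x:\botmt.P)~\botm$; the bookkeeping needed to match $B$ to $\epp{M}{R}$ up to $\equiv$, carefully threading $\botm$s through the merge operator and the inductive hypothesis, will be the most technical part of the proof.
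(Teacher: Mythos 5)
Your overall strategy---case analysis on which rule of \cref{fig:rewriteP} fires, matching each to its counterpart in \cref{fig:rewrite}---is exactly the route the paper takes; its proof is a one-line ``case analysis on $\epp{M}{R}\rightsquigarrow B$'', whose only substantive hint is that $\epp{M}{R}$ cannot be $\botm~\botm$. You correctly isolate the two delicate spots (the wrapped-case projection artifact and \cref{rule:lr-botm}), but your proposed resolution for both---take $M'=M$ and appeal to $\equiv$---does not go through.

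First, for \cref{rule:lr-botm}: the relation $\equiv$ of \cref{def:equiv} is generated only by $P\equiv(\lambda x:\botmt.P)~\botm$, so it does \emph{not} identify $\botm~\botm$ with $\botm$; and the lemma demands a genuine step $M\rightsquigarrow M'$, which the witness $M'=M$ does not supply, since the choreographic rewriting relation is not reflexive. The paper's intended argument is that this case is vacuous: the projection of an application is defined to be $\botm$ outright whenever both operands project to $\botm$ (second clause of $\epp{M~N}{R}$ in \cref{fig:projection}), so a projected term never contains the redex $\botm~\botm$ and \cref{rule:lr-botm} simply cannot fire on $\epp{M}{R}$. Your proof should state and use this invariant rather than route through $\equiv$.

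Second, for the artifact where $\epp{M_1~P}{R}$ has the shape $((\lambda x'':\botmt.\epp{N}{R}\merge\epp{N'}{R})~\epp{M_0}{R})~\epp{P}{R}$ because $M_1$ is a $\ccase$ projected at a role outside $\roles(\type(M_0))$: here too $M'=M$ fails for the same reason, but the case is not stuck---the matching choreographic step is \cref{rule:r-caser} (resp.\ \cref{rule:r-casel}), after which the projection of $\case{M_0}{x}{N~P}{x'}{N'~P}$ at $R$ is again the wrapped form, and you must check that $\epp{N~P}{R}\merge\epp{N'~P}{R}$ agrees with $(\epp{N}{R}\merge\epp{N'}{R})~\epp{P}{R}$ up to $\equiv$ using the clause $B_1~B_2\merge B'_1~B'_2=(B_1\merge B'_1)~(B_2\merge B'_2)$ of \cref{def:merge}. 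With these two repairs the rest of your plan, including the $\sroles$ bookkeeping for the side conditions, is sound.
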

\begin{proof}
Follows from case analysis on $\epp{M}{R}\rightsquigarrow B$ keeping in mind that $\epp{M}{R}$ cannot be $\botm~\botm$.
\end{proof}

\begin{proof}[Proof of \cref{thm:NetToChor}]
If $\epp M{}\rightarrow^*_{\epp D{}}\mathcal{N}$ uses \cref{rule:n-str} then this follows from \cref{thm:squignet}.
Otherwise we prove this by structural induction on $M$.
\begin{itemize}
	\item Assume $M=V$. Then for any role $\role{R}$, $\llbracket M\rrbracket_R=L$, and therefore $\llbracket M\rrbracket\not\xrightarrow{\tau_{\mathbf{R}}}$. 
	\item Assume $M=N_1~N_2$. Then for any role $R\in \roles(\type(N_1))\cup (\roles(N_1)\cap\roles(N_2)$, $\llbracket M\rrbracket_R= \llbracket N_1\rrbracket_R~\llbracket N_2\rrbracket_R$, for any role $R'$ such that $\epp {N_1}{R'}=\epp{N_2}{R'}=\botm$, we have $\epp M{R'}=\botm$. For any other role $R''$ such that $\epp{N_1}{R''}=\botm$, $\epp M{R''}=\epp{N_2}{R''}$. For any other role $R'''$ such that $\epp{N_2}=\botm$, we get $\epp{M}{R'''}=\epp{N_1}{R'''}$. We then have 2 cases.
		\begin{itemize}
		\item Assume $N_2=V$. Then $\llbracket N_2\rrbracket_R=L$ by \cref{thm:Induction}, and for any $R'$ such that $R'\notin \roles(\type(N_2))\subseteq \roles(\type(N_1))$, by \cref{thm:ValUnit}, $\llbracket N_2\rrbracket_{R'}=\botm$ and therefore $\epp{M}{R'}=\epp{N_1}{R'}$, and we have 5 cases.
		\begin{itemize}
			\item Assume $N_1=\lambda x:T.N_3$. Then for any role $R\in \roles(\type(N_1))$, $\llbracket N_1\rrbracket_R=\lambda x:\epp{T}{R}.\llbracket N_3\rrbracket_R$. And for any role $R'\notin \roles(\type(N_1))$, $\llbracket N_1\rrbracket_R=\botm$. We have two cases, using either \cref{rule:n-absapp} or \cref{rule:n-inabs,rule:n-app1}. 
			
			If we use \cref{rule:n-absapp}, then there exists $R''$ such that $\mathbf{R}=R''$ and $R''\in \roles(\type(N_1))$. We then get $\mathcal{N}=\llbracket M \rrbracket\setminus \{R''\}\mid R''[\llbracket N_3\rrbracket_{R''}[x:=\llbracket N_2\rrbracket_{R''}]]$. We say that $M'=N_3[x:=N_2]$ and the result follows from using \cref{rule:n-absapp} in every role in $\roles(\type(N_1))$.
			
			If we use \cref{rule:n-inabs,rule:n-app1} then there exists $R''$ such that $\mathbf{R}=R''$ and $\epp{N_3}{R''}\xrightarrow{\mu} B$ and $(\mathcal{N}=\epp{M}{}\setminus \{R''\})\mid R''[\lambda x.B~\epp{N_2}{R''}]$. By induction, $N_3\rightarrow_D^* N_3'$ and $(\epp{N_3}{}\setminus \{R''\}\mid R''[B]\rightarrow_{\DP} \mathcal{N}''$ such that $\epp{N_3}{}\supseteq \mathcal{N}''$, and we define $M'\lambda x:T.N_3'~N_2$ and $\mathcal{N}'=(\mathcal{N}\setminus \roles(\type(N_1)))\mid \prod\limits_{R\in\roles(\type(N_3))} R[(\lambda x.\mathcal{N}''(R))~\epp{N_2}{R''}]$ and the result follows by using \cref{rule:c-inabs,rule:c-app1,rule:n-inabs,rule:n-app1}. 

			\item Assume $N_1=\com{S}{R}$. Then if $S\neq R$, $\llbracket M\rrbracket_S=\send{R}~\llbracket N_2\rrbracket_S$, $\llbracket M\rrbracket_R=\recv{R}~\botm$, and for $R'\notin \{S,R\}$, $\epp{N_1}{R'}=\botm=\llbracket M\rrbracket_{R'}$, and therefore $\mathbf{R}=S,R$, and if $S=R$ then $\llbracket N_1\rrbracket_R=\lambda x. x$.

			If $\mathbf{R}=S,R$ then $\mathcal{N}=\llbracket M\rrbracket\setminus \{S,R\}\mid S[\botm]\mid R[\llbracket N_2\rrbracket_S[S:=R]]$. Because $\llbracket N_2\rrbracket_R=\botm$ and $\llbracket N_2\rrbracket_S=V$, $N_2=V$. Therefore $M\xrightarrow{\mathbf{R}}_D V[S:=R]$ and the result follows.

			If $\mathbf{R}=R$ then $S=R$, $\mathcal{N}=\llbracket M\rrbracket \setminus \{R\} \mid R[\llbracket N_2\rrbracket_R]$ and the rest is similar to above.

			\item Assume $N_1=\fst$. Then $N_2=\Pair~V~V'$ and for any role $R\in \roles(\type(\Pair~V~V'))$, $\llbracket M \rrbracket_R=\fst~\Pair~\llbracket V \rrbracket_R~\llbracket V' \rrbracket_R$ and for any other role $R'\notin \roles(\type(\Pair~V~V')$, by \cref{thm:ValUnit} we have $\llbracket M \rrbracket_{R'}=\epp{N_1}{R'}=\botm$, and therefore $\epp{M}{R'}\not\rightarrow$.

			If $\mathbf{R}=R\in\roles(\type(\Pair~V~V'))$ then $\mathcal{N}= \llbracket M \rrbracket \setminus \{R\} \mid R[\llbracket V \rrbracket_R]$ and $M\xrightarrow{\mathbf{R}}_D V$. The result follows by use of \cref{rule:n-proj1,,thm:ValUnit}.

			\item Assume $N_1=\snd$. This case is similar to the previous.

			\item Otherwise, $N_1\neq V$ and either $\mathbf{R}=R$ or $\mathbf{R}=R,S$.

			If $\mathbf{R}=R$ then either $\llbracket N_1\rrbracket_R\xrightarrow{\tau} B$ and $R\in\roles(\type(N_1))$, $\mathcal{N}=\llbracket M\rrbracket\setminus \{R\} \mid R[B~\llbracket N_2\rrbracket_R]$. We therefore have $\llbracket N_1\rrbracket\xrightarrow{\tau_R} \llbracket N_1\rrbracket\setminus \{R\}\mid R[B]$, and by induction, $N_1\rightarrow^*_{D} N_1'$ such that $\llbracket N_1\rrbracket\setminus \{R\}\mid R[B]\rightarrow^*\mathcal{N}_1\sqsupseteq \llbracket N_1'\rrbracket$. Since all these transitions can be propagated past $N_2$ in the network and $\llbracket N_2\rrbracket_{R'}$ in any role $R'$ involved, we get the result for $M'=N_1'~N_2$.

			If $\mathbf{R}=R,S$ then the case is similar.
		\end{itemize}
		\item If $N_2\neq V$ then we have 2 cases.
		\begin{itemize}
		\item If $\mathbf{R}=R$ then either $\llbracket N_1\rrbracket_R\xrightarrow{\tau} B$ or $\llbracket N_2\rrbracket_R\xrightarrow{\tau} B$ and the case is similar to the previous.

		\item If $\mathbf{R}=S,R$ then there exists $L$ such that either $\llbracket N_1\rrbracket_S\xrightarrow{\send{R}~L} B_S$ or $\llbracket N_2\rrbracket_S\xrightarrow{\send{R}~L} B_S$ and $\llbracket N_1\rrbracket_R\xrightarrow{\recv{S}~L[S:=R]} B_R$ or $\llbracket N_2\rrbracket_R\xrightarrow{\recv{S}~L[S:=R]} B_R$.

		If $\llbracket N_1\rrbracket_S\xrightarrow{\send{R}~L} B_S$ then $\llbracket N_1\rrbracket_S\neq L'$ and therefore $\llbracket N_1\rrbracket_R\xrightarrow{\recv{S}~L[S:=R]} B_R$ and the case is similar to the previous. If $\llbracket N_2\rrbracket_S\xrightarrow{\send{R}~L} B_S$ then $\llbracket N_1\rrbracket_S= L'$, and therefore $\llbracket N_2\rrbracket_R\xrightarrow{\recv{S}~L[S:=R]} B_R$ and the case is similar to the previous.
		\end{itemize}
	\end{itemize}
	\item Assume $M=\case{N}{x}{N'}{x'}{N''}$. Then for any role $R\in \roles(\type(N))$, $\llbracket M\rrbracket_R=\case{\llbracket N\rrbracket_R}{x}{\llbracket N'\rrbracket_R}{x'}{\llbracket N''\rrbracket_R}$. And for any other role $R'\notin \roles(\type(N))$, $\llbracket M\rrbracket_{R'}=(\lambda x.\llbracket N'\rrbracket_{R'}\merge\llbracket N''\rrbracket_{R'})~\llbracket N\rrbracket_{R'}$. We have three cases.
	\begin{itemize}
		\item Assume $\mathbf{R}=R\in \roles(\type(N))$. Then we have three cases.
		\begin{itemize}
			\item Assume $N=\Inl~V$. Then $\llbracket N\rrbracket_R=\Inl~\llbracket V\rrbracket_R$ and $\mathcal{N}=\llbracket M\rrbracket\setminus \{R\} \mid R[\llbracket N'[x:= \llbracket V\rrbracket_R]\rrbracket_R]$. We define $M'=N'$ and since $\llbracket N'\rrbracket_{R'}\sqsupseteq \llbracket N'\rrbracket_{R'}\merge\llbracket N''\rrbracket_{R'}$ the result follows from using \cref{rule:n-absapp,rule:n-casel}.
			\item Assume $N=\Inr~V$. Then the case is similar to the previous.
			\item Otherwise, we use either \cref{rule:n-case} or \cref{rule:n-case2}.
			If we use \cref{rule:n-case}, we have a transition $\llbracket N \rrbracket_{R}\xrightarrow{\tau} B$ such that \[\mathcal{N}=\llbracket M\rrbracket\setminus \{R\} \mid R[\case{B}{x}{\llbracket N'\rrbracket_R}{x'}{\llbracket N''\rrbracket_R}]\] and the result follows from induction similar to the last application case.
			
			If we use \cref{rule:n-case2} then $\epp{N'}{R}\xrightarrow{\tau}_\DP B$ and $\epp{N''}{R}\xrightarrow{\tau}_\DP B$. If $\epp{N'}{R}\xrightarrow{\tau}_\DP B$ then by induction, $N'\rightarrow^*_{D} N'''$ and $\epp{N'}{}\setminus\{R\} \mid R[B]\rightarrow^*_{\DP} \mathcal{N}''$ such that $\mathcal{N}''\supseteq \epp{N'''}{}$ and $N''\rightarrow^*_{D} N''''$ and $\epp{N''}{}\setminus\{R\} \mid R[B]\rightarrow^*_{\DP} \mathcal{N}'''$ such that $\mathcal{N}'''\supseteq \epp{N''''}{}$. Since $N'$ and $N''$ are mergeable on other roles, the result follows from using \cref{rule:c-incase}. 
		\end{itemize}
		\item Assume $\mathbf{R}=R\notin \roles(\type(N))$. Then we have three cases.
		\begin{itemize}
			\item Assume $N=\Inl~V$. Then $\llbracket N\rrbracket_{R}=\botm$ and $\mathcal{N}=\llbracket M\rrbracket\setminus \{R\} \mid R[\llbracket N'\rrbracket_{R}\merge\llbracket N''\rrbracket_{R}]$. We define $M'=N'$ and the result follows.
			\item Assume $N=\Inr~V$. Then the case is similar to the previous.
			\item Otherwise, $\llbracket N \rrbracket_R\neq L$ and we therefore have $\llbracket N \rrbracket_{R}\xrightarrow{\tau} B$ and $\mathcal{N}=\llbracket M\rrbracket\setminus \{R\} \mid R[(\lambda x.\llbracket N'\rrbracket_{R}\merge\llbracket N''\rrbracket_{R})~B]$. We therefore have $\llbracket N\rrbracket\xrightarrow{\tau_R} \llbracket N\rrbracket\setminus \{R\}\mid R[B]$, and by induction, $N\rightarrow_D N'''$ such that $\llbracket N\rrbracket\setminus \{R\}\mid R[B]\rightarrow^* \mathcal{N}'''$ for $\mathcal{N}'''\sqsupseteq\llbracket N'''\rrbracket$. Since all these transitions can be propagated past $N_2$ in the network and the conditional or $(\lambda x.\llbracket N'\rrbracket_{R''}\merge\llbracket N''\rrbracket_{R''})$ in any other role $R'$ involved, we get the result for $M'=\case{N'''}{x}{N'}{x'}{N''}$.
		\end{itemize}
		\item Assume $\mathbf{R}=S,R$. Then the logic is similar to the third subcases of the previous two cases.
	\end{itemize}
	\item Assume $M=\select{S}{R}~\ell~N$. This is similar to the $N_1=\com{S}{R}$ case above.
	\item Assume $M=f( R_1,\dots, R_n)$. Then $\llbracket M\rrbracket=\prod_{i=1}^n R_i[f_i(R_1,\dots, R_{i-1},R_{i+1},\dots, R_n)] \mid \prod_{R\notin \{ R_1,\dots, R_n\}} R[\botm]$. We therefore have some role $\role{R}$ such that $\mathbf{R}=R$ and $\mathcal{N}=(\llbracket M\rrbracket\setminus R_i) \mid R_i[\llbracket D \rrbracket(f_i(\vec{R'}))[{\vec{R'}}:={ R_1,\dots, R_{i-1},R_{i+1},\dots, R_n}]]$. We then define the required choreography $M'=D({f( R'_1,\dots,R'_n)})[{ R'_1,\dots,R'_n}:={ R_1,\dots, R_n}]$ and network \[\mathcal{N}'=\llbracket M'\rrbracket=\prod\limits_{i=1}^n R_i[\llbracket D\rrbracket(f_i(\vec{R'}))[{\vec{R'}}:={ R_1,\dots, R_{i-1},R_{i+1},\dots, R_n}]] \mid \prod\limits_{R\notin  R_1,\dots, R_n} R[\botm]\] and the result follows.\qedhere
\end{itemize}
\end{proof}
\end{document}